\documentclass[11pt]{article}
\pdfoutput=1 
\linespread{1.06}
\usepackage[sc]{mathpazo}

\usepackage[margin=1in]{geometry}
\usepackage[english]{babel}
\usepackage[utf8]{inputenc}
\usepackage[compact]{titlesec}
\usepackage{cmap}
\usepackage[T1]{fontenc}
\usepackage{bm}
\pagestyle{plain}

\usepackage{booktabs}
\usepackage{mathtools}
\usepackage{amsfonts}
\usepackage{amsmath}
\usepackage{amssymb}
\usepackage{amsthm}
\usepackage{float}
\usepackage{graphics}
\usepackage{natbib}
\usepackage[backref=page]{hyperref}
\usepackage[svgnames]{xcolor}
\hypersetup{colorlinks={true},urlcolor={blue},linkcolor={DarkBlue},citecolor=[named]{DarkGreen},linktoc=all}
\renewcommand*{\backref}[1]{}
\renewcommand*{\backrefalt}[4]{%
    \ifcase #1 (Not cited.)%
    \or        (Cited on page~#2)%
    \else      (Cited on pages~#2)%
    \fi}

\usepackage{microtype}
\usepackage[capitalise,nameinlink,noabbrev]{cleveref}
\usepackage{doi}


\newcommand{\BibTeX}{\rm B\kern-.05em{\sc i\kern-.025em b}\kern-.08em\TeX}

\usepackage[labelfont={normalfont,bf},textfont=it]{caption}
\usepackage{subcaption}

\usepackage{nicefrac}
\usepackage{pifont}

\usepackage{apxproof}
\usepackage{array,multirow,graphicx,bigdelim}

\input{insbox}
\makeatletter
    \@InsertBoxMargin = 4pt
\makeatother
\usepackage{tikz}
\usepackage{tikz-dimline}
\usepackage{tikz-cd}
\usetikzlibrary{fit,calc,math,shapes,shapes.multipart,decorations.text,arrows,decorations.markings,decorations.pathmorphing,shapes.geometric,positioning,decorations.pathreplacing}
\tikzset{snake it/.style={decorate, decoration=snake}}



\colorlet{mygray}{gray!40}

\makeatletter
\let\oldnl\nl
\newcommand{\nonl}{\renewcommand{\nl}{\let\nl\oldnl}}
\makeatother

\usepackage{thmtools,thm-restate}
\newtheorem{definition}{Definition}
\newtheorem{lemma}{Lemma}

\Crefname{claim}{Claim}{Claims}
\Crefname{claim}{Claim}{Claims}
\Crefname{corollary}{Corollary}{Corollaries}
\Crefname{definition}{Definition}{Definitions}
\Crefname{example}{Example}{Examples}
\Crefname{lemma}{Lemma}{Lemmas}
\Crefname{property}{Property}{Properties}
\Crefname{proposition}{Proposition}{Propositions}
\Crefname{remark}{Remark}{Remarks}
\Crefname{theorem}{Theorem}{Theorems}

\allowdisplaybreaks
\newcommand{\APXH}{\textrm{\textup{APX-hard}}}
\newcommand{\NSW}{\textrm{\textup{NSW}}}
\newcommand{\eps}{\varepsilon}
\renewcommand{\hat}{\widehat}
\newcommand{\F}{\mathcal{F}}
\newcommand{\I}{\mathcal{I}}
\newcommand{\N}{\mathbb{N}}
\newcommand{\PTAS}{{\textrm{\textup{PTAS}}}}
\newcommand{\QPTAS}{{\textrm{\textup{QPTAS}}}}

\newcommand{\RainbowPerfectMatching}{{\texttt{\textup{RAINBOW PERFECT MATCHING}}}}
\newcommand{\WeightedMatching}{{\texttt{\textup{WEIGHTED PERFECT  MATCHING}}}}
\newcommand{\ExactCoverByThreeSets}{{\texttt{\textup{EXACT COVER BY THREE SETS (X3C)}}}}
\newcommand{\ThreeDPerfectMatching}{{\texttt{\textup{3-D PERFECT MATCHING}}}}
\newcommand{\ThreeSAT}{{\texttt{\textup{3-SAT}}}}
\newcommand{\MaxLinEq}{{\texttt{\textup{MAXIMUM LINEAR EQUATION}}}}
\newcommand{\Partition}{{\texttt{\textup{PARTITION}}}}
\newcommand{\Poly}{\textrm{\textup{Poly time}}}

\newcommand{\NPC}{\textrm{\textup{NP-complete}}}

\newcommand{\NPH}{\textrm{\textup{NP-hard}}}
\newcommand{\V}{\mathcal{V}}
\newcommand{\W}{\mathcal{W}}
\newcommand{\opt}{{\textup{opt}}}
\newcommand{\fpt}{\textrm{\textup{FPT}}\xspace}
\newcommand{\OO}{\ensuremath{\mathcal{O}}\xspace}
\newcommand{\capacity}{{\mathtt{cap}}\xspace}

\newcommand{\repeatcaption}[2]{%
  \renewcommand{\thefigure}{\ref{#1}}%
  \captionsetup{list=no}%
  \caption{#2 (Repeated from page \pageref{#1}.)}%
  \addtocounter{figure}{-1}
}

\newtheorem{reduction rule}{Reduction Rule}
\usepackage{colortbl}
\colorlet{myred}{red!25}
\colorlet{myblue}{blue!25}
\colorlet{mygreen}{green!25}



\usepackage[linesnumbered,lined,boxed,ruled,vlined]{algorithm2e}

\SetKwInOut{Parameters}{Parameters}
\SetKwComment{Comment}{$\triangleright$\ }{}
\SetAlFnt{\small}
\SetAlCapFnt{\small}
\SetAlCapNameFnt{\small}
\SetAlCapHSkip{0pt}
\IncMargin{-\parindent}



\title{Maximizing Nash Social Welfare under \\Two-Sided Preferences}
\author{
	\begin{tabular}{m{0.2\linewidth}m{0.3\linewidth}m{0.3\linewidth}m{0.2\linewidth}}
		\multicolumn{2}{c}{\textbf{Pallavi Jain}} & \multicolumn{2}{c}{\textbf{Rohit Vaish}}\\
		\multicolumn{2}{c}{\small{IIT Jodhpur}} & \multicolumn{2}{c}{\small{IIT Delhi}}\\
		\multicolumn{2}{c}{\href{mailto:pallavi@iitj.ac.in}{\small{\texttt{pallavi@iitj.ac.in}}}} & \multicolumn{2}{c}{\href{mailto:rvaish@iitd.ac.in}{\small{\texttt{rvaish@iitd.ac.in}}}}\\
	\end{tabular}
}

\date{}

\begin{document}

\maketitle 


\begin{abstract}
The maximum Nash social welfare (\NSW{})---which maximizes the geometric mean of agents' utilities---is a fundamental solution concept with remarkable fairness and efficiency guarantees. The computational aspects of \NSW{} have been extensively studied for \emph{one-sided} preferences where a set of agents have preferences over a set of resources. Our work deviates from this trend and studies \NSW{} maximization for \emph{two-sided} preferences, wherein a set of workers and firms, each having a cardinal valuation function, are matched with each other. We provide a systematic study of the computational complexity of maximizing \NSW{} for many-to-one matchings under two-sided preferences. Our main negative result is that maximizing \NSW{} is NP-hard even in a highly restricted setting where each firm has capacity $2$, all valuations are in the range $\{0,1,2\}$, and each agent positively values at most three other agents. In search of positive results, we develop approximation algorithms as well as parameterized algorithms in terms of natural parameters such as the number of workers, the number of firms, and the firms' capacities. We also provide algorithms for restricted domains such as symmetric binary valuations and bounded degree instances.
\end{abstract}

\section{Introduction}
\label{sec:Introduction}

The problem of matching two sets of agents, with each side having preferences over the other, has been extensively studied in economics, computer science, and artificial intelligence~\citep{RS92two,M13algorithmics,BCE+16handbook}. Such \emph{two-sided} matching problems have found several notable real-world applications such as in labor markets~\citep{RP99redesign}, school choice~\citep{AS03school}, organ exchanges~\citep{R04kidney}, 
and online dating platforms~\citep{HHA10matching}.

Despite their extensive practical applicability, commonly-used algorithms for these problems often give rise to concerns about \emph{unfairness}. A well-known example is the deferred-acceptance algorithm, which is known to strongly favor one side at the expense of the other~\citep{GS62college,MW71stable}. Similarly, the matching algorithms used by ridesharing platforms have been found to contribute to the wage gap between men and women~\citep{CDH+21gender}. 

The area of \emph{fair division} provides a formal mathematical framework for rigorously analyzing fairness concepts~\citep{BT96fair,M04fair}. The literature on fair division has focused on \emph{one-sided} preferences, wherein a set of agents have preferences over the resources. A prominent measure of fairness in this context is \emph{Nash social welfare}, defined as the geometric mean of agents' utilities~\citep{N50bargaining,KN79nash}. 

Nash welfare provides a ``sweet spot'' between the purely welfarist utilitarian objective and the purely egalitarian Rawlsian objective. It strikes a balance between the often-conflicting goals of fairness and economic efficiency, and enjoys a strong axiomatic support~\citep{M04fair,CKM+19unreasonable}. In recent years, the computational aspects of Nash welfare 
have gained considerable attention~\citep{CG18approximating,BKV18finding,ABF+21maximum,ACH+22maximizing}.


Somewhat surprisingly, Nash welfare has not been studied for \emph{two-sided} preferences. Our work aims to address this gap by proposing a systematic examination of the computational complexity of maximizing Nash welfare in the two-sided matching problem.

\subsection{Summary of Contributions}
\label{subsec:Our_Contributions}

Our model consists of two disjoint sets of agents: \emph{workers} and \emph{firms}. Each worker has a nonnegative valuation for every firm and can be matched with at most one firm. Each firm can be matched with multiple workers (up to its capacity) and has \emph{additive} valuations over the workers. The goal is to find a \emph{Nash optimal} many-to-one matching, i.e., maximize the geometric mean of workers' and firms' utilities.
\begin{figure}[t]
    \centering
    \begin{subfigure}[b]{0.3\textwidth}
         \centering
         \begin{tikzpicture}
                \tikzset{firm/.style = {shape=rectangle,draw,inner sep=2.5pt}}
                \tikzset{worker/.style = {shape=circle,draw,inner sep=1.5pt}}
                \tikzset{edge/.style = {solid}}
                \tikzset{thickedge/.style = {solid, ultra thick}}
                %
                \node[firm] (1) at (0,0) {};
                \node (11) at (-0.5,0) {\color{black}{$f_2$}};
                \node[firm] (2) at (0,1) {};
                \node (22) at (-0.5,1) {\color{black}{$f_1$}};
                \node[worker] (3) at (2,0) {};
                \node (33) at (2.5,0) {\color{black}{$w_2$}};
                \node[worker] (4) at (2,1) {};
                \node (44) at (2.5,1) {\color{black}{$w_1$}};
                \draw[thickedge] (1) to node [near start,fill=white,inner sep=0pt] (123) {\scriptsize{$3$}} (3);
                \draw[edge] (1) to node [near start,fill=white,inner sep=0pt] (124) {\scriptsize{$2$}} (4);
                \draw[edge] (2) to node [near start,fill=white,inner sep=0pt] (223) {\scriptsize{$2$}} (3);
                \draw[thickedge] (2) to node [near start,fill=white,inner sep=0pt] (224) {\scriptsize{$3$}} (4);
    \end{tikzpicture}
     \end{subfigure}
     \begin{subfigure}[b]{0.3\textwidth}
         \centering
         \begin{tikzpicture}
                \tikzset{firm/.style = {shape=rectangle,draw,inner sep=2.5pt}}
                \tikzset{worker/.style = {shape=circle,draw,inner sep=1.5pt}}
                \tikzset{edge/.style = {solid}}
                \tikzset{thickedge/.style = {solid, ultra thick}}
                %
                \node[firm] (1) at (0,0) {};
                \node (11) at (-0.5,0) {\color{black}{$f_2$}};
                \node[firm] (2) at (0,1) {};
                \node (22) at (-0.5,1) {\color{black}{$f_1$}};
                \node[worker] (3) at (2,0) {};
                \node (33) at (2.5,0) {\color{black}{$w_2$}};
                \node[worker] (4) at (2,1) {};
                \node (44) at (2.5,1) {\color{black}{$w_1$}};
                \draw[edge] (1) to node [near start,fill=white,inner sep=0pt] (123) {\scriptsize{$3$}} node [near end,fill=white,inner sep=0pt] (321) {\scriptsize{$0$}} (3);
                \draw[thickedge] (1) to node [near start,fill=white,inner sep=0pt] (124) {\scriptsize{$2$}} node [near end,fill=white,inner sep=0pt] (421) {\scriptsize{$2$}} (4);
                \draw[thickedge] (2) to node [near start,fill=white,inner sep=0pt] (223) {\scriptsize{$2$}} node [near end,fill=white,inner sep=0pt] (223) {\scriptsize{$2$}} (3);
                \draw[edge] (2) to node [near start,fill=white,inner sep=0pt] (224) {\scriptsize{$3$}} node [near end,fill=white,inner sep=0pt] (424) {\scriptsize{$0$}} (4);
    \end{tikzpicture}
     \end{subfigure}
    \caption{One-sided (left) and two-sided (right) instances. For each edge, the number close to a vertex denotes how much that agent values the other agent. The Nash welfare maximizing matchings are highlighted via thick edges.}
    \label{fig:Motivating_Example}
\end{figure}
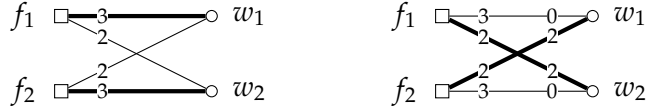

Before discussing our results, let us illustrate a key difference between one-sided and two-sided preferences through the example in \Cref{fig:Motivating_Example}. There are two workers $w_1,w_2$ and two firms $f_1,f_2$, each with capacity 1. The assignment $\{(f_1,w_1),(f_2,w_2)\}$ maximizes Nash welfare in the one-sided instance (ref. left subfigure). But when the workers' preferences are taken into account (ref. right subfigure), this assignment turns out to be \emph{arbitrarily} suboptimal, giving zero values to all workers and having zero Nash welfare. Thus, the structure of Nash optimal solutions for two-sided instances can drastically differ from their one-sided counterparts, which makes the two-sided problem challenging.

\Cref{tab:Results} summarizes our results. Some of the technical highlights of our work are discussed below.

\begin{table}[t]
\centering
\scriptsize
\begin{tabular}{|c|c|c|c|c|c|c|}
 \hline
 \textbf{\# firms} & \textbf{Capacities} & \textbf{Valuations} & \textbf{Exact/Approx.} & \textbf{Complexity} & \textbf{Technique} & \textbf{Ref.}\\
 \hline
 \rowcolor{myred}
 --- & 2 & Ternary $\{0,1,2\}$ & Exact & \NPH{} & Rainbow Matching & Thm.~\ref{thm:Nash-NPhard-Capacity-Two}\\
 \hline
 \rowcolor{myred}
 2 & Equal & Identical, symmetric & Exact & \NPH{} & Partition & Prop.~\ref{prop:Nash-NPhard-Symmetric-Vals}\\
 \hline
 %
 \rowcolor{myblue}
 --- & --- & Positive & $\nicefrac{1}{\sqrt{\opt}}$ & \Poly{} & Submodularity & Thm.~\ref{thm:Nash-Approx-OPT-Dependent}\\
 \hline
 \rowcolor{myblue}
 --- & Constant & Positive, constant & Constant approx. & \Poly{} & Submodularity & Cor.~\ref{cor:Nash-Approx-Constant-Vals}\\
 \hline
 \rowcolor{myblue}
 Constant & --- & $\OO(\textup{poly}(m,n))$ & $\nicefrac{1}{(1+\eps)}$-approx. & Quasipoly time & Bucketing & Thm.~\ref{thm:QPTAS-Constant-Firms}\\
 \hline
 \rowcolor{mygreen}
 Constant & --- & Constant & Exact & \Poly{} & Bucketing & Cor.~\ref{cor:FPT-Exact-Constant-Firms}\\
 \hline
 \rowcolor{mygreen}
 --- & --- & --- & Exact & \fpt{}, $\OO^\star(3^m)$ & Dynamic prog. & Thm.~\ref{thm:3m}\\
 \hline
 \rowcolor{mygreen}
 --- & Constant & --- & Exact & \fpt{}, $\OO^\star(2^m)$ & Dynamic prog. & Thm.~\ref{thm:param-m-const-capacity}\\
 \hline
 \rowcolor{myblue}
 --- & --- & --- & $\scriptstyle (1+\eps)^{\frac{-(n+1)}{(m+n)}}$-approx. & \fpt{}, $\OO^\star(\nicefrac{1}{\eps^4} \cdot 2^m)$ & Poly. multiplication & Thm.~\ref{thm:param-m-fptas}\\
 \hline
 \rowcolor{mygreen}
 --- & --- & Symmetric binary & Exact & \Poly{} & Greedy algorithm & Thm.~\ref{thm:symm-Binary-Vals}\\
 \hline
 \rowcolor{mygreen}
 --- & --- & --- & Exact & \Poly{} for deg $\leq 2$ & Paths and cycles & Thm.~\ref{thm:Combined-Restricted-Domains}\\
 \hline
 \rowcolor{mygreen}
 --- & Exactly 2 & --- & Exact & \Poly{} for deg $\leq 3$ & Reduction rules & Thm.~\ref{thm:Combined-Restricted-Domains}\\
 \hline
\end{tabular}
\vspace{0.1in}
\caption{Summary of our results on maximizing Nash welfare with $n$ firms and $m$ workers. Each row lists a result under the assumptions specified by the individual columns. A ``---'' means that the parameter for that column can be arbitrary. The red colored rows contain hardness results, blue rows contain approximation algorithms, and green rows contain exact algorithms.}
\label{tab:Results}
\end{table}

\paragraph{Hardness Results.} In \Cref{sec:Hardness_Results}, we show that finding a Nash optimal matching is \NPH{} even when each firm has capacity $2$ and the valuations are \emph{ternary} in the range~$\{0,1,2\}$~(\Cref{thm:Nash-NPhard-Capacity-Two}). This hardness result is ``tight'' in the sense that further restricting either of these assumptions---firms having unit capacities or valuations being \emph{symmetric} and \emph{binary} $\{0,1\}$---allows efficient algorithms~(\Cref{prop:One-One,thm:symm-Binary-Vals}). 
%
%

\paragraph{Approximation Algorithms.} In \Cref{sec:Approximation_Algorithms}, we provide two approximation algorithms. The first one is a $\frac{1}{\sqrt{\opt}}$-approximation algorithm, where $\opt$ is the optimal Nash welfare~(\Cref{thm:Nash-Approx-OPT-Dependent}). This result assumes that the valuations are positive integers, and uses a reduction to the fair division problem under \emph{submodular} valuations and \emph{matroid} constraints. 
%
%
The second algorithm uses a discretization technique to give a quasipolynomial-time approximation scheme (\QPTAS{}) for a constant number of firms and polynomially-bounded valuations~(\Cref{thm:QPTAS-Constant-Firms}). Note that the problem is \NPH{} even for 
\emph{two} firms~(\Cref{prop:Nash-NPhard-Symmetric-Vals}). Whether this hardness implication can be achieved for polynomially-bounded valuations remains unresolved.
    
\paragraph{Parameterized Algorithms.} In \Cref{sec:Parameterized Algorithms}, we provide two parameterized algorithms. The first algorithm, based on dynamic programming, is fixed-parameter tractable~(\fpt) in the number of workers $m$ and has a running time of $\OO^\star(3^m)$~(\Cref{thm:3m}); here, the notation $\OO^\star(\cdot)$ suppresses multiplicative polynomial terms. 
The second algorithm is an \fpt{} approximation scheme with a faster running time of $\OO^\star(2^m)$~(\Cref{thm:param-m-fptas}). This algorithm uses polynomial multiplication; to our knowledge, this is the first use of this technique in the context of Nash welfare.

\paragraph{Restricted Domains.} Finally, in \Cref{sec:Restricted_Domains}, we develop exact, polynomial-time algorithms for restricted settings such as symmetric binary valuations~(\Cref{thm:symm-Binary-Vals}) and bounded degree instances~(\Cref{thm:Combined-Restricted-Domains}). Our algorithm for symmetric binary valuations uses the greedy algorithm of \citet{BKV18greedy} from fair division for maximizing Nash welfare under binary valuations.

\subsection{Related Work}
\label{subsec:Related_Work}

We will now briefly survey the relevant literature on Nash welfare in fair division. A detailed discussion of the related work can be found in the appendix.

For fair division of \emph{divisible} resources, the well-known Eisenberg-Gale convex program is known to efficiently compute a fractional allocation that maximizes Nash welfare~\citep{EG59consensus}. By contrast, for \emph{indivisible} resources, the problem becomes \APXH{} under additive valuations~\citep{NNR+14computational,L17apx}. Note that the fair division problem for indivisible resources is a special case of our model when every worker (i.e., the `items') values all firms (i.e., the `agents') at $1$, and the firms have \emph{unrestricted} capacities. For the case of restricted capacities, which is the focus of our study, the hardness constructions from fair division do not automatically carry over to our setting~(we discuss some exceptions to this remark in \Cref{sec:Hardness_Results}). 

On the algorithmic front, for additive valuations, a fully polynomial-time approximation scheme (FPTAS) is known for any fixed number of agents~\citep{NNR+14computational,GHM22tractable}, and constant-factor approximation algorithms are known for a general number of agents~\citep{CG18approximating,CDG+17convex,BKV18finding}. Several works have studied approximation algorithms for more general classes of valuations such as \emph{budget-additive}~\citep{GHM18approximating}, \emph{separable piecewise-linear concave}~\citep{AMG+18nash}, \emph{submodular}~\citep{GKK20approximating,GHL+22approximating}, \emph{XOS}~\citep{BKK+21sublinear}, 
and \emph{subadditive}~\citep{BBK+20tight,CGM21fair}. For more details on fair division with indivisible resources, we refer the reader to the survey by~\citet{AAB+23fair}.

\section{Preliminaries}
\label{sec:Preliminaries}

For any positive integer $r$, let $[r] \coloneqq \{1,2,\dots,r\}$.

\paragraph{Problem instance.} An instance of the \emph{two-sided matching problem} is given by a tuple $\langle W,F,C, \V \rangle$, where $W = \{w_1,\dots,w_m\}$ is the set of $m$ \emph{workers}, $F = \{f_1,\dots,f_n\}$ is the set of $n$ \emph{firms}, $C = \{c_1,\dots,c_n\}$ is the set of \emph{capacities} of the firms, and $\V = (v_{w_1},\dots,v_{w_m},v_{f_1},\dots,v_{f_n})$ is the \emph{valuation profile} consisting of the valuation function of each worker and firm. For each worker $w \in W$, its valuation function $v_w : F \cup \{\emptyset\} \rightarrow \N \cup \{0\}$ specifies its numerical value for each firm, where $ v_w(\emptyset) \coloneqq 0$. For each firm $f \in F$, its valuation function $v_f : 2^W \rightarrow \N \cup \{0\}$ specifies its numerical value for each subset of workers, where $ v_f(\emptyset) \coloneqq 0$.

\paragraph{Many-to-one matching.} A \emph{many-to-one} matching $\mu: W \times F \rightarrow \{0,1\}$ is a function that assigns each worker-firm pair a weight of either $0$ or $1$ such that:
\begin{itemize}
    \item for every worker $w \in W$, $\sum_{f \in F} \mu(w,f) \leq 1$, i.e., each worker is matched with at most one firm, and
    \item for every firm $f \in F$, $\sum_{w \in W} \mu(w,f) \leq c_f$, i.e., each firm $f$ is matched with at most $c_f$ workers.
\end{itemize}

We will write $\mu(w)$ to denote the firm that worker $w$ is matched with under $\mu$, i.e, $\mu(w) \coloneqq \{f \in F : \mu(w,f)=1\}$. Similarly, we will write $\mu(f)$ to denote the set of workers that are matched with firm $f$ under $\mu$, i.e., $\mu(f) \coloneqq \{w \in W : \mu(w,f) = 1\}$. Thus, 
$|\mu(w)| \leq 1$ and $|\mu(f)| \leq c_f$.

For the special case when $c_f = 1$ for all firms $f \in F$, we obtain the \emph{one-to-one} matching problem.

For simplicity, we will use the term \emph{matching} to denote a `many-to-one matching', and will explicitly use the qualifiers `one-to-one' and `many-to-one' when the distinction between the two is necessary. Further, the term \emph{agent} will refer to a worker or a firm, i.e., an entity in the set $W \cup F$.

\paragraph{Utilities.} Given a matching $\mu$, the \emph{utility} of a worker $w$ under $\mu$ is its value for the firm that it is matched with, i.e., $u_w(\mu) \coloneqq \sum_{f \in F} v_{w,f} \cdot \mu(w,f)$. Similarly, the utility of a firm $f$ under $\mu$ is the sum of its values for the workers that it is matched with, i.e., $u_f(\mu) \coloneqq \sum_{w \in W} v_{f,w} \cdot \mu(w,f)$. In other words, the firms' utilities are assumed to be \emph{additive}.

\paragraph{Welfare measures.} We will now define two welfare measures that can be associated with a matching $\mu$. 
\begin{itemize}
    \item \emph{Utilitarian social welfare} is the sum of the utilities of the agents under $\mu$, i.e., $\W^\texttt{\textup{util}}(\mu) \coloneqq \sum_{i \in W \cup F} u_i(\mu)$.
    \item \emph{Nash social welfare} is the geometric mean of the utilities of the agents under $\mu$, i.e., $$\W^\texttt{\textup{Nash}}(\mu) \coloneqq \left( \prod_{i \in W \cup F} \, u_i(\mu) \right)^{\nicefrac{1}{n+m}}.$$ We will use the term \emph{Nash product} to denote the product of agents' utilities.
\end{itemize}

For any $\alpha \in [0,1]$ and any welfare measure $\W$, a matching $\mu$ is said to be \emph{$\alpha$-$\W$ optimal} if its welfare is at least $\alpha$ times the maximum welfare achieved by any matching for the given instance. When $\alpha=1$, we use the term \emph{$\W$ optimal}, e.g., utilitarian optimal or Nash optimal. 

\paragraph{The case of zero Nash welfare.} Since we allow the agents to have zero valuations, it is possible that every matching for a given instance has zero Nash welfare. In the appendix, we give a polynomial-time algorithm for identifying such instances. This algorithm uses a natural linear program for many-to-one matchings along with the rounding technique of~\citet{BCK+13designing} and~\citet{AFS+23best}.

\begin{restatable}{proposition}{ZeroNash}
There is a polynomial-time algorithm that, given any two-sided matching instance, determines whether there exists a matching with nonzero Nash welfare.
\label{prop:Zero_Nash}
\end{restatable}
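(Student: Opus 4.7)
The plan is to reduce the existence of a nonzero Nash welfare matching to a polynomial-time solvable bipartite degree-constrained subgraph problem. First, I would characterize when a matching $\mu$ has positive Nash welfare: since the Nash product vanishes as soon as a single agent has zero utility, $\mu$ must satisfy (a) every worker $w$ is matched to some firm $f$ with $v_w(f) > 0$, and (b) for every firm $f$, the set $\mu(f)$ contains at least one worker $w$ with $v_f(w) > 0$ (this uses additivity of firm valuations and the fact that firm values are nonnegative integers).

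Next I would encode (a) and (b) as a linear program. Let $E^+ \coloneqq \{(w,f) : v_w(f) > 0\}$ and $E^* \coloneqq \{(w,f) \in E^+ : v_f(w) > 0\}$, and introduce a variable $x_{w,f} \in [0,1]$ for every $(w,f) \in E^+$ subject to
\begin{align*}
\sum_{f : (w,f) \in E^+} x_{w,f} & = 1 \quad \text{ for every worker } w,\\
\sum_{w : (w,f) \in E^+} x_{w,f} & \leq c_f \quad \text{ for every firm } f,\\
\sum_{w : (w,f) \in E^*} x_{w,f} & \geq 1 \quad \text{ for every firm } f.
\end{align*}
If this LP is infeasible, no matching with nonzero Nash welfare exists and the algorithm outputs \textsc{No}. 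Otherwise, we would round the fractional solution to an integral matching using the dependent-rounding machinery of \citet{BCK+13designing} and \citet{AFS+23best}, whose guarantees are tailored exactly to bipartite assignment LPs: they preserve the equality constraints on workers, the upper-bound capacities on firms, and, crucially, lower-bound coverage constraints on distinguished subsets of edges.

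The main obstacle is the non-standard lower-bound requirement, which does not act on the full neighborhood of a firm but only on its $E^*$-edges. I would handle this by splitting each firm $f$ into two ``slots'': a \emph{representative slot} of capacity exactly $1$ that can only receive flow from $E^*$-edges, and a \emph{general slot} of capacity $c_f - 1$ that can receive any $E^+$-edge. Together with the worker-side equality constraints, the resulting constraint matrix is the incidence structure of a bipartite graph with unit lower bounds and integer upper bounds, whose feasibility polytope is integral; this lets either a standard lower-bounded bipartite flow algorithm or the cited dependent-rounding technique produce an integral matching in polynomial time. Returning \textsc{Yes} whenever this procedure succeeds (and \textsc{No} otherwise) yields the claimed characterization.
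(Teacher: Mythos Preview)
Your proposal is correct and takes a somewhat different route from the paper's proof. The paper keeps the LP over all worker--firm pairs (after binarizing the valuations) and observes directly that its constraint sets form a \emph{bihierarchy}: the worker-side constraints (matched at most once, utility at least $1$) are laminar among themselves, and so are the firm-side constraints (capacity at most $c_f$, utility at least $1$). It then invokes the \citet{BCK+13designing} rounding theorem together with the per-agent utility guarantee of \citet{AFS+23best} to conclude that any feasible fractional solution can be decomposed into integral matchings in each of which every agent retains positive utility. In particular, the lower-bound constraint on $\{w : v_{f,w}>0\}$ that you flag as the ``main obstacle'' is already nested inside the full column for $f$, so it fits the bihierarchy framework without modification.

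Your slot-splitting reduction sidesteps that machinery entirely: after replacing each firm by a representative slot (degree exactly $1$, only $E^*$-edges) and a general slot (degree at most $c_f-1$, all $E^+$-edges), the constraints are those of a bipartite degree-constrained subgraph, whose polytope is integral by total unimodularity, so a standard lower-bounded flow feasibility check suffices. This is more elementary and self-contained; it needs neither the bihierarchy rounding nor the utility-guarantee proposition. The paper's approach, on the other hand, plugs directly into an existing framework that also yields quantitative utility bounds, which is handy for extensions even though only positivity is needed here.
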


The `zeroness' of Nash welfare turns out to be important in analyzing its axiomatic properties in the fair division literature. Indeed, for an instance where all allocations have zero Nash welfare, a Nash optimal allocation is defined in terms of the largest set of agents that can have positive utility~\citep{CKM+19unreasonable}. This refinement is necessary for demonstrating the approximate envy-freeness property of a Nash optimal allocation. Since our focus in this work is on computational---rather than axiomatic---properties, we will assume that instances where the optimal Nash welfare is zero are discarded. Thus, in the forthcoming sections, we will assume that any given instance admits some matching with nonzero Nash welfare. Equivalently, there is some matching in which all agents have positive utility. Note that this condition enforces that the number of workers is at most the total capacity of the firms, i.e., $m \leq \sum_{f \in F} c_f$.

\paragraph{Fair division as a special case.}
An important special case of our two-sided model is the problem of fair division with indivisible items. An instance of this problem consists of a set of agents with numerical valuations over a set of indivisible items. The goal is to partition the items among the agents subject to fairness constraints, such as finding an allocation that maximizes Nash welfare. When each worker values every firm at $1$ and the firms have unrestricted capacities, we obtain fair division as a special case of our problem.

\paragraph{Relevant parameters.}

In addition to natural parameters such as the number of workers, number of firms, and firms' capacities, we will use other structural parameters in our analysis. Consider the bipartite graph (similar to the one shown in \Cref{fig:Motivating_Example}) associated with any matching instance. Suppose all ``0---0'' edges, wherein a worker and firm value each other at 0, are removed. The \emph{degree} of an agent 
is defined as the number of edges incident to the corresponding vertex in the residual graph. Thus, the degree of an agent gives an upper bound on the number of other agents that it positively values. We also consider the \emph{number of distinct valuations} parameter, defined as the number of distinct elements in the set~$\{v_{w,f}\}_{ (w,f) \in W \times F} \cup \{v_{f,w}\}_{ (w,f) \in W \times F}$.

All omitted proofs are presented in the appendix.

\section{Hardness Results}
\label{sec:Hardness_Results}

In this section, we will prove our main negative result (\Cref{thm:Nash-NPhard-Capacity-Two}) concerning the hardness of maximizing Nash welfare under two-sided preferences.

To motivate our hardness result, let us first consider an easy case where the capacity of each firm is $1$. In this case, a Nash optimal matching can be computed in polynomial time~(\Cref{prop:One-One}). The algorithm is fairly natural: It computes a maximum weight matching in a bipartite graph whose vertices are the workers and the firms, and the weight of the edge between worker $w$ and firm $f$ is $\log (v_{w,f} \cdot v_{f,w})$ if both $v_{w,f}$ and $v_{f,w}$ are positive, and $0$ otherwise.

\begin{restatable}{proposition}{}
When every firm has capacity $1$, a Nash optimal matching can be computed in polynomial time.
\label{prop:One-One}
\end{restatable}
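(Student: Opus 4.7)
The plan is to reduce the problem to computing a maximum weight bipartite perfect matching, as hinted in the discussion preceding the statement. Because the instance is assumed to admit a matching with nonzero Nash welfare and each firm has capacity~$1$, every agent must be matched with positive utility; this forces $m=n$ and identifies any candidate solution with a perfect matching in the complete bipartite graph on $(W,F)$.

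I build an auxiliary weighted bipartite graph $G = (W\cup F, E)$ by including an edge $(w,f)$ of weight $\ell_{w,f} \coloneqq \log(v_{w,f}\cdot v_{f,w})$ precisely when $v_{w,f}>0$ and $v_{f,w}>0$, and discarding all other pairs. For any perfect matching $\mu$ in $G$, the matched worker $w$ has utility $v_{w,\mu(w)}>0$ and the matched firm $f$ has utility $v_{f,\mu(f)}>0$, so
\[
\sum_{(w,f)\in\mu}\ell_{w,f} \;=\; \log \prod_{(w,f)\in\mu} v_{w,f}\,v_{f,w} \;=\; \log \prod_{i\in W\cup F} u_i(\mu),
\]
which is a strictly increasing transformation of the Nash welfare of $\mu$. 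Conversely, any matching of nonzero Nash welfare uses only edges of $G$, so maximizing Nash welfare over the original instance coincides with maximizing total edge weight over perfect matchings of $G$; our standing assumption guarantees that such a perfect matching exists.

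To finish, I invoke a standard polynomial-time algorithm (e.g., the Hungarian algorithm) for maximum weight bipartite perfect matching. To avoid any concern about manipulating real logarithms exactly, one can instead compare candidate solutions by their integer-valued Nash products $\prod_{(w,f)\in\mu} v_{w,f}\cdot v_{f,w}$ directly: these have polynomial bit length in the input size, so all comparisons are exact and poly-time. There is no substantial obstacle here; the only real step is the log-linearization that turns the product objective into a sum objective, combined with the observation that the nonzero-Nash-welfare assumption restricts attention to perfect matchings that use only positive-mutual-valuation edges.
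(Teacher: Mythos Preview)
Your proposal is correct and follows essentially the same approach as the paper: reduce to maximum weight bipartite matching with edge weights $\log(v_{w,f}\cdot v_{f,w})$ on positive-mutual-valuation pairs. You are slightly more careful than the paper's sketch in that you explicitly derive $m=n$ from the nonzero-Nash-welfare assumption, restrict to perfect matchings on the positive-edge subgraph (rather than assigning weight~$0$ to zero-valued pairs, which could collide with $\log 1 = 0$), and address exact arithmetic via integer Nash products; these refinements are welcome but do not change the underlying idea.
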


A similar observation has been previously made in the fair division problem with one-sided preferences~\citep{GKK20approximating}. \Cref{prop:One-One} extends this idea to the two-sided setting.

Interestingly, the problem becomes significantly harder when the firms can have capacity $2$ (\Cref{thm:Nash-NPhard-Capacity-Two}). On first thought, it may seem that the hardness arises because of \emph{large} valuations. Or, if the valuations are small, the difficulty may be due to balancing a \emph{large number} of small values for each firm. However, we show that the hardness persists even when all valuations are in the range $\{0,1,2\}$ (also known as \emph{ternary} or \emph{3-value} instances) and each agent has \emph{degree} at most $3$; thus, it positively values at most three other agents.

\begin{restatable}[\textbf{NP-hardness under constant capacities}]{theorem}{NPHardCapacityTwo}
Computing a Nash optimal matching is \NPC{} even when each firm has capacity $2$, all valuations are in $\{0,1,2\}$, and each agent has degree at most $3$.
\label{thm:Nash-NPhard-Capacity-Two}
\end{restatable}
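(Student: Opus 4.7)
NP membership is routine: the decision version asks whether some matching achieves Nash product at least a given integer threshold $T$, and given a candidate matching $\mu$ we can verify the capacity constraints and compute $\prod_{i \in W \cup F} u_i(\mu)$ in polynomial time, since all utilities are bounded integers. For the hardness direction, the plan is to reduce from \RainbowPerfectMatching{} on bounded-degree edge-colored bipartite graphs with constant-size color classes, a restriction under which that problem remains \NPH{}. Given such an instance with graph $G = (A \cup B, E)$ and coloring $c : E \to [k]$, the goal is to produce a two-sided instance $\langle W, F, C, \V \rangle$ of capacity exactly~$2$, valuations in $\{0,1,2\}$, and degree at most~$3$, together with a threshold $T^\star$ computable in polynomial time, such that the instance admits a matching of Nash product at least $T^\star$ if and only if $G$ has a rainbow perfect matching.

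The construction would glue together three kinds of constant-sized gadgets. For each edge $e \in E$ I would introduce an \emph{edge firm} $f_e$ of capacity~$2$, together with two dedicated dummy workers; the two slots of $f_e$ are most profitably filled by the two endpoint workers of $e$ at value~$2$ each, and otherwise by the two dummies at value~$1$ each. For each vertex $v \in A \cup B$ I would introduce a \emph{vertex worker} $w_v$ that positively values only the edge firms $\{f_e : v \in e\}$; since $G$ has bounded degree, $w_v$ has degree at most~$3$. Finally, for each color $\ell \in [k]$ I would introduce a \emph{color firm} of capacity~$2$ attached to the edges of color~$\ell$, designed so that its two slots can both be filled at value~$2$ precisely when at most one edge of color~$\ell$ is activated by the two-sided matching, and otherwise some slot is forced down to value~$1$. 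Bounded degree throughout is secured by the fact that each color class, each vertex neighbourhood, and each edge gadget is of constant size by the choice of the source restriction.

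The equivalence between the two instances I would prove via an AM--GM style analysis of the Nash product. In the forward direction, a rainbow perfect matching in $G$ induces a two-sided matching in which every agent attains its designed maximum utility, and the Nash product equals the target $T^\star$ by direct calculation. In the reverse direction, any two-sided matching that fails to encode a rainbow perfect matching must either leave an edge firm unsaturated by its endpoint workers, leave a vertex worker unmatched, or overload some color firm; each such violation drops at least one agent's utility strictly below its designed maximum---a switch within the scale $\{0,1,2\}$---which by a straightforward concavity argument drives the Nash product strictly below $T^\star$ by a multiplicative factor bounded away from~$1$.

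The main obstacle will be the extremely narrow valuation scale. With only the values $\{0,1,2\}$ available, a single combinatorial violation must be forced to inflict a detectable loss somewhere, and the gadgets must be balanced so that no local rearrangement of the matching can compensate that loss by creating a gain elsewhere within the same $\{0,1,2\}$ range. The degree-$3$ cap compounds this difficulty by ruling out any high-degree coordination vertex that could aggregate many signals, so all consistency checks have to be carried out by constant-sized, locally decodable gadgets linked only through the structural graph of the original rainbow instance.
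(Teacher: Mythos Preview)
Your reduction is from the right source problem and has the right shape---edge gadgets, vertex workers, color gadgets, AM--GM analysis---but the reverse direction as you have written it has a genuine gap, and you yourself flag it in your last paragraph. You assert that any deviation from the intended matching drops some agent's utility ``by a straightforward concavity argument'', but with utilities in $\{0,1,2\}$ and capacities equal to $2$, a loss at one firm can in principle be offset by a gain at another, and nothing in your gadget descriptions rules this out. In particular, with your valuation choice (edge firms value endpoint workers at $2$ and dummies at $1$), the intended solution already has \emph{unequal} firm utilities (selected edge firms at $4$, unselected at $2$), so there is no AM--GM equality case to invoke; you would need a separate combinatorial argument that no rearrangement can trade a $2$ for a $4$ somewhere, and you have not supplied one. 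Your color gadget is also only a specification (``designed so that \ldots'') rather than a construction, and it is not clear how to realise it with degree at most $3$ and values in $\{0,1,2\}$.

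The paper closes exactly this gap with one idea you are missing: it arranges the valuations so that the \emph{sum} of firm utilities is the same in \emph{every} matching with nonzero Nash welfare. Concretely, the roles of $1$ and $2$ are reversed relative to your sketch---main firms value their two endpoint workers at $1$ and their single ``signature'' dummy worker at $2$, and the dummy firm for each color values its three dummies at $2$. Then every main worker contributes exactly $1$ and every dummy worker exactly $2$ to the total firm utility, regardless of which admissible firm it is assigned to, so the arithmetic mean of firm utilities is forced to be exactly $2$. The threshold is chosen so that the geometric mean must also be at least $2$, and AM--GM equality then pins every firm to utility exactly $2$. From there the rainbow perfect matching falls out structurally: each dummy firm must get exactly one dummy, the remaining two dummies of each color saturate two of the three main firms of that color, and the one leftover main firm per color must be filled by its two endpoint workers. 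This fixed-sum trick is what makes the ``no compensation elsewhere'' worry disappear, and it is the piece your proposal is lacking.
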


Before discussing the proof of \Cref{thm:Nash-NPhard-Capacity-Two}, let us compare it with some known results on the hardness of maximizing Nash welfare in the fair division literature. 
%
\citet{ABF+21maximum} have shown that 
maximizing Nash welfare is \NPH{} for 3-value instances when all valuations are in the range $\{0,1,a\}$ for some $a>1$. The parameter `$a$' in their construction \emph{depends} on the size of the instance; specifically, they use $a > 1/ \sqrt[2m]{2} -1$, where $m$ is the number of clauses in the \ThreeSAT{} instance which they reduce from. By contrast, our reduction \emph{does not} require the values to grow with the size of the instance; indeed, all valuations in our construction are in the range~$\{0,1,2\}$. 


\citet{{NNR+14computational}} showed \NPH{}ness of approximating Nash welfare to within a factor $\frac{8}{9}$ via reduction from \ExactCoverByThreeSets{}. Their reduction requires the capacities to be $3$ (instead of $2$) and the number of positively valued agents to grow with the problem size (instead of being constant). Finally,~\citet{GHM18approximating} showed \NPH{}ness of approximating Nash welfare to within a factor $\sqrt{7/8}$ via a reduction from a variant of \MaxLinEq{} problem. The capacities in their reduction are required to be $4$ (instead of $2$) and the valuations are in $\{0,1,k\}$ for a large constant $k$ (even under a conservative calculation, $k \geq 8$).

Let us now briefly sketch the proof of~\Cref{thm:Nash-NPhard-Capacity-Two}. 
\begin{proofsketch} (of \Cref{thm:Nash-NPhard-Capacity-Two}) Our reduction is from \RainbowPerfectMatching{}. The input to this problem consists of a bipartite multigraph whose vertex sets are of size $r$ each and whose edge set is partitioned into~$r$ color classes. The goal is to determine if there is a perfect matching consisting of one edge of each color. In the appendix, we show that this problem is \NPH{} even when each vertex has degree $3$ and there are three edges of each color. These restrictions on \RainbowPerfectMatching{} are needed in our proof for obtaining the bound on degree in~\Cref{thm:Nash-NPhard-Capacity-Two}.

The reduced instance consists of a \emph{main firm} for every edge and a \emph{main worker} for every vertex of the given multigraph. Additionally, there are three \emph{dummy workers} and one \emph{dummy firm} for each color class. The main workers only value the main firm corresponding to their incident edges, while the main firms have a slightly higher value for dummy workers; see \Cref{fig:Nash-NPhard-Capacity-Two}. Each dummy firm only values the dummy workers of the same color class and has zero value for all other workers. The capacity of every firm is $2$, and the threshold for Nash welfare is $\theta \coloneqq 2^{\# \textup{firms}/\# \textup{agents}}$. That is, the goal is to determine if there is a feasible matching with Nash welfare at least $\theta$. Note that there are $4r$ firms and $5r$ workers in the reduced instance; thus $9r$ agents in total.

Given a rainbow perfect matching, the desired Nash welfare solution can be obtained by assigning the main workers to the firms corresponding to the matched edges. This leaves two main firms in each color class, who each absorb one dummy worker, leaving the third dummy worker for a dummy firm. Thus, all firms get a utility of $2$, as desired.

For the reverse direction, we use the AM-GM inequality. Indeed, each dummy worker contributes $2$ to the sum of firms' utilities, while each main worker contributes $1$. Thus, the arithmetic mean of firms' utilities is \emph{exactly} $2$. The decision threshold of $\theta$ forces the geometric mean to also be exactly $2$, implying that every firm gets a utility of $2$. This naturally induces a rainbow matching.
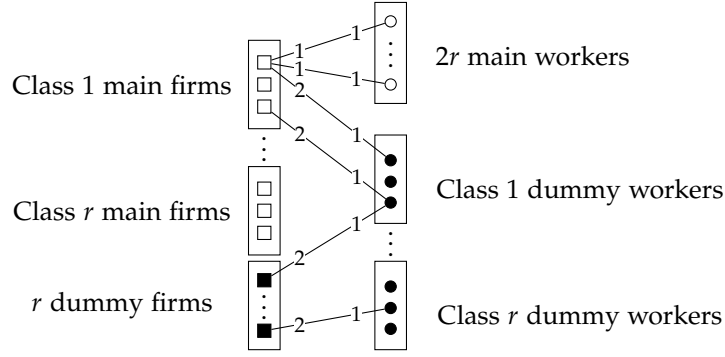
\begin{figure}[t]
    \centering
    \small
    \begin{tikzpicture}[scale=0.84]
                \tikzset{firm/.style = {shape=rectangle,draw,inner sep=2.5pt}}
                \tikzset{dummyfirm/.style = {shape=rectangle,fill=black,draw,inner sep=2.5pt}}
                \tikzset{worker/.style = {shape=circle,draw,inner sep=1.5pt}}
                \tikzset{dummyworker/.style = {shape=circle,draw,fill=black,inner sep=1.5pt}}
                \tikzset{edge/.style = {solid}}
                \draw[draw=black] (0,3.5) rectangle ++(.5,1.4);
                \node (1) at (-2,4.2) {\color{black}{Class 1 main firms}};
                \node[] (0) at (0.25,3.3) {$\vdots$};
                \draw[draw=black] (0,1.5) rectangle ++(.5,1.4);
                \node (1) at (-2,2.2) {\color{black}{Class $r$ main firms}};
                \draw[draw=black] (0,0) rectangle ++(.5,1.4);
                \node (1) at (-2,0.7) {\color{black}{$r$ dummy firms}};
                \draw[draw=black] (2,0) rectangle ++(.5,1.4);
                \node (1) at (5.25,0.5) {\color{black}{Class $r$ dummy workers}};
                \node[] (0) at (2.25,1.8) {$\vdots$};
                \draw[draw=black] (2,2) rectangle ++(.5,1.4);
                \node (1) at (5.25,2.5) {\color{black}{Class 1 dummy workers}};
                \draw[draw=black] (2,3.9) rectangle ++(.5,1.6);
                \node (1) at (4.5,4.65) {\color{black}{$2r$ main workers}};
                \node[dummyfirm] (1) at (0.25,0.3) {};
                \node[] (0) at (0.25,0.8) {$\vdots$};
                \node[dummyfirm] (3) at (0.25,1.1) {};
                \node[firm] (4) at (0.25,1.85) {};
                \node[firm] (5) at (0.25,2.2) {};
                \node[firm] (6) at (0.25,2.55) {};
                \node[firm] (7) at (0.25,3.85) {};
                \node[firm] (8) at (0.25,4.2) {};
                \node[firm] (9) at (0.25,4.55) {};
                \node[dummyworker] (10) at (2.25,0.33) {};
                \node[dummyworker] (11) at (2.25,0.66) {};
                \node[dummyworker] (12) at (2.25,1) {};
                \node[dummyworker] (13) at (2.25,2.33) {};
                \node[dummyworker] (14) at (2.25,2.66) {};
                \node[dummyworker] (15) at (2.25,3) {};
                \node[worker] (16) at (2.25,4.2) {};
                \node[] (0) at (2.25,4.8) {$\vdots$};
                \node[worker] (18) at (2.25,5.2) {};
                \draw[edge] (9) to node [near start,fill=white,inner sep=0pt] (9218) {\scriptsize{$1$}} node [near end,fill=white,inner sep=0pt] (1829) {\scriptsize{$1$}} (18);
                \draw[edge] (9) to node [near start,fill=white,inner sep=0pt] (9216) {\scriptsize{$1$}} node [near end,fill=white,inner sep=0pt] (1629) {\scriptsize{$1$}} (16);
                \draw[edge] (9) to node [near start,fill=white,inner sep=0pt] (8215) {\scriptsize{$2$}} node [near end,fill=white,inner sep=0pt] (1528) {\scriptsize{$1$}} (15);
                \draw[edge] (1) to node [near start,fill=white,inner sep=0pt] (1211) {\scriptsize{$2$}} node [near end,fill=white,inner sep=0pt] (1121) {\scriptsize{$1$}} (11);
                \draw[edge] (3) to node [near start,fill=white,inner sep=0pt] (3213) {\scriptsize{$2$}} node [near end,fill=white,inner sep=0pt] (1323) {\scriptsize{$1$}} (13);
                \draw[edge] (7) to node [near start,fill=white,inner sep=0pt] (7213) {\scriptsize{$2$}} node [near end,fill=white,inner sep=0pt] (1327) {\scriptsize{$1$}} (13);
    \end{tikzpicture}
    \caption{The reduced two-sided matching instance in the proof of \Cref{thm:Nash-NPhard-Capacity-Two}. Firms are denoted by squares and workers by circles. The shaded and unshaded nodes denote the dummy and the main agents, respectively.}
    \label{fig:Nash-NPhard-Capacity-Two}
\end{figure}
\end{proofsketch}

By reducing from the \Partition{} problem, we can show \NPH{}ness of maximizing Nash welfare even for two firms with identical valuations and equal capacities and even under \emph{symmetric} valuations (i.e., if for every worker $w$ and firm $f$, $v_{w,f} = v_{f,w}$). This reduction is well-known in the fair division literature; \Cref{prop:Nash-NPhard-Symmetric-Vals} simply adapts it to two-sided preferences setting.

\begin{restatable}[\textbf{NP-hardness for two identical firms}]{proposition}{SymmetricValsNPhard}
Computing a Nash optimal matching is \NPC{} even with two identical firms (that have the same valuation functions and equal capacities) and under symmetric valuations.
\label{prop:Nash-NPhard-Symmetric-Vals}
\end{restatable}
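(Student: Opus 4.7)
The plan is to reduce from the \Partition{} problem: given positive integers $a_1,\dots,a_k$ with $\sum_{i=1}^k a_i = 2S$, decide whether some $I\subseteq [k]$ satisfies $\sum_{i\in I} a_i = S$. I would construct a two-sided matching instance with $k$ workers $w_1,\dots,w_k$ and two firms $f_1,f_2$ of capacity $k$ each, setting $v_{w_i,f_j} \coloneqq a_i$ and $v_{f_j,w_i} \coloneqq a_i$ for every $i\in [k]$ and $j\in\{1,2\}$. This construction is fully symmetric (so $v_{w_i,f_j} = v_{f_j,w_i}$), and the two firms have identical additive valuation functions and equal capacities, matching the hypotheses of the statement.

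The key observation is that in any matching $\mu$ of nonzero Nash welfare, every worker must be matched (else that worker contributes a zero factor to the Nash product) and both firms must receive at least one worker. Since worker $w_i$ obtains utility $a_i$ no matter which firm it is matched with, the workers' joint contribution $\prod_{i=1}^k a_i$ is a constant independent of $\mu$. Writing $S_1 \coloneqq u_{f_1}(\mu)$ and $S_2 \coloneqq u_{f_2}(\mu)$, we have $S_1+S_2 = 2S$, so the Nash product factors as $S_1 S_2 \cdot \prod_{i=1}^k a_i$. By the AM--GM inequality applied to the constraint $S_1+S_2=2S$, we get $S_1 S_2 \le S^2$ with equality iff $S_1=S_2=S$.

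Taking the decision threshold to be $\theta \coloneqq \bigl(S^2 \cdot \prod_{i=1}^k a_i\bigr)^{1/(k+2)}$, the reduced instance admits a matching with Nash welfare at least $\theta$ iff the \Partition{} instance is a yes-instance. For the forward direction, given a balanced partition $I$, assigning $\{w_i : i\in I\}$ to $f_1$ and the rest to $f_2$ achieves $S_1=S_2=S$, hitting Nash welfare exactly $\theta$. Conversely, if no balanced partition exists then $S_1\neq S_2$ for every positive-utility matching, so the strict AM--GM inequality gives $S_1 S_2 < S^2$ and the Nash welfare falls strictly below $\theta$. Membership in \NP{} is immediate since verifying Nash welfare $\ge \theta$ reduces to the integer check $S_1 S_2 \ge S^2$ after dividing out the common factor $\prod_i a_i$. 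There is no substantive obstacle here; the only minor subtlety is handling matchings with a zero-utility firm, which is automatic because such matchings have Nash welfare $0<\theta$ and are therefore irrelevant to the decision question.
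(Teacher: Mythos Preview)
Your proposal is correct and follows essentially the same approach as the paper: a reduction from \Partition{} with symmetric valuations $v_{w_i,f_j}=v_{f_j,w_i}=a_i$, the same threshold $\theta$, and the same AM--GM argument on the firms' utilities. The only cosmetic difference is that the paper sets each firm's capacity to $m/2$ (and invokes the equisized variant of \Partition{}), whereas you set capacity $k$; your choice is slightly simpler since it avoids the equisized restriction while still giving two firms with equal capacities.
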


A slight modification of the same reduction gives \NPH{}ness for the case when all agents have \emph{strict} preferences.

\begin{restatable}[\textbf{NP-hardness under strict preferences}]{corollary}{StrictPrefsNPhard}
Computing a Nash optimal matching is \NPC{} even when all agents have strict preferences.
\label{cor:Nash-NPhard-Strict-Vals}
\end{restatable}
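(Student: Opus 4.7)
The plan is to modify the \Partition{}-based reduction used in the proof of \Cref{prop:Nash-NPhard-Symmetric-Vals} by adding small perturbations to the valuations so that every agent ends up with pairwise distinct values for its potential partners, while the Nash welfare gap between yes- and no-instances is preserved. In the original reduction, every worker is indifferent between the two firms, and both firms rank workers by the raw \Partition{} integers $a_1, \dots, a_k$, creating ties whenever these integers repeat; both sources of ties must be broken simultaneously.

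Concretely, I would pick a scaling factor $N = \textup{poly}(k)$ (say $N = k^5$), set $A_i \coloneqq N \cdot a_i$, and reduce to a two-sided matching instance in which each firm $f_j$ (for $j \in \{1,2\}$) values worker $w_i$ at $A_i + i$, while each worker $w_i$ values firm $f_j$ at $A_i + j$. Since the firm-side offsets $i$ are distinct across workers, every firm acquires a strict ranking; since $1 \neq 2$, every worker strictly prefers one firm over the other. The firms inherit the capacities and the threshold structure of \Cref{prop:Nash-NPhard-Symmetric-Vals}, with the Nash welfare threshold rescaled to match the value achieved by a perfectly balanced partition in the perturbed instance.

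The main obstacle is verifying that no no-instance can now cross the threshold. Because the firm-side offsets are independent of $j$, the quantity $u_{f_1} + u_{f_2} = \sum_i (A_i + i)$ is constant across all matchings, so by AM-GM the product $u_{f_1} \cdot u_{f_2}$ is maximized exactly when the two utilities are balanced. A yes-instance of \Partition{} lets us balance the $A_i$-sums exactly, leaving a residual imbalance of at most $O(k^2)$ from the $i$-offsets; a no-instance forces the $A_i$-sums to differ by at least $N$, dropping the firms' product by $\Omega(N^2)$ from its upper bound. Meanwhile, the product of worker utilities changes by a multiplicative factor of only $1 + O(k/N)$ across matchings, since each worker-side perturbation is at most $2 \ll A_i$. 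Choosing $N$ sufficiently large relative to $k$ absorbs all perturbation errors and cleanly separates the two regimes, giving the desired \NPH{}ness under strict preferences.
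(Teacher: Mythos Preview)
Your additive-perturbation strategy is sound, but the choice $N=\textup{poly}(k)$ is not: the relative gap you create in the firms' product between yes- and no-instances is $\Theta(1/T^{2})$ (where $T=\tfrac12\sum_i a_i$), while the multiplicative slack in the workers' product is $1+O(k/N)$, so separation requires $N=\Omega(kT^{2})$. Since \Partition{} is only weakly \NPH{}, $T$ may be exponential in $k$, and $N=k^{5}$ is swamped. Taking $N$ polynomial in the full input (e.g.\ $N=4kT^{2}$) fixes this and keeps the reduction polynomial; with that correction your argument goes through.

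The paper's modification avoids all of this bookkeeping via \emph{multiplicative} perturbations that cancel exactly: worker $w_i$ values $f_1$ at $a_i$ and $f_2$ at $2a_i$, while $f_1$ values $w_i$ at $a_i$ and $f_2$ values $w_i$ at $a_i/2^{m/2}$. With the capacities forcing $|S|=m/2$, the extra factor $2^{m/2}$ contributed by the workers assigned to $f_2$ cancels the $2^{-m/2}$ in $u_{f_2}$, so the Nash product is literally $\prod_i a_i\cdot(\sum_{i\in S}a_i)(\sum_{i\notin S}a_i)$---identical to the unperturbed reduction, with the same threshold~$\theta$. Strictness on the firm side comes for free from the distinct-$a_i$ assumption already used in \Cref{prop:Nash-NPhard-Symmetric-Vals}. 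Your route does not need distinct $a_i$ (the $+i$ offset handles firm-side ties), but pays for that generality with the quantitative analysis above.
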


\section{Approximation Algorithms}
\label{sec:Approximation_Algorithms}

The intractability of maximizing Nash welfare
~(\Cref{sec:Hardness_Results}) motivates the study of approximation algorithms in search of positive results. To this end, we provide two algorithms: 
The first algorithm works for an arbitrary number of firms, but gives a somewhat weak approximation~(\Cref{thm:Nash-Approx-OPT-Dependent}). The second algorithm provides a stronger approximation for a constant number of firms~(\Cref{thm:QPTAS-Constant-Firms,cor:FPT-Exact-Constant-Firms}).

Our first main result in this section is a $\frac{1}{\sqrt{\opt}}$-approximation algorithm, where $\opt$ is the optimal Nash welfare in the given instance. The algorithm requires all valuations to be positive, i.e., for every worker $w \in W$ and every firm $f \in F$, $v_{w,f}>0$ and $v_{f,w} > 0$.

\begin{restatable}[\textbf{$\frac{1}{\sqrt{\opt}}$-approximation}]{theorem}{NashApproxOPTdependent}
There is a polynomial-time algorithm that, given any matching instance with positive valuations, 
returns a $\frac{1}{\sqrt{\opt}}$-Nash optimal matching, where $\opt$ is the optimal Nash welfare. 
\label{thm:Nash-Approx-OPT-Dependent}
\end{restatable}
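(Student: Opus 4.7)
The plan is to reduce the problem to submodular welfare maximization subject to matroid constraints and invoke a standard polynomial-time $\tfrac{1}{2}$-approximation. Taking logarithms decomposes the Nash product along firms: for any matching $\mu$ in which every firm is matched,
\[
\log \prod_{i \in W \cup F} u_i(\mu) \;=\; \sum_{f \in F} \Bigl[ \log u_f(\mu(f)) + \sum_{w \in \mu(f)} \log v_{w,f} \Bigr] \;=\; \sum_{f \in F} h_f(\mu(f)),
\]
where $h_f(S) \coloneqq \log\bigl(\sum_{w \in S} v_{f,w}\bigr) + \sum_{w \in S} \log v_{w,f}$ for nonempty $S \subseteq W$. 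Thus maximizing Nash welfare is equivalent to maximizing $\sum_{f \in F} h_f(\mu(f))$ over valid many-to-one matchings.

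First I would verify that each $h_f$, restricted to subsets of $W$ of cardinality at most $c_f$, is monotone submodular and nonnegative on nonempty sets. The second summand is modular; the first is a concave function of a positive modular function, hence submodular. Positive integer valuations ($v_{w,f}, v_{f,w}\ge 1$) give monotonicity and nonnegativity. Second, the matching constraints—each worker to at most one firm, each firm to at most $c_f$ workers—are exactly the setting of submodular welfare maximization with a uniform matroid of rank $c_f$ per agent (firm); this is the promised ``fair division under submodular valuations and matroid constraints''.

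Third, I would invoke a standard $\tfrac{1}{2}$-approximation for this framework (e.g., a greedy procedure in the spirit of Fisher, Nemhauser, and Wolsey, extended to matroid-constrained submodular welfare). This yields a matching $\hat\mu$ with $\sum_{f} h_f(\hat\mu(f)) \ge \tfrac{1}{2} \sum_{f} h_f(\mu^\star(f))$ for a Nash-optimal $\mu^\star$. Exponentiating gives
\[
\prod_{i \in W \cup F} u_i(\hat\mu) \;\ge\; \Bigl(\prod_{i \in W \cup F} u_i(\mu^\star)\Bigr)^{\!1/2} \;=\; \opt^{(n+m)/2},
\]
so $\W^{\texttt{\textup{Nash}}}(\hat\mu) \ge \sqrt{\opt}$, giving the claimed ratio $\W^{\texttt{\textup{Nash}}}(\hat\mu)/\opt \ge 1/\sqrt{\opt}$. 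Positivity is also used here to ensure $\opt \ge 1$, so the bound is nontrivial.

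The main obstacle is the boundary case $h_f(\emptyset) = -\infty$, which breaks the finite submodular-welfare machinery if any firm ends up unmatched. I would handle this by first extracting a base matching $\mu_0$ that assigns at least one worker to every firm, which exists by the standing assumption (\Cref{sec:Preliminaries}) that the instance admits some matching with positive Nash welfare, and then running the submodular welfare subroutine on the residual instance with each firm $f$ having residual capacity $c_f - |\mu_0(f)|$. The marginal functions $S \mapsto h_f(\mu_0(f) \cup S) - h_f(\mu_0(f))$ are finite, monotone submodular, and nonnegative, so the $\tfrac{1}{2}$-approximation analysis carries through. A secondary subtlety is to show that the welfare objective on the residual instance still upper-bounds the Nash-optimal objective up to the $\tfrac{1}{2}$ loss; this follows because any feasible matching can be decomposed as a base-plus-residual pair, and the submodular welfare guarantee is multiplicative in $\sum_f h_f$.
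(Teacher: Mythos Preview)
Your core reduction is exactly the paper's: take logarithms to decompose the Nash product by firm, observe that each firm's contribution $h_f$ is monotone submodular (concave-of-positive-modular plus modular), cast the problem as submodular welfare maximization under per-agent cardinality (matroid) constraints, invoke the Fisher--Nemhauser--Wolsey greedy $\tfrac12$-approximation, and exponentiate to convert the $\tfrac12$ on the log scale into $1/\sqrt{\opt}$ on Nash welfare. The paper's modified valuation $\hat v_f(X)=\log\bigl(v_f(X)\cdot\prod_{w\in X}v_{w,f}\bigr)$ is precisely your $h_f$, and its approximation-transfer step is the same exponentiation you perform.

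Where you diverge is in handling $h_f(\emptyset)$. The paper does not introduce a base matching: it simply defines $\hat v_f(\emptyset)\coloneqq 0$ and runs the submodular-welfare machinery directly, relying on positive-integer valuations so that $\hat v_f(S)\ge 0$ for every $S$. Your base-matching workaround, by contrast, has a gap. Once you freeze a specific $\mu_0$, the Nash-optimal matching $\mu^\star$ need not extend it---a worker that $\mu_0$ sends to firm $f$ may sit at a different firm in $\mu^\star$---so the residual instance need not contain $\mu^\star$ as a feasible point, and the $\tfrac12$-guarantee you obtain compares only against the best matching \emph{extending} $\mu_0$, not against the global optimum. The claim ``any feasible matching can be decomposed as a base-plus-residual pair'' is false for a fixed base. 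Dropping the seeding step and adopting the paper's convention $h_f(\emptyset)=0$ is the cleaner route.
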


The approximation factor in \Cref{thm:Nash-Approx-OPT-Dependent} \emph{decreases} as a function of the optimal Nash welfare and provides only a weak guarantee as the problem size grows. Nevertheless, when the valuations are bounded by a constant and each firm has a constant capacity, the maximum utility of any agent---and thus their geometric mean---is also constant. In this setting, \Cref{thm:Nash-Approx-OPT-Dependent} gives a \emph{constant-factor} approximation.

\begin{restatable}[\textbf{Constant approximation for constant valuations and capacities}]{corollary}{}
Let $c_0$ be a constant. There is a polynomial-time algorithm that, given any matching instance with positive valuations where all valuations and capacities are bounded by $c_0$, returns an $\alpha$-Nash optimal matching for some constant $\alpha$ (that only depends on $c_0$).
\label{cor:Nash-Approx-Constant-Vals}
\end{restatable}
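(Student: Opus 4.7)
The plan is to derive the corollary directly from \Cref{thm:Nash-Approx-OPT-Dependent} by upper-bounding the optimal Nash welfare $\opt$ by a constant that depends only on $c_0$. The approximation factor in \Cref{thm:Nash-Approx-OPT-Dependent} is $\tfrac{1}{\sqrt{\opt}}$, so once we establish $\opt \le g(c_0)$ for some function $g$, the algorithm of \Cref{thm:Nash-Approx-OPT-Dependent} automatically yields an $\alpha$-Nash optimal matching with $\alpha \ge \tfrac{1}{\sqrt{g(c_0)}}$, which is a constant depending only on $c_0$.

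The key step is the utility bound. For any matching $\mu$ and any worker $w \in W$, we have $u_w(\mu) = v_{w,\mu(w)} \le c_0$ since valuations are bounded by $c_0$ and each worker is matched to at most one firm. For any firm $f \in F$, additivity gives $u_f(\mu) = \sum_{w \in \mu(f)} v_{f,w} \le c_f \cdot c_0 \le c_0^2$, using that both the capacities and the individual valuations are bounded by $c_0$. Since Nash welfare is the geometric mean of agent utilities, it is upper-bounded by the maximum agent utility, so
\[
\opt \;=\; \W^\texttt{\textup{Nash}}(\mu^\star) \;\le\; \max_{i \in W \cup F} u_i(\mu^\star) \;\le\; c_0^2,
\]
where $\mu^\star$ is a Nash optimal matching.

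Combining this bound with \Cref{thm:Nash-Approx-OPT-Dependent} yields, in polynomial time, a matching $\mu$ with
\[
\W^\texttt{\textup{Nash}}(\mu) \;\ge\; \tfrac{1}{\sqrt{\opt}} \cdot \opt \;\ge\; \tfrac{1}{c_0} \cdot \opt,
\]
so the claim holds with $\alpha \coloneqq \tfrac{1}{c_0}$. There is no real obstacle in this argument: once \Cref{thm:Nash-Approx-OPT-Dependent} is in hand, the corollary is a matter of noting that bounded valuations together with bounded capacities cap each agent's utility, and hence cap their geometric mean, by a constant. The only mild subtlety is to invoke both bounds simultaneously (on valuations \emph{and} on capacities) when bounding firm utilities, since either bound alone would not suffice.
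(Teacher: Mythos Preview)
Your proposal is correct and follows essentially the same approach as the paper: bound each worker's utility by $c_0$ and each firm's utility by $c_0^2$ via the capacity and valuation bounds, conclude that $\opt \le c_0^2$, and then invoke \Cref{thm:Nash-Approx-OPT-Dependent} to obtain a constant approximation factor $\alpha = 1/c_0$. The paper's justification is exactly this (stated informally in the text preceding the corollary), and your write-up simply makes the constants explicit.
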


Recall that even when all valuations and capacities are bounded by a constant, the problem of maximizing Nash welfare is \NPH{}~(\Cref{thm:Nash-NPhard-Capacity-Two}). Whether there exists a polynomial-time approximation scheme (\PTAS{}) in this setting is an interesting open problem. Later in this section, we will present a \emph{quasi}polynomial-time approximation scheme or \QPTAS{} for a constant number of firms~(\Cref{thm:QPTAS-Constant-Firms}).

\begin{proofsketch} (of \Cref{thm:Nash-Approx-OPT-Dependent})
We will show that the task of approximating Nash welfare in the matching problem reduces to approximating \emph{utilitarian} welfare in the \emph{fair division} problem under \emph{submodular} valuations and \emph{matroid} constraints. Specifically, we let the firms and workers play the role of agents and items, respectively, in the fair division problem. For each firm $f$ and any subset $X \subseteq W$ of workers, the \emph{modified} valuation of `agent' $f$ for the set of `items' $X$ is defined as $\hat{v}_f(X) \coloneqq \log \left( v_f(X) \cdot \Pi_{w \in X} v_{w,f} \right)$.

The modified valuation function $\hat{v}_f(\cdot)$ captures the combined contribution of firm $f$ and the set of workers $X$ to the log of Nash welfare objective. It is easy to show that an $\alpha$-utilitarian optimal allocation 
(under modified valuations) induces a matching that is $\frac{1}{\opt^{1-\alpha}}$-Nash optimal.

The key observation is that the function $\hat{v}_f(\cdot)$ is \emph{submodular}. Furthermore, capacity constraints in the matching problem can be captured by matroid constraints in the fair division problem.
We can now use the natural greedy algorithm for submodular maximization 
to obtain a feasible and $\frac{1}{2}$-utilitarian optimal allocation with respect to the modified valuations~\citep{FNW78analysis}. The desired approximation for Nash welfare follows.
\end{proofsketch}

It would be very interesting to know if a constant-factor approximation algorithm can be designed for our problem. 
Such algorithms are known for the one-sided fair division problem. However, as the example in \Cref{fig:Motivating_Example} suggests, incorporating the preferences of the `items' can come at the expense of making the `agents' worse off. This makes the design of algorithms for the two-sided problem challenging.

Our second main result in this section uses the idea of \emph{bucketing} (or discretization) to classify the workers and firms according to the \emph{range} in which they value each other. 

Specifically, given any $\eps > 0$ and any firm $f$, define a set of $\tau+1$ \emph{buckets} $B_0^f,B_1^f,B_2^f,\dots,B_{\tau}^f$, where $\tau \coloneqq \lceil \log_{1+\eps} v_{\max} \rceil$ and $v_{\max}$ is the maximum valuation of any agent for any other agent. For any $i \in [\tau]$, the bucket $B_i^f$ denotes the set of workers who value firm $f$ between $(1+\eps)^{i-1}$ and $(1+\eps)^{i}$. Bucket $B_0^f$ contains workers who value firm $f$ at $0$. For each bucket $B_i^f$, we further define $\tau+1$ \emph{sub-buckets} $b_0^{f,i},b_1^{f,i},\dots,b_\tau^{f,i}$, where, for any $j \in [\tau]$, the sub-bucket $b_j^{f,i}$ denotes the set of workers in the bucket $B_i^f$ whom the firm $f$ values between $(1+\eps)^{j-1}$ and $(1+\eps)^{j}$. The workers in $B_i^f$  valued at $0$ are assigned to $b_0^{f,i}$. Thus, membership in a sub-bucket specifies the valuations of a worker and a firm for each other within a multiplicative factor of $(1+\eps)$.

Our algorithm \emph{guesses} the number of workers in each sub-bucket in an optimal solution. Each such guess, if feasible for the given capacities, induces a matching. The Nash welfare of this matching can be correctly computed to within a multiplicative factor of $(1+\eps)$ by only knowing the \emph{number} of workers in each sub-bucket. 
The total number of sub-buckets 
is $\OO(n \tau^2)$. For $m$ workers, the total number of guesses is at most $m^{\OO(n \tau^2)}$; the calculation is analogous to distributing $m$ identical 
balls into $n (\tau+1)^2$ bins. For each guess, the algorithm checks feasibility and computes Nash welfare in polynomial time. The guess with the maximum Nash welfare is returned as the solution. For \emph{polynomially-bounded} valuations, i.e., when $v_{\max} \leq \textup{poly}(m,n)$, we obtain a \QPTAS{} for a constant number of firms.

\begin{restatable}[\textbf{QPTAS for constant no. of firms}]{theorem}{QPTASalgorithm}
There is an algorithm that, given any $\eps > 0$ and any matching instance with a constant number of firms and polynomially-bounded valuations, runs in $\OO(m^{\OO({\frac{1}{\eps^2}\cdot \log^2 m})})$ time and returns a $\frac{1}{(1+\eps)}$-Nash optimal matching.
\label{thm:QPTAS-Constant-Firms}
\end{restatable}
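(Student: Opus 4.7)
The plan is to flesh out the bucketing scheme already described in the paragraph preceding the theorem. I will fix $\eps > 0$, set $\tau := \lceil \log_{1+\eps} v_{\max} \rceil$, and observe that polynomially-bounded valuations together with constant $n$ give $\tau = O(\eps^{-1} \log m)$. As defined above the theorem, each worker $w$ sits in exactly one sub-bucket $b_j^{f,i}$ for every firm $f$, encoding the rounded values of $v_{w,f}$ and $v_{f,w}$. I will then define the \emph{signature} of a matching $\mu$ to be the tuple $\sigma(\mu) := (k_{f,i,j})_{f,i,j}$, where $k_{f,i,j}$ is the number of workers from $b_j^{f,i}$ that $\mu$ assigns to $f$.

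The algorithm will enumerate every nonnegative integer tuple $(k_{f,i,j})$ satisfying $\sum_{i,j} k_{f,i,j} \le c_f$ for each firm $f$; a stars-and-bars calculation bounds the count by $m^{O(n \tau^2)} = m^{O(\eps^{-2} \log^2 m)}$, which matches the claimed running time. For each candidate signature $\sigma$, I will check \emph{realizability} -- whether some matching $\mu$ achieves $\sigma(\mu) = \sigma$ -- by solving a polynomial-time bipartite $b$-matching (equivalently, a unit-capacity flow) with workers on one side, slots $(f,i,j)$ of capacity $k_{f,i,j}$ on the other, and an edge from $w$ to $(f,i,j)$ whenever $w \in b_j^{f,i}$; any $b$-matching that saturates all slots yields a realizing $\mu$. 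For a realizable $\sigma$, I will compute a surrogate Nash product $\hat N(\sigma)$ using the lower-bound proxies $(1+\eps)^{j-1}$ for firm contributions and $(1+\eps)^{i-1}$ for worker contributions (and $0$ whenever the corresponding index is $0$). Since these proxies estimate every agent's utility within a multiplicative $(1+\eps)$ factor (even when multiple workers in distinct sub-buckets are combined additively for a firm), every realizing $\mu$ satisfies $\hat N(\sigma) \le \prod_{i \in W \cup F} u_i(\mu) \le (1+\eps)^{n+m}\, \hat N(\sigma)$.

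The algorithm returns the matching $\mu$ whose enumerated signature maximizes $\hat N(\cdot)$. Correctness follows from a short chain: the signature $\sigma^*$ of a Nash optimal $\mu^*$ is enumerated, so
\[ \W^\texttt{\textup{Nash}}(\mu) \ge \bigl(\hat N(\sigma(\mu))\bigr)^{1/(n+m)} \ge \bigl(\hat N(\sigma^*)\bigr)^{1/(n+m)} \ge \frac{1}{1+\eps}\, \W^\texttt{\textup{Nash}}(\mu^*), \]
where the final inequality uses $\prod_i u_i(\mu^*) \le (1+\eps)^{n+m}\, \hat N(\sigma^*)$ followed by taking the $(n+m)$-th root. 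The main (mild) obstacle is the realizability step: a single worker typically belongs to different sub-buckets across different firms, so a naive greedy packing of workers into slots can fail even when a valid $\mu$ exists; the $b$-matching formulation is precisely what decouples this combinatorial constraint from the Nash-welfare estimation and makes the discretization go through cleanly.
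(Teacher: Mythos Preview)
Your proposal is correct and follows the same bucketing approach as the paper's own proof. The paper is actually vaguer on the realizability step---it simply asserts that each feasible guess ``induces a matching'' and that feasibility means ``it satisfies the capacity constraints''---whereas you correctly recognize that, because a worker lies in different sub-buckets across different firms, converting a count-tuple into an actual matching is a genuine combinatorial constraint, and your bipartite $b$-matching formulation is what makes this step rigorous.
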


By bucketing on the \emph{exact} valuations instead of powers of $(1+\eps)$, we obtain a polynomial-time exact algorithm for the case when $n$ and $v_{\max}$ are both constant.

\begin{restatable}[\textbf{Exact algorithm for constant no. of firms and constant valuations}]{corollary}{ExactBucketing}
There is an algorithm that, given any matching instance with a constant number of firms and the maximum valuation $v_{\max}$ bounded by a constant, runs in $\OO(\textup{poly}(m))$ time and returns a Nash optimal matching.
\label{cor:FPT-Exact-Constant-Firms}
\end{restatable}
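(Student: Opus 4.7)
The plan is to repeat the bucketing strategy of \Cref{thm:QPTAS-Constant-Firms}, but replace the geometric $(1+\eps)$-buckets by \emph{singleton} buckets indexed by exact valuation levels. Concretely, since $v_{\max}$ is a constant $v_0$, for each firm $f$ define buckets $B_0^f, B_1^f, \ldots, B_{v_0}^f$ where $B_i^f \coloneqq \{w \in W : v_{w,f} = i\}$, and inside each $B_i^f$ define sub-buckets $b_0^{f,i}, \ldots, b_{v_0}^{f,i}$ where $b_j^{f,i} \coloneqq \{w \in B_i^f : v_{f,w} = j\}$. With $n$ constant, the total number of sub-buckets is $n(v_0+1)^2 = \OO(1)$.

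The algorithm then guesses, for every sub-bucket $b_j^{f,i}$, a nonnegative integer $x_j^{f,i} \in \{0, 1, \ldots, |b_j^{f,i}|\}$ specifying the number of workers from that sub-bucket to be matched with firm $f$. Since each $x_j^{f,i}$ takes at most $m+1$ values and the number of such variables is constant, the total number of guesses is $(m+1)^{\OO(1)} = \textup{poly}(m)$. For each guess, I would perform the following two checks in polynomial time: (i)~$\sum_{i,j} x_j^{f,i} \leq c_f$ for every firm $f$, and (ii)~a valid worker-to-firm assignment consistent with the guessed counts exists. For (ii), the critical point is that two different firms may both demand workers from the ``same'' worker pool even though they use firm-specific sub-bucketings; so I would solve a bipartite $b$-matching (equivalently, a max-flow) in which one side has nodes for each sub-bucket-demand $(f, i, j)$ with capacity $x_j^{f,i}$ and the other side has the actual workers, with an edge from a worker $w$ to $(f,i,j)$ iff $v_{w,f}=i$ and $v_{f,w}=j$. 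A feasible flow saturating all demands yields a valid matching.

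For every feasible guess, the Nash product can be computed \emph{exactly} from the sub-bucket counts: firm $f$'s utility is $\sum_{i,j} j \cdot x_j^{f,i}$, and the total contribution of workers from $b_j^{f,i}$ matched to $f$ is $i^{x_j^{f,i}}$ (any worker not matched would contribute a zero factor, so we need the algorithm to also guess a nonzero assignment; by \Cref{prop:Zero_Nash}, we may assume such an assignment exists and restrict attention to guesses in which every worker is placed in some sub-bucket of some firm with $i > 0$, or equivalently, augment the flow to force saturation on the worker side). The algorithm outputs the matching corresponding to the guess with the largest Nash welfare.

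The main obstacle I anticipate is the feasibility/consistency step: it is tempting but incorrect to just verify the counts arithmetically, because a single worker may belong to sub-buckets of many different firms and cannot be ``double-counted.'' Reducing this check to bipartite $b$-matching (or transportation) cleanly handles the conflict, runs in $\textup{poly}(m)$, and ensures that the guessed counts correspond to an actual matching. Combining $\textup{poly}(m)$ guesses with a $\textup{poly}(m)$ feasibility check per guess and exact Nash welfare computation yields the claimed $\OO(\textup{poly}(m))$ running time and exact optimality.
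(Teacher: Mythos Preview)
Your proposal is correct and follows essentially the same approach as the paper: replace the $(1+\eps)$-buckets of \Cref{thm:QPTAS-Constant-Firms} by exact-value buckets, yielding constantly many sub-buckets and hence $\textup{poly}(m)$ count-vectors to enumerate. Your $b$-matching step for the consistency check is in fact more rigorous than the paper's own proof, which simply asserts that a feasible guess ``induces a matching'' without explaining how the per-firm sub-bucket counts are simultaneously realized by disjoint sets of workers.
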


Note that the problem is \NPH{} for a constant number of firms but with possibly large valuations~(\Cref{prop:Nash-NPhard-Symmetric-Vals}).

\section{Parameterized Algorithms}
\label{sec:Parameterized Algorithms}

We will now study the problem in the realm of parameterized complexity and discuss two main results: An exact algorithm that is fixed-parameter tractable (\fpt{}) in the number of workers~$m$ with running time $\OO^\star(3^m)$~(\Cref{thm:3m}), and a faster \fpt{} approximation algorithm with running time $\OO^\star(2^m)$~(\Cref{thm:param-m-const-capacity}).

We have already shown in \Cref{sec:Hardness_Results} that maximizing Nash welfare is \NPH{} even when the maximum capacity of any firm~(denoted by $\capacity$) or the number of firms $n$ is constant. Thus, a natural next step is to study the combined parameter $n + \capacity$. By the nonzeroness of Nash welfare, we know that $m \leq n \cdot \capacity$. Thus, designing an \fpt{} algorithm in $m$ automatically gives an \fpt{} algorithm in $n \cdot \capacity$, and thus also \fpt{} in $n + \capacity$.

Note that designing \emph{some} \fpt{}-in-$m$ algorithm is straightforward: Since the number of firms is at most the number of workers (i.e., $n \leq m$), a brute force search over the space of all feasible matchings can be done in $\OO^\star(m^m)$ time. Our goal in this section is to obtain \fpt{}-in-$m$ algorithms with significantly better running times.







The following notation will be useful in proving both results in this section: For any firm $f$ and any subset of workers $S \subseteq W$, define ${\W}_{f}(S) \coloneqq v_f(S) \cdot \prod_{w\in S}v_w(f)$. Recall that the log of the function $\W_f(\cdot)$ was used in the proof of \Cref{thm:Nash-Approx-OPT-Dependent} in defining the modified valuation function.

Let us now state our first main result in this section.


\begin{restatable}[\textbf{\fpt{} exact algorithm}]{theorem}{DPalgorithm}
A Nash optimal matching can be computed in $\OO^\star(3^m)$ time. 
    \label{thm:3m}
\end{restatable}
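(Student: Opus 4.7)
The plan is a subset dynamic program over workers, indexed by a processed prefix of firms, with the $\OO^\star(3^m)$ bound coming from the standard identity $\sum_{S \subseteq W} 2^{|S|} = 3^m$. First I would rewrite the Nash product in the factored form
\begin{equation*}
\prod_{w \in W} u_w(\mu) \cdot \prod_{f \in F} u_f(\mu) \;=\; \prod_{f \in F} \W_f(\mu(f)),
\end{equation*}
using $\W_f(T) = v_f(T) \cdot \prod_{w \in T} v_{w,f}$; since the $1/(n{+}m)$-th root is monotone, maximizing Nash welfare reduces to maximizing this firm-indexed product over capacity-feasible partitions of $W$ among the firms.

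Fix an arbitrary ordering $f_1, \ldots, f_n$ of the firms and define
\begin{equation*}
D[i][S] \;\coloneqq\; \max \Bigl\{ \prod_{j=1}^{i} \W_{f_j}(T_j) \;:\; S = T_1 \sqcup \cdots \sqcup T_i,\ T_j \neq \emptyset,\ |T_j| \leq c_{f_j} \Bigr\},
\end{equation*}
with $D[0][\emptyset] = 1$ and $D[0][S] = -\infty$ for $S \neq \emptyset$. The recurrence
\begin{equation*}
D[i][S] \;=\; \max_{\emptyset \neq T \subseteq S,\ |T| \leq c_{f_i}} D[i-1][S \setminus T] \cdot \W_{f_i}(T)
\end{equation*}
fills the table, and the answer is $D[n][W]$. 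Correctness follows from the factorization of the Nash product. Restricting to $T_j \neq \emptyset$ and to $S = W$ loses nothing: by \Cref{prop:Zero_Nash} I may assume the optimum is nonzero, which forces every firm to receive at least one worker (else $\W_f(\emptyset) = 0$ zeros out the product) and every worker to be matched (else that worker contributes $0$); the precondition $m \leq \sum_f c_f$ ensures such an assignment exists.

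The running time analysis hinges on
\begin{equation*}
\sum_{S \subseteq W} 2^{|S|} \;=\; \sum_{k=0}^{m} \binom{m}{k} 2^k \;=\; 3^m.
\end{equation*}
For each firm index $i$, the recurrence iterates over all pairs $(S, T)$ with $T \subseteq S$, incurring $3^m$ updates. The quantities $\W_f(T)$ can be tabulated for all $(f, T)$ in $O(n m \cdot 2^m)$ time by exploiting additivity of $v_f$ (e.g., adding workers one at a time). The overall running time is therefore $O(n \cdot 3^m) = \OO^\star(3^m)$.

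The only delicate point I anticipate is arithmetic: the Nash product can have magnitude exponential in the input, but its logarithm remains polynomially bounded since valuations are encoded in binary, so the $\max$-comparisons and the final multiplication can be done in polynomial time (working in the log domain, or with big-integer arithmetic), and are absorbed by $\OO^\star$. Recovering the argmax matching itself is routine via standard back-pointers stored alongside the DP entries.
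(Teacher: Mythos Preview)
Your proposal is correct and is essentially the paper's own argument: the same firm-prefix/subset-of-workers DP with the recurrence $D[i][S]=\max_{T\subseteq S,\,|T|\le c_{f_i}}D[i-1][S\setminus T]\cdot\W_{f_i}(T)$ and the same $\sum_{S}2^{|S|}=3^m$ running-time bound. The only differences are cosmetic (you index from $i=0$ and enforce $T\neq\emptyset$ explicitly, justified via \Cref{prop:Zero_Nash}, whereas the paper starts at $i=1$ and lets $\W_{f_i}(\emptyset)=0$ handle emptiness implicitly), and you additionally address bit-complexity of the arithmetic, which the paper omits.
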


To prove \Cref{thm:3m}, we will use dynamic programming. Let the firms be indexed as $f_1,\dots,f_n$. Define a table $T$ with $n$ rows (one for each firm) and $2^m$ columns (one for each subset of workers). The entry $T(i,S)$ will contain the value of the maximum Nash \emph{product} achievable in a subinstance consisting of the first $i$ firms and the subset $S$ of the workers.

More concretely, for every subset $S\subseteq W$, the first row of the table can be computed as follows:
\begin{equation*}
    \begin{split}
    T[1,S] = \begin{cases} \W_{f_1}(S) & \text{ if } |S|\leq c_1, \\
    0 & \text{ otherwise.}
    \end{cases}
    \end{split}
\end{equation*}
Further, for any $i>1$ and any subset $S\subseteq W$, we have
\begin{equation*}
    T[i,S] = \max_{S'\subseteq S, |S'|\leq c_i} \W_{f_i}(S') \times T[i-1,S\setminus S'].
\end{equation*}

Once the entry $T[n,W]$ is computed correctly, we can use backtracking to obtain the Nash optimal matching. To compute the entry $T[i,S]$, we may need to check all subsets of $S$ in the worst case. Thus, the running time of the algorithm is $\OO( n\sum_{i=0}^m \binom{m}{i}2^i )$, or $\OO( n \cdot 3^m )$.

The running time of our DP algorithm can be improved if the capacity of each firm is bounded by a constant. In this case, we only need to consider subsets of constant size, implying that each entry of the table can be computed in polynomial time. \Cref{thm:param-m-const-capacity} formalizes this observation.

\begin{restatable}[\textbf{$\OO^\star(2^m)$ algorithm for constant capacity}]{theorem}{}
When every firm has a constant capacity, a Nash optimal matching can be computed in $\OO^\star(2^m)$ time.
     \label{thm:param-m-const-capacity}
\end{restatable}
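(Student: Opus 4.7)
The plan is to reuse the dynamic programming table $T[i,S]$ from the proof of \Cref{thm:3m} without modification, and simply re-analyze the running time under the assumption that every capacity is bounded by a constant $c_0$. Recall that $T[i,S]$ stores the maximum Nash product achievable in the subinstance consisting of the first $i$ firms and the worker subset $S$, and satisfies the recurrence
\[
T[i,S] \;=\; \max_{S'\subseteq S,\ |S'|\leq c_i} \, \W_{f_i}(S') \cdot T[i-1,\, S\setminus S'].
\]
The $\OO^\star(3^m)$ bound from \Cref{thm:3m} came entirely from the inner maximization, where a priori one must enumerate all $2^{|S|}$ subsets of~$S$. The key step here is to observe that when $c_i \leq c_0$, the only subsets that matter are those of size at most $c_0$, and the number of such subsets is
\[
\sum_{k=0}^{c_0}\binom{|S|}{k} \;\leq\; (|S|+1)^{c_0} \;=\; \OO(m^{c_0}),
\]
which is polynomial in $m$ for a fixed constant $c_0$.

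Next, I would bound the overall work. The table has $n \cdot 2^m$ entries. By our standing assumption that the instance admits a matching of nonzero Nash welfare, every firm must be matched with at least one worker, so $n \leq m$. Each entry is computed by taking the maximum over $\OO(m^{c_0})$ candidate subsets $S'$, with each candidate requiring a polynomial-time evaluation of $\W_{f_i}(S') = v_{f_i}(S') \cdot \prod_{w \in S'} v_w(f_i)$ and a single table lookup for $T[i-1, S \setminus S']$. Multiplying these factors gives a total running time of $n \cdot 2^m \cdot \OO(m^{c_0}) \cdot \text{poly}(m) = \OO^\star(2^m)$. After filling in the table, the optimal Nash product is read off from $T[n,W]$, and the corresponding matching is recovered by standard backtracking through the recurrence, which adds only a polynomial overhead.

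There is no real technical obstacle here; the argument is essentially a tighter accounting of the same DP. The only point that requires a small amount of care is confirming that the inner loop genuinely becomes polynomial rather than merely shrinking, which follows from the elementary bound $\binom{|S|}{\leq c_0} = \OO(m^{c_0})$ when $c_0$ is a constant. The correctness of the recurrence, the nonzero-Nash-welfare reduction $n \leq m$, and the handling of numerical issues (using $\W_{f_i}$ directly rather than its logarithm keeps everything in integer arithmetic and avoids the issue of $\log 0$) are all inherited verbatim from the proof of \Cref{thm:3m}.
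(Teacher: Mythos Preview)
Your proposal is correct and takes essentially the same approach as the paper: the paper explicitly states that the running time of the DP from \Cref{thm:3m} improves because, with constant capacities, only subsets of constant size need be enumerated, so each table entry can be computed in polynomial time. Your write-up simply fleshes out this observation with explicit bounds, and the argument that $n\leq m$ (hence the $n$ factor is absorbed by $\OO^\star$) is also noted earlier in the paper.
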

   

Next, we will present a parameterized approximation algorithm that, given any $\eps \in (0,1]$, finds a $(1+\eps)^{\nicefrac{-(n+1)}{m+n}}$-approximate solution in $\OO^\star(\nicefrac{1}{\eps^4} \cdot 2^m) $ time for arbitrary capacities. 
The algorithm uses the techniques of \emph{polynomial multiplication} and \emph{binning}. 

Given any instance~$\langle W,F,C,\V \rangle$, let $\mu$ and $\eta$ denote a Nash optimal matching and the maximum Nash product, respectively. 
For every $i\in [n]$, let ${\cal S}_i$ denote the set of all subsets of workers that can be feasibly assigned to the firm $f_i$, i.e., ${\cal S}_i$ contains all subsets of workers of size at most $c_i$. 

The basic idea is as follows: For each firm $f\in F$, guess $\W_f(\mu(f))$, remove the set $S$ from ${\cal S}_i$ if ${\W}_{f}(S)$ is not same as the guessed value, and then find $n$ disjoint sets, one from each ${\cal S}_i$. The disjoints sets can be found using polynomial multiplication in $\OO^\star(2^m)$ time. Unfortunately, this will not lead to the desired running time as guessing depends on the Nash product. Thus, we will create $(1+\eps)$-sized bins for the Nash product and obtain an approximation algorithm. 

Before we discuss our algorithm, let us introduce some relevant definitions. 
The {\em characteristic vector} of a subset $S\subseteq [m]$, denoted by $\chi(S)$, is an $m$-length binary string whose $i^{\textup{th}}$ bit is $1$ if and only if  $i \in S$.  Two binary strings of length $m$ are said to be \emph{disjoint} if for each $i\in [m]$, the $i^\textup{th}$ bits in the two strings are different. The {\em Hamming weight} of a binary string $S$, denoted by ${\cal H}(S)$, is the number of $1$s in the string $S$. A monomial $y^i$ is said to have Hamming weight $w$ if the degree term $i$, when represented as a binary string, has Hamming weight $w$. The {\em Hamming projection} of a polynomial $p(y)$ to $h$, denoted by ${\cal H}_{h}(p(y))$, is the sum of all the monomials of $p(y)$ which have Hamming weight $h$. 
The {\em representative polynomial} of $p(y)$, denoted by ${\cal R}(p(y))$, is a polynomial derived from $p(y)$ by changing the coefficients of all monomials contained in it to $1$. 
 
We begin with the following known result. 





\begin{restatable}[\citealp{DBLP:conf/ijcai/Gupta00T21,DBLP:journals/tcs/CyganP10}]{proposition}{HWDisjointness}
Subsets $S_{1}, S_{2} \subseteq W$ are disjoint if and only if Hamming weight of the monomial $x^{\chi(S_1)+\chi(S_2)}$ is $|S_{1}|+|S_{2}|$.
\label{cor:HW-disjointness}
\end{restatable}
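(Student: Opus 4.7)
The plan is a short direct argument based on how binary addition interacts with carries. First I fix notation: for any $S \subseteq [m]$, the characteristic vector $\chi(S)$ is identified with the nonnegative integer whose binary expansion has a $1$ in position $i$ iff $i \in S$, so that the multiplication of monomials $x^{\chi(S_1)} \cdot x^{\chi(S_2)}$ produces the exponent $\chi(S_1)+\chi(S_2)$ under ordinary integer addition. Under this convention, $\mathcal{H}(\chi(S)) = |S|$ for any $S \subseteq [m]$.

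For the forward direction, I would argue that if $S_1 \cap S_2 = \emptyset$, then in every bit position at most one of $\chi(S_1)$ and $\chi(S_2)$ contributes a $1$, so the integer addition $\chi(S_1) + \chi(S_2)$ produces no carries. Hence the resulting integer has binary expansion equal to the bitwise OR of $\chi(S_1)$ and $\chi(S_2)$, which is exactly $\chi(S_1 \cup S_2)$. Thus $\mathcal{H}(\chi(S_1)+\chi(S_2)) = |S_1 \cup S_2| = |S_1| + |S_2|$, since the sets are disjoint.

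For the reverse direction I would use the contrapositive together with the well-known identity $\mathcal{H}(a+b) = \mathcal{H}(a) + \mathcal{H}(b) - 2C(a,b)$, where $C(a,b)$ is the number of carries produced by the schoolbook addition of $a$ and $b$ in binary. Suppose $S_1 \cap S_2 \neq \emptyset$ and pick any $i \in S_1 \cap S_2$; then bit $i$ of both summands is $1$, so a carry is generated at position $i$, giving $C(\chi(S_1),\chi(S_2)) \geq 1$. Applying the identity yields $\mathcal{H}(\chi(S_1)+\chi(S_2)) \leq |S_1| + |S_2| - 2 < |S_1| + |S_2|$, which is the desired contrapositive.

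The only real subtlety, and the one I would make sure to justify carefully, is the carry-counting identity $\mathcal{H}(a+b) = \mathcal{H}(a) + \mathcal{H}(b) - 2C(a,b)$: intuitively, each carry merges two $1$s at bit position $i$ into a single $1$ at bit position $i+1$, decreasing the total Hamming weight by $2$ (possibly cascading further, but each individual carry contributes exactly $-2$ to the weight). A clean way to prove it is by induction on the sum of the bit-lengths of $a$ and $b$, tracking the carry bit bit-by-bit. With this identity in hand, both directions are immediate, and no additional machinery from matching or Nash welfare enters the proof.
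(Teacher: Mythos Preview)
The paper does not supply its own proof of this proposition; it is simply cited as a known result from \cite{DBLP:conf/ijcai/Gupta00T21,DBLP:journals/tcs/CyganP10}. So there is no ``paper proof'' to compare against, and your direct argument via binary carries is an appropriate and standard way to justify the statement.

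Your overall strategy is correct, but the carry-counting identity you state is off by a factor: the correct formula is
\[
\mathcal{H}(a+b) \;=\; \mathcal{H}(a) + \mathcal{H}(b) - C(a,b),
\]
not $-2C(a,b)$. To see this, at bit $i$ we have $a_i + b_i + c_i = 2d_i + s_i$ (with $c_i$ the carry-in, $d_i$ the carry-out, $s_i$ the output bit); summing over $i$ and using $\sum_i c_i = \sum_i d_i = C$ gives $\mathcal{H}(a)+\mathcal{H}(b)+C = 2C + \mathcal{H}(a+b)$. A quick sanity check: $a=b=1$ gives $\mathcal{H}(2)=1=1+1-1$ with one carry, whereas your formula would predict $0$. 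Your informal picture (``two $1$s merge into one $1$ higher up'') actually accounts for a net loss of one, not two, per carry. Fortunately this does not damage the argument: if $S_1\cap S_2\neq\emptyset$ then $C\geq 1$, so $\mathcal{H}(\chi(S_1)+\chi(S_2)) \leq |S_1|+|S_2|-1 < |S_1|+|S_2|$, which is exactly the contrapositive you need. With that constant corrected, your proof is complete.
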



Let us now state our main result.


\begin{restatable}[{\bf \fpt{} approximation scheme}]{theorem}{PolyMultThm}
   There is an algorithm that, given any matching instance and any $\eps \in (0,1]$, returns a $(1+\eps)^{\nicefrac{-(n+1)}{m+n}}$-Nash optimal matching in $\OO^\star(\nicefrac{1}{\eps^4} \cdot 2^m)$ time. 
   \label{thm:param-m-fptas}
\end{restatable}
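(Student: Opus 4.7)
The plan is to combine value-bucketing with the Hamming-projected polynomial multiplication framework already introduced, so that disjoint coverage of the workers is enforced by Hamming weight and the Nash product is approximated to within a $(1+\eps)^{n+1}$ factor by rounding each firm's contribution.

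First, for every firm $f_i$ and every feasible $S\in\mathcal{S}_i$, compute $\W_{f_i}(S)$ (which is $0$ unless $v_{w,f_i}>0$ for all $w\in S$ and $v_{f_i}(S)>0$) and set the bucket index $b_i(S)\coloneqq\lfloor\log_{1+\eps}\W_{f_i}(S)\rfloor$ when $\W_{f_i}(S)>0$; subsets with $\W_{f_i}(S)=0$ are discarded. This rounding loses at most a factor $(1+\eps)$ per firm, hence $(1+\eps)^n$ in the Nash product; an additional $(1+\eps)$ factor will be absorbed by the geometric grid used to guess the global target, giving $(1+\eps)^{-(n+1)}$ on the product and, after taking the $(m+n)$-th root, the claimed $(1+\eps)^{-(n+1)/(m+n)}$ approximation for Nash welfare.

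Second, encode each firm as a bivariate polynomial
\[
p_i(y,z) \;\coloneqq\; \sum_{S\in\mathcal{S}_i} y^{\chi(S)}\,z^{b_i(S)},
\]
where $\chi(S)$ is interpreted as an $m$-bit integer, and form $P(y,z)\coloneqq\prod_{i=1}^n p_i(y,z)$. By \Cref{cor:HW-disjointness}, a tuple $(S_1,\ldots,S_n)$ contributing to a monomial $y^a z^t$ of $P$ consists of pairwise disjoint sets whose union equals $W$ if and only if $a=\chi(W)$ and the Hamming weight of $a$ equals $\sum_i |S_i|=m$. I would therefore compute the Hamming projection $\mathcal{H}_m(P(y,z))$, read off the largest $t^\star$ for which the coefficient of $y^{\chi(W)}z^{t^\star}$ is non-zero, and backtrack through the product recurrence to recover an explicit matching whose Nash product is at least $(1+\eps)^{t^\star}$.

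For the running time, I would restrict attention throughout to the Hamming-weight-at-most-$m$ part in $y$ and truncate the $z$-degree to $\OO(1/\eps)$ by pre-guessing the global target Nash product on a geometric grid of ratio $(1+\eps)$. Each $y$-convolution at a fixed $z$-level can be carried out in $\OO^\star(2^m)$ time via the representative-polynomial trick underlying \Cref{cor:HW-disjointness} (re-binarising coefficients after each multiplication so that entry sizes do not blow up). Iterating over the $\OO(1/\eps^2)$ relevant pairs of (global-target guess, per-stage $z$-truncation threshold) and over the $\OO(1/\eps^2)$ non-trivial $z$-levels yields the stated $\OO^\star(\nicefrac{1}{\eps^4}\cdot 2^m)$ bound.

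The main obstacle I expect is calibrating the bucketing so that the polynomial overhead is exactly $1/\eps^4$ while still preserving the $(1+\eps)^{-(n+1)/(m+n)}$ guarantee. A naive implementation of the $z$-convolution either costs $\Omega(1/\eps)$ per firm, causing a $(1/\eps)^n$ blow-up, or loses an additional $(1+\eps)$ factor per firm and degrades the approximation ratio. Bounding the $z$-degree by $\OO(1/\eps)$ globally through the outer geometric guess, and reusing representative-polynomial re-binarisation so that the $y$-convolution dominates the cost, is the technically delicate step that ties the accuracy to the running time.
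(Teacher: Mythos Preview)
Your plan follows the same blueprint as the paper---bucket each firm's contribution on a $(1+\eps)$-geometric scale and use polynomial multiplication with Hamming projections to enforce disjointness---and your approximation accounting (one $(1+\eps)$ factor per firm plus one for the outer grid, hence $(1+\eps)^{-(n+1)}$ on the Nash product) matches Lemma~\ref{lem:fptas-correctness}. Two points, however, need correction.

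First, applying the projection ${\cal H}_m$ only to the final product $P(y,z)$ does not guarantee disjointness. The coefficient of $y^{\chi(W)}$ in $\prod_i p_i(y,z)$ receives contributions from \emph{every} tuple $(S_1,\ldots,S_n)$ with $\sum_i \chi(S_i)=\chi(W)$ as an \emph{integer} sum, and carries allow overlapping tuples to contribute: with $m=n=3$, the invalid tuple $S_1=S_2=\{1\}$, $S_3=\{1,3\}$ gives $1+1+5=7=\chi(W)$ and can land on a $z$-level strictly above every valid matching. The equivalence in \Cref{cor:HW-disjointness} requires the Hamming weight of the exponent to equal $\sum_i|S_i|$, but your product does not track $\sum_i|S_i|$. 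The paper fixes this by indexing the polynomials by the running size $s$ and applying ${\cal H}_s$ after \emph{each} multiplication (\Cref{eq:typej}); equivalently, you could introduce a third variable recording $|S|$ and read off the coefficient of $y^{\chi(W)}w^m$.

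Second, the $z$-degree cannot be truncated to $\OO(1/\eps)$. The bucket indices range over $\{0,\ldots,\lceil\log_{1+\eps}\eta\rceil\}$ with $\eta\le (m v_{\max})^{m+n}$, so $|Z|=\OO^\star(1/\eps^2)$ via $\log(1+\eps)\ge \eps^2/2$; no ``global-target guess'' lets you shrink this further, and once $z$ is present the extra outer guess is redundant anyway. The $1/\eps^4$ in the paper arises simply from two factors of $|Z|$: one for the number of thresholds $\ell$ and one for the inner sum over factorisations $\ell=\ell'\ell''$ in \Cref{eq:typej}.
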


\begin{proof} 
   Let $\I = \langle W,F,C, \V \rangle$ be the given instance, where $F=\{f_1,\dots,f_n\}$ and $W=\{1,\dots,m\}$ are the sets of firms and workers, respectively. 
   Let $\eta$ be the maximum possible Nash product for $\I$; thus, $\eta \leq (mv_{\max})^{m+n}$. Let $Z \coloneqq \{1,(1+\eps), (1+\eps)^2,\ldots,(1+\eps)^q,(1+\eps)^{q+1}\}$, where $q$ is the largest positive integer such that $(1+\eps)^q \leq \eta$. For every $j\in [n]$, $s\in [m]$, and $\ell \in Z$, we will construct a polynomial $p_{s,\ell}^j$ in which every nonzero monomial corresponds to an assignment of $s$ workers to the first $j$ firms $f_1,\dots,f_j$ such that the Nash product is at least $\ell$.   
   
   We will construct these polynomials $p_{s,\ell}^j$ iteratively. First, we will construct a polynomial $h_{s,\ell}^j$ in which every nonzero monomial corresponds to an assignment of some set $X \subseteq W$ of $s$ workers to the $j^\textup{th}$ firm $f_j$ such that $\W_{f_j}(X)\geq \ell$. 
   That is, for every $j\in [n]$, $s\in [c_j]$, and $\ell \in Z$,
   \begin{equation}
       h^j_{s,\ell} \coloneqq \sum_{\substack{X\subseteq W, |X|=s, \\ {\cal W}_{f_j}(X)\geq \ell}} y^{\chi(X)}.
   \label{eq:type1}
   \end{equation}

   Define $p^1_{s,\ell} \coloneqq h^1_{s,\ell}$. 
 %
%
 For every $j\in \{2,\ldots,n\}, s\in [m]$ and $\ell \in Z$, define
 \begin{equation}
       p^j_{s,\ell} \coloneqq {\cal R}\Big({\cal H}_s\Big(\sum_{\substack{s=s'+s'', s'\leq c_j, \\ \ell=\ell'\times \ell'', \ell',\ell''\in Z}} h_{s',\ell'}^j\times p_{s'',\ell''}^{j-1}\Big)\Big),
 \label{eq:typej}
 \end{equation}
where ${\cal R}(\cdot)$ is the representative polynomial and ${\cal H}_s(\cdot)$ is the Hamming projection of weight $s$. 
The ${\cal H}(\cdot)$ operator ensures disjointness due to~\Cref{cor:HW-disjointness}. The ${\cal R}(\cdot)$ operator is only required for the running time. 

We return the largest value of $\ell$ for which $p_{m,\ell}^n$ is nonzero. The corresponding matching can be found by backtracking. 

To argue correctness, we will prove that if the algorithm returns~$\ell^\star$, then $\eta \leq (1+\eps)^{n+1}\ell^\star$. Towards this, we show that if $(1+\eps)^q\leq \eta \leq (1+\eps)^{q+1}$, then $p^n_{m,\ell}$ is nonzero for some $\ell \in \{(1+\eps)^{q-n},\ldots, (1+\eps)^{q+1}\}$. Thus, we return a matching $\widetilde{\mu}$ with Nash product at least $(1+\eps)^{q-n}$. Since, $\eta \leq (1+\eps)^{q+1}$, it follows that $\eta \leq (1+\eps)^{n+1} \ell^\star$. Thus, $\W^\texttt{\textup{Nash}}(\mu) \leq (1+\eps)^{\frac{n+1}{m+n}}\W^\texttt{\textup{Nash}}(\widetilde{\mu})$, where $\mu$ is a Nash optimal matching. The detailed proof of correctness can be found in the appendix.

Since $|Z|=\OO(\log_{1+\eps} \eta)$, we compute $\OO(nm\log_{1+\eps} \eta)$ polynomials. Note that the degree of the polynomial 
is at most $2^m$ (as the the $m$-length binary vector in the polynomial can have all $1$s). Each polynomial can be computed in $\OO(m^2 \cdot 2^m \cdot \log \eta)$ time due to the following result and the fact that $s\leq m$ and $\ell \in Z$. 
\begin{restatable}[\citealp{moenck1976practical}]{proposition}{Moenck}
Two polynomials of degree $d$ can be multiplied in $\OO(d \log d)$ time.
\label{prop:polynomial-multiplication}
\end{restatable}
Since $\eta \leq (mv_{\max})^{m+n}$, $\log_{1+\eps} \eta \leq (m+n)\log_{1+\eps}(mv_{\max})$. By changing the base of logarithm from $(1+\eps)$ to $2$, and the fact that $\log(1+\eps)>\nicefrac{\eps^2}{2}$, for every $\eps \in (0,1]$, we get the running time of $\OO^\star(\nicefrac{1}{\eps^4} \cdot 2^m)$.
\end{proof}

\section{Restricted Domains}
\label{sec:Restricted_Domains}

In this section, we will design polynomial-time algorithms for restricted domains. 
%
%
%
Our first result is an algorithm for \emph{symmetric binary valuations}, which is when, for every worker-firm pair $(w,f) \in W \times F$, either $v_{w,f} = v_{f,w} = 1$ or $v_{w,f} = v_{f,w} = 0$. 

\begin{restatable}
{theorem}{SymmBinaryVals}
For symmetric binary valuations, a Nash optimal matching can be computed in polynomial time.
\label{thm:symm-Binary-Vals}
\end{restatable}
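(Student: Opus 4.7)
The plan is to cast the problem as a constrained fair-division instance with binary additive valuations and then apply a capacity-aware variant of the greedy Nash-welfare algorithm of \citet{BKV18greedy}. First, I would observe that under symmetric binary valuations, together with the standing assumption that the given instance admits a matching of nonzero Nash welfare, every Nash optimal matching $\mu$ assigns each worker $w$ to some firm $f$ with $v_{w,f} = v_{f,w} = 1$, so each worker's utility equals exactly $1$ and $u_f(\mu) = |\mu(f)|$ with $1 \leq |\mu(f)| \leq c_f$ and $\mu(f) \subseteq N(f)$, where $N(f)$ denotes the set of workers joined to $f$ by a $1$-valued edge in the bipartite graph $G$. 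The Nash product therefore reduces to $\prod_{f \in F} |\mu(f)|$, and the problem becomes: find a $b$-matching in $G$ that saturates every worker, leaves no firm empty, and maximizes the product of firm loads.

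Viewed as a fair-division instance, this is: firms are agents, workers are items, firm $f$ has the binary additive valuation $v_f(S) = |S \cap N(f)|$, and each bundle $A_f$ is subject to the cardinality cap $c_f$. A feasible starting allocation $\mu_0$ exists by the nonzero-Nash assumption and can be computed in polynomial time as a bipartite $b$-matching in $G$ that saturates every worker, with one additional worker pre-assigned to each firm to ensure nonemptiness. From $\mu_0$, I would repeatedly search for an improving \emph{transfer path}: an alternating sequence $w_0, f_0 = f, w_1, f_1, \ldots, w_\ell, f_\ell = f'$ in $G$ with $w_0 \in \mu(f)$ and $w_i \in \mu(f_{i-1}) \cap N(f_i)$ for $i \in [\ell]$, starting at a firm $f$ of large load and ending at a firm $f'$ of small load; reassignment along such a path strips $f$ of one worker, gives one to $f'$, and leaves every intermediate firm's load unchanged. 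Performing such a transfer whenever $|\mu(f)| \geq |\mu(f')| + 2$ strictly increases $\prod_h |\mu(h)|$ by strict concavity of $\log$; crucially, capacities are automatically respected since workers only migrate toward strictly less loaded firms.

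The main obstacle is to argue that this local procedure converges to a \emph{global} Nash optimum in polynomial time. For termination I would use the potential $\Phi(\mu) = \sum_f |\mu(f)|^2$, a nonnegative integer bounded by $\sum_f c_f^2$ that strictly drops by at least $2$ at every improving transfer, yielding a polynomial number of iterations; within each iteration, the existence (or not) of an improving transfer path can be decided in polynomial time by a BFS over the exchange digraph on firms. For global optimality I would compare the returned matching $\mu$ with a hypothetical Nash optimum $\mu^\star$: the symmetric difference $\mu \triangle \mu^\star$ decomposes into alternating walks and cycles in $G$, and if the load vectors of $\mu$ and $\mu^\star$ differ, then some such component certifies a firm $f$ with $|\mu(f)| > |\mu^\star(f)|$ joined to a firm $f'$ with $|\mu(f')| < |\mu^\star(f')|$ by a transfer path, contradicting local optimality. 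Hence at termination the load multiset of $\mu$ matches that of $\mu^\star$, so $\prod_f |\mu(f)| = \prod_f |\mu^\star(f)|$ and $\mu$ is Nash optimal.
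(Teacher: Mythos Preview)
Your plan is close to the paper's: both reduce the problem to maximizing $\prod_f |\mu(f)|$ over feasible assignments in the $\{0,1\}$-graph $G$ and then run a BKV-style transfer-path local search. Your termination argument via the potential $\Phi(\mu)=\sum_f |\mu(f)|^2$ is a clean alternative to the multiplicative-progress bound the paper imports from \citet{BKV18greedy}, and it works.

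The global-optimality step, however, has a real gap. From the decomposition of $\mu\triangle\mu^\star$ you only obtain a transfer path from some firm $f$ with $|\mu(f)|>|\mu^\star(f)|$ to some firm $f'$ with $|\mu(f')|<|\mu^\star(f')|$. You then say this ``contradicts local optimality,'' but your local search only rules out \emph{improving} paths, i.e.\ those with $|\mu(f)|\ge |\mu(f')|+2$. Nothing in your argument forces the endpoints to satisfy that inequality. Concretely, take $|\mu(f)|=3$, $|\mu^\star(f)|=2$, $|\mu(f')|=2$, $|\mu^\star(f')|=3$: the path exists, yet $|\mu(f)|=|\mu(f')|+1$ and the transfer is not improving, so no contradiction arises. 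To close the gap you must actually prove that suboptimality of $\mu$ guarantees an \emph{improving} path; this is precisely the nontrivial lemma the paper defers to \citet{BKV18greedy} (their Lemma~5.1), and it does not follow from a bare symmetric-difference decomposition. Your stated conclusion that the load \emph{multisets} must coincide is also stronger than what is needed (equal Nash products suffice) and is not what your argument would yield even if patched.

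A smaller issue: the claim that ``capacities are automatically respected since workers only migrate toward strictly less loaded firms'' is false. Along a transfer path only the terminal firm $f'$ gains a worker, and $|\mu(f')|<|\mu(f)|$ says nothing about $|\mu(f')|<c_{f'}$. You need to restrict the search to paths whose endpoint is under capacity (this is harmless for the comparison with $\mu^\star$, since $|\mu(f')|<|\mu^\star(f')|\le c_{f'}$, but it must be stated). The paper handles this by explicitly requiring feasibility in its definition of a ``good'' path and uses \Cref{prop:Zero_Nash} to obtain the initial nonzero-Nash matching.
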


We use the algorithm of~\citet{BKV18greedy} which maximizes Nash welfare in the one-sided fair division problem under binary valuations.
%
Starting with a suboptimal allocation, their algorithm greedily picks a sequence of item transfers between agents to improve the Nash objective. We follow the same strategy, with the difference that the initial matching in our case is chosen using the algorithm in~\Cref{prop:Zero_Nash} to ensure nonzero Nash welfare. 

Note that for general symmetric valuations, the problem remains \NPH{}~(\Cref{prop:Nash-NPhard-Symmetric-Vals}). Whether an efficient algorithm is possible for (possibly asymmetric) binary valuations is an interesting problem for future work.

We will now discuss 
bounded degree instances.

\begin{restatable}[]{theorem}{Combined}
A Nash optimal matching can be computed in polynomial time when:
\begin{enumerate}
    \item all agents have degree at most $2$,
    \item all firms have degree at most $3$ and exactly two workers need to be assigned to every firm,
    \item every worker values only one firm positively, and
    \item the number of firms and the number of distinct valuations are constant.
\end{enumerate}
Furthermore, when the number of firms is constant and the number of distinct valuations is logarithmically bounded in the input size (i.e., at most $\log \left( \textup{poly}(m,n,v_{\max}) \right)$), a Nash optimal matching can be computed in quasipolynomial time.
\label{thm:Combined-Restricted-Domains}
\end{restatable}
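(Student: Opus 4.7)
The theorem has four polynomial-time cases and a quasipolynomial extension. For case (1), let $G$ be the bipartite ``positive-edge'' graph on $W\cup F$ with an edge $\{w,f\}$ whenever either side positively values the other. Since every vertex of $G$ has degree at most $2$, $G$ decomposes into paths and even cycles, and any nonzero-Nash matching must use only edges of $G$. I would solve each component independently with a linear-time dynamic program that sweeps along the component, maintaining as state the residual capacity and a ``has-at-least-one-positive-worker'' flag for the current firm; summing the per-component log-Nash optima gives the global optimum. Case (3) is essentially free: each worker $w$ has a unique positive firm $f_w$, so any nonzero-Nash matching must assign $w \to f_w$. I would output this forced matching after checking $|\{w:f_w=f\}|\leq c_f$ for every $f$ and that each firm receives at least one positively-valued worker.

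For case (4), with a constant number of firms $n$ and at most $d$ distinct valuation values, each worker is characterized by a ``type'' --- the pair of length-$n$ valuation vectors $((v_w(f))_{f\in F},(v_f(w))_{f\in F})$ --- yielding at most $d^{2n}=\OO(1)$ distinct types. Because valuations are additive, the Nash product and all feasibility and positivity-of-utility constraints depend only on the integer counts $y_{f,t}\in\{0,\dots,m\}$ of type-$t$ workers sent to firm $f$; there are $n\cdot d^{2n}=\OO(1)$ such counts, giving at most $(m+1)^{n d^{2n}}=m^{\OO(1)}$ candidate profiles. I would enumerate all profiles, discard those violating type supply, firm capacity, or any nonzero-utility requirement, and return the maximizer. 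For the quasipolynomial extension, if $d=\OO(\log(\textup{poly}(m,n,v_{\max})))$ while $n$ is still constant, the same enumeration produces $m^{\textup{polylog}(m,n,v_{\max})}$ profiles, which is quasipolynomial in the input size.

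Case (2) is the most delicate. Each firm has capacity exactly $2$ and at most $3$ positive neighbors, so at most $\binom{3}{2}=3$ candidate pairs per firm. I would first apply natural forcing rules: if a firm has exactly two positive neighbors both are assigned; if a worker is left with a unique remaining positive firm their assignment is forced; if any firm is left with fewer than two candidates the instance is infeasible. On the ``core'' where every firm has exactly three positive neighbors, I would model the remaining problem as a structured constraint-satisfaction problem --- one ternary variable per firm with the constraint that each worker is covered by exactly one selected pair --- and reduce it to a polynomial-time-solvable matching variant that encodes the $\OO(1)$ per-firm pair-level utilities. The main obstacle is that each firm's log-utility $\log(v_{f,w_1}+v_{f,w_2})$ is the log of a \emph{sum}, so the Nash objective does not decompose additively over matched edges and a direct reduction to maximum-weight matching is unavailable; the crux will be exploiting the bounded-degree structure of the core to express the global optimum as a polynomial-time combinatorial problem despite this non-separability.
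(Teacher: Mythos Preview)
Your approaches to cases (1), (3), and (4) are essentially the paper's: path/cycle decomposition, forced assignment, and type/bucket enumeration respectively; the quasipolynomial extension of (4) is identical in spirit.

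Case (2) has a genuine gap. You correctly identify the obstacle---$\log(v_{f,w_1}+v_{f,w_2})$ is non-separable---but do not resolve it; your proposal ends with the hope that ``bounded-degree structure'' will somehow yield a tractable combinatorial problem. Your forcing rules (firms with exactly two neighbours, workers with a unique remaining firm) are not the ones that unlock the reduction; after them, two firms can still share two neighbours. The paper's key preprocessing step is different: whenever two firms $f,f'$ share exactly two neighbours $\{w_1,w_2\}$ (with remaining neighbours $w_3,w_4$), any feasible solution must satisfy $\mu(f)\cup\mu(f')=\{w_1,w_2,w_3,w_4\}$; solve this constant-size piece optimally and delete it. After exhausting this rule, any two workers share \emph{at most one} common firm.

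This is what dissolves the non-separability. Build a graph on the \emph{workers}, with an edge $\{w,w'\}$ whenever they share a (now unique) common firm $f$, weighted by $\log\W_f(\{w,w'\})=\log\bigl((v_{f,w}+v_{f,w'})\,v_{w,f}\,v_{w',f}\bigr)$. A maximum-weight perfect matching on this worker graph---computable in polynomial time---is exactly a Nash-optimal assignment: since $|W|=2|F|$ and each firm has degree at most $3$, the $|F|$ matched pairs necessarily hit $|F|$ distinct firms, and each edge weight already encodes the full (including the non-separable firm-side) contribution of its unique firm. The obstacle you flagged disappears once the firm is \emph{determined} by the worker pair, which is precisely what the reduction rule guarantees.
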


\begin{proofsketch}
    \begin{enumerate}
        \item A degree $2$ instance is a disjoint union of paths and cycles. Each cycle and path can be solved in polynomial time. 
    \item 
    We first apply reduction rules 
    to ensure that any two firms can have only one worker in their common neighborhood. This reduces the problem to finding a weighted perfect matching, which can be solved efficiently~\citep{DBLP:conf/soda/Gabow90}.
    %
    \item Since every worker values only one firm positively, it can only be assigned to this firm. 
    \item The idea is similar to the  algorithm for \Cref{thm:QPTAS-Constant-Firms}. For each firm, we create a bucket for each valuation. 
    Since the number of distinct valuations and firms is constant, 
    the algorithm runs in polynomial time. The running time is quasipolynomial when the number of distinct valuations is logarithmically-bounded in the input size.\qedhere
    \end{enumerate}
\end{proofsketch}


The first three results in \Cref{thm:Combined-Restricted-Domains} provide tractable subcases vis-\`{a}-vis the \NPH{}ness result in \Cref{thm:Nash-NPhard-Capacity-Two} for degree at most $3$ and capacity $2$. The fourth result is in contrast with \NPH{}ness for two firms when the number of distinct valuations can be large (\Cref{prop:Nash-NPhard-Symmetric-Vals}).

\section{Concluding Remarks}
\label{sec:Concluding_Remarks}
We have initiated a systematic study of the computation of Nash optimal many-to-one matchings under two-sided preferences. Our work contributes a variety of hardness and algorithmic results, spanning a broad range of techniques including polynomial multiplication, submodular fair division, rainbow perfect matching, etc.~(see \Cref{tab:Results}).

A number of intriguing open problems remain. Designing a constant-factor approximation algorithm for our problem and developing lower bounds in terms of natural parameters are two outstanding questions. Additionally, studying Nash optimal solutions in conjunction with established solution concepts such as \emph{stability} and \emph{popularity} will also be of interest (our hardness result in \Cref{thm:Nash-NPhard-Capacity-Two} extends to these concepts). Finally, analogous to the literature in fair division, it would be interesting to find axiomatic justification for the use of Nash welfare in the two-sided setting.

\section*{Acknowledgments}
We are grateful to the anonymous reviewers of AAAI 2024 for helpful comments. RV acknowledges support from DST INSPIRE grant no. DST/INSPIRE/04/2020/000107
and SERB grant no. CRG/2022/002621.

\bibliographystyle{plainnat} 
\bibliography{References}

\clearpage
\appendix
\begin{center}
   \LARGE{\textbf{Appendix}}
\end{center}

\section{Additional Related Work}
\label{subsec:Additional_Related_Work}

In this section, we will survey the related literature. We have divided the discussion into two parts, based on whether the literature deals with \emph{one-sided} (\Cref{subsec: Related-Work-One-sided}) or \emph{two-sided} preferences (\Cref{subsec: Related-Work-Two-sided}).

\subsection{Fairness under One-Sided Preferences}
\label{subsec: Related-Work-One-sided}

The literature on fairness under one-sided preferences, wherein a set of agents have preferences over a set of resources, is covered under the area of \emph{fair division}. The broad goal here is to divide the resources among the agents subject to satisfying some fairness property~(such as envy-freeness or proportionality) and/or optimizing some welfare measure~(such as utilitarian, Nash, or egalitarian welfare).

\paragraph{Divisible resources.} When the resources are \emph{divisible}, the problem of fair division is known under the name of \emph{cake cutting}~\citep{BT96fair,RW98cake,BCE+16handbook}. The special case of \emph{homogeneous} resources, wherein an agent derives equal utility from equal amounts of a given resource,  has received substantial interest within economic theory. An important result in this literature is that a fractional allocation that maximizes Nash social welfare induces a market equilibrium; in particular, such an allocation is envy-free and Pareto optimal. Further, a maximum Nash welfare allocation can be computed in polynomial time using a convex program~\citep{EG59consensus,V07combinatorial}.

\paragraph{Indivisible resources.} For \emph{indivisible} resources, the picture is more involved. An envy-free allocation may no longer exist, and as such a Nash optimal allocation fails to satisfy this property (though it continues to be Pareto optimal). Further, maximizing Nash welfare is \APXH{} under additive valuations~\citep{NNR+14computational,L17apx,GHM18approximating}. We refer the reader to \citep{AAB+23fair} for a survey on fair division with indivisible resources.

\paragraph{Algorithmic results.} For indivisible resources under additive valuations, a fully polynomial-time approximation scheme (FPTAS) is known for any fixed number of agents~\citep{NNR+14computational,GHM22tractable}. When the number of agents can be arbitrary, there are constant-factor approximation algorithms known~\citep{CG18approximating,CDG+17convex,BKV18finding}. Several works have studied approximation algorithms for more general classes of valuations, such as \emph{budget-additive}~\citep{GHM18approximating}, \emph{separable piecewise-linear concave}~\citep{AMG+18nash,CCG+22fair}, \emph{submodular}~\citep{GKK20approximating,GHL+22approximating}, \emph{XOS}~\citep{BKK+21sublinear}, and \emph{subadditive}~\citep{BBK+20tight,CGM21fair}. 

\paragraph{Local search algorithms.} 
Some of the approximation algorithms for Nash welfare are based on \emph{local search}~\citep{BKV18finding,GHL+22approximating}. Here, starting with a suboptimal allocation, the algorithm searches for a ``nearby'' allocation that improves upon certain objective (e.g., improving the Nash welfare or reducing the envy among the agents), and repeats until no further (significant) improvement can be made. One might expect these ideas to generalize to the two-sided setting. However, as the example in \Cref{fig:Motivating_Example} suggests, improving Nash welfare in the two-sided instance might come at the expense of worsening the welfare in its ``one-sided projection''. Thus, the techniques from the one-sided preferences literature may not straightforwardly apply to the two-sided problem.

\paragraph{Binary valuations.} For fair division under binary valuations, it is known that a Nash optimal allocation can be efficiently computed~\citep{DS15maximizing,BKV18greedy}. Our work shows a similar tractability result for \emph{symmetric binary valuations}~(\Cref{thm:symm-Binary-Vals}) in the two-sided matching problem, and leaves open the resolution of the case with (possibly asymmetric) binary valuations.

\paragraph{Fair division under cardinality constraints.}
Our model generalizes the problem of fair division with indivisible items in two ways: First, in our model, the `items' (equivalently, workers) also have preferences over the agents (i.e., the firms). Secondly, each agent has a constraint on the maximum number of items it can receive. By dropping the first but keeping the second generalization, we obtain the problem of \emph{fair division under cardinality constraints} as a special case of our model~\citep{BB18fair}. 

The literature on fair division under cardinality constraints has focused primarily on axiomatic properties such as (approximate) envy-freeness, maximin share, and Pareto optimality~\citep{BB18fair,HH22maximin,SHS23efficient}. However, the computational aspects of Nash welfare in this model have not been studied as extensively. An exception is the recent work of \citet{GNC+23towards}, who consider a model with two-sided cardinality constraints and one-sided preferences. In their work, each agent must be assigned at least (respectively, at most) a certain number of items. Additionally, for each item, at least (respectively, at most) a certain number of its copies must be distributed among the agents. The agents have preferences over the items but not vice versa. The goal is to find a Nash optimal allocation among all cardinality-constrained allocations. This problem is shown to be \NPH{} even when each agent receives exactly three items and there is exactly one copy of each item.

Nash welfare has also been studied under other types of constraints such as spending restrictions and earning limits in a market-based setting~\citep{CG18approximating,BGH+19earning}. We refer the reader to the article by \citet{DFS23fair} and the survey by \citet{S21constraints} for a more comprehensive overview of constraints in fair division.

\subsection{Fairness under Two-Sided Preferences}
\label{subsec: Related-Work-Two-sided}

Let us now turn our attention to the literature on fairness in the two-sided matching problem.

\paragraph{Fairness in the stable matching problem.} The literature on two-sided matchings originated from the seminal work of \citet{GS62college} on the stable matching problem. The deferred acceptance algorithm proposed in this work has had a remarkable theoretical and practical impact~\citep{RS92two}. However, the algorithm is also known to return solutions that strongly favor one side of the market at the expense of the other~\citep{GS62college,MW71stable}.

The unfairness of deferred acceptance algorithm, together with the immense practical applicability of stable matchings, has generated considerable interest in developing algorithms for finding \emph{fair} stable matchings. Several fairness concepts have been studied in conjunction with stability, including \emph{minimum regret}~\citep{K97stable}, \emph{egalitarian}~\citep{MW71stable,ILG87efficient},\footnote{In the stable matching literature, the term \emph{egalitarian matching} has been used for matchings that maximize the ``average satisfaction'' of the agents by minimizing the \emph{sum} of ranks of the matched partners~\citep{MW71stable,ILG87efficient}. This objective is different from \emph{egalitarian welfare}, which maximizes the utility of the least-happy agent. If the rank of matched partner of an agent is interpreted as its \emph{disutility} or \emph{cost}, then the objective in the egalitarian matching problem, in fact, turns out to be \emph{utilitarian} welfare.} \emph{median}~\citep{TS98geometry,STQ06many}, \emph{sex-equal}~\citep{GI89stable}, \emph{balanced}~\citep{F95stable,GRS+21balanced}, and \emph{leximin}~\citep{NBN22achieving}. 
%
%
%

The work of \citet{NBN22achieving}, in particular, is conceptually similar to ours. They study a many-to-one matching model under cardinal preferences (similar to our work), and analyze the leximin fairness criterion which lexicographically maximizes the utilities of agents, i.e., among all matchings, it considers those that maximize the utility of the minimum-utility agent, subject to which the second-highest utility is maximized, and so on. They show that finding a leximin-optimal stable matching is \NPH{} in general, but there is an efficient algorithm for \emph{ranked} valuations wherein agents have the same preference ordering over the other side. The leximin solution concept generalizes egalitarian welfare which only seeks to maximize the utility of the least-happy agent.

\paragraph{Fairness in other two-sided matching problems.} Some recent works have studied fairness in two-sided matchings beyond the stability requirement. \citet{FMS21twosided} study fairness for many-to-many matchings under two-sided preferences. Like our work, their fairness goals are also inspired from the fair division literature. However, the exact notions considered in their work differ from ours; specifically, they focus on approximate envy-freeness and maximin share. \citet{IKS+22fair} study many-to-one matchings satisfying approximate envy-freeness and balancedness properties. They additionally considers variants of the stability condition such as swap stability or individual stability.

Finally, \citet{BK22complexity} study a many-to-one matching problem without preferences. They consider a bipartite graph in which the vertices on the ``one'' side (i.e., the side with unit capacity) each have a color, and the goal is to find a matching that assigns to each agent on the ``many'' side an almost equal number of vertices of each color.


\section{Omitted Details from \Cref{sec:Preliminaries}}

\subsection{Additional Preliminaries}

\paragraph{Fixed-parameter tractable (\fpt{}) algorithm.} In parameterized algorithms, given an instance $I$ of a problem $\Pi$ and an integer $k$ (also known as the \emph{parameter}), the goal is to design an algorithm that runs in $f(k)\cdot |I|^{\OO(1)}$ time, where $f$ is an
arbitrary computable function depending on the parameter $k$. Such an algorithm is said to be \emph{fixed-parameter tractable} (\fpt{}) in the parameter $k$. 
For more details, 
we refer the reader to the texts by~\citet{ParamAlgorithms15b},~\citet{fg},~\citet{downey} and~\citet{DBLP:books/ox/Niedermeier06}. 

\paragraph{\fpt{} approximation scheme (FPT-AS).}
A \emph{fixed-parameter tractable approximation scheme} (FPT-AS)\footnote{Not to be confused with a fully polynomial-time approximation scheme (\textup{FPTAS}).} 
is an approximation algorithm that, given an instance $I$ of a maximization problem $\Pi$, an integer $k$, and any $\eps > 0$, returns a $\frac{1}{(1+\eps)}$-approximate solution in $f(k)\cdot |I|^{\OO(1)}$ time, where $f$ is an
arbitrary computable function depending on the parameter $k$. 

\subsection{Proof of \Cref{prop:Zero_Nash}}


Recall the statement of \Cref{prop:Zero_Nash}.

\ZeroNash*

The proof of \Cref{prop:Zero_Nash} uses two main technical ideas: First, it runs a linear program on a \emph{binarized} instance obtained by suitably adjusting the original valuations. Second, it uses the rounding technique of~\citet{BCK+13designing} (and its strengthening by~\citet{AFS+23best}) to show that the said linear program is feasible if and only if there exists a matching with nonzero Nash welfare in the \emph{original} instance. Thus, checking the existence of a matching with nonzero Nash welfare reduces to checking the feasibility of a linear program, which can be done in polynomial time.

Let us now describe the key notions that will be used in the proof of \Cref{prop:Zero_Nash}, starting with that of a \emph{binarized} instance. Given an input instance $\I = \langle W,F,C,\V \rangle$, the corresponding binarized instance $\I' = \langle W,F,C,\V' \rangle$ consists of the same set of workers and firms as the original instance $\I$ but with different valuations. Specifically, for every worker~$w$ and every firm $f \in F$, we have:
$$v'_{w,f} = 1 \text{ if and only if } v_{w,f} > 0, \text{ and }$$
$$v'_{f,w} = 1 \text{ if and only if } v_{f,w} > 0.$$
Otherwise, we set $v'_{w,f}$ and $v'_{f,w}$ to be $0$.

The binarized instance ``flattens'' out the preferences of the agents by replacing all nonzero valuations in the original instance $\I$ by~1. As we will see shortly, doing so helps in setting up our linear program in accordance with the framework of~\citet{BCK+13designing}. Note that a matching has positive Nash welfare in the original instance $\I$ if and only if it has positive Nash welfare in the binarized instance $\I'$. Thus, it suffices to focus only on the binarized instance~$\I'$.

Next, let us describe the linear program that we will run on the binarized instance. Consider the following linear program, which we call $\texttt{LP}$, that checks whether there is a \emph{fractional} many-to-one matching in the binarized instance under which each agent gets a nonzero utility:
\begin{alignat*}{2}
    \protect \text{maximize} \ \ \ & \ \ \ \ \ \ 0 & \ \ & \\
    \text{subject to} \ \ \ & \sum_{f \in F} \mu(w,f) \leq 1 && \forall w \in W,\\
    & \sum_{w \in W} \mu(w,f)  \leq c_f && \forall f \in F,\\
    & \sum_{f \in F} v'_{w,f} \cdot \mu(w,f)  \geq 1 && \forall w \in W,\\
    & \sum_{w \in W} v'_{f,w} \cdot \mu(w,f)  \geq 1 && \forall f \in F,\\
    & \mu(w,f) \geq 0 && \forall w \in W \text{ and } \forall f \in F.
\end{alignat*}
The first two constraints ensure that no more than one unit of each worker and no more than $c_f$ units of the firm $f$ are matched. The third and the fourth constraints ensure that, in the binarized instance, the utility of each worker and each firm is positive. Since the binarized instance has binary valuations, this condition is equivalent to asking that each agent's utility is at least 1. The fifth constraint ensures non-negativity of the matching variables.

The final technical ingredient that we will need is the rounding technique of~\citet{BCK+13designing}. We will describe their framework in the context of our linear program $\texttt{LP}$. Our exposition will closely follow that of~\citet{AFS+23best}.

Let $\mu$ be any fractional solution of $\texttt{LP}$. Consider the matrix representation of $\mu$ where the workers correspond to rows and the firms correspond to columns. The first constraint, namely $\sum_{f \in F} \mu(w,f) \leq 1 \text{ for all } w \in W$, can be seen as a ``row-wise'' constraint. Similarly, the second constraint that applies to each firm can be seen as ``column-wise''. More generally, we can define a \emph{constraint set} $S \in 2^{W \times F} $ consisting of a collection of worker-firm pairs. A \emph{constraint structure} $\mathcal{H}$ consists of all distinct constraint sets associated with $\texttt{LP}$.

A constraint structure is said to be a \emph{hierarchy} (or a laminar family) if any pair of constraint sets in it are either nested or disjoint. That is, for every $S,S' \in \mathcal{H}$, either $S \subset S'$ or $S' \subset S$ or $S \cap S' = \emptyset$. A constraint structure is said to be a \emph{bihierarchy} if it can be partitioned into two hierarchies, i.e., $\mathcal{H} = \mathcal{H}_1 \cup \mathcal{H}_2$ such that both $\mathcal{H}_1$ and $\mathcal{H}_2$ are hierarchies and $\mathcal{H}_1 \cap \mathcal{H}_2 = \emptyset$.

\citet{BCK+13designing} considered problems in which there is a quota constraint associated with each constraint set. Specifically, for each constraint set $S$, there is a \emph{lower quota} $\underline{q}_S$ and an \emph{upper quota} $\overline{q}_S$ such that
$$\underline{q}_S \leq \sum_{(w,f) \in S} \mu(w,f) \leq \overline{q}_S.$$

It is easy to check that our linear program \texttt{LP} has a bihierarchical constraint structure with quota constraints. Indeed, for every worker, the first, third, and fifth constraints together constitute one hierarchy. Similarly, for every firm, the second and fourth constraints together constitute the second hierarchy. The two hierarchies are mutually disjoint, and therefore together they constitute a bihierarchical constraint structure. Further, since the binarized instance has binary valuations, the constraints can be written in terms of lower and upper quotas.

A fractional solution $\mu$ of $\texttt{LP}$ is said to admit a \emph{feasible decomposition} if there exist integral many-to-one matchings $\mu^1,\mu^2,\dots,\mu^\ell$ such that:
\begin{itemize}
    \item $\mu$ can be written as a convex combination of $\mu^1,\mu^2,\dots,\mu^\ell$, i.e., $\mu = \sum_{k \in [\ell]} p_k \mu^k$ where $p_1,p_2,\dots,p_\ell$ are nonnegative numbers satisfying $\sum_{k \in [\ell]} p_k = 1$, and
    \item each $\mu^k$ is feasible, i.e., for every $k \in [\ell]$, we have $\underline{q}_S \leq \sum_{(w,f) \in S} \mu^k(w,f) \leq \overline{q}_S$.
\end{itemize}

\citet{BCK+13designing} showed that any fractional solution that satisfies bihierarchical quota constraints admits a feasible decomposition. (Note that this result already generalizes the well-known Birkhoff-von Neumann decomposition.) Furthermore, they showed that under any integral matching in the decomposition, the utility of each agent is ``not too far'' from its utility in the fractional matching. This result was strengthened by~\citet{AFS+23best} in the context of the fair division problem with \emph{one-sided} preferences. In \Cref{prop:Utility_Guarantee} below, we adapt the result of~\citet{AFS+23best} to the \emph{two-sided} matching problem. (The proof of \Cref{prop:Utility_Guarantee} follows by making straightforward modifications to the argument of~\citet{AFS+23best} and is therefore omitted.) We will write $u'_w$ and $u'_f$ to denote the utility of worker $w$ and firm $f$, respectively, in the binarized instance $\I'$.

\begin{restatable}[Adapted from~\citealp{AFS+23best}]{proposition}{UtilityGuarantee}
There is a polynomial-time algorithm that, given any fractional matching $\mu$ of the linear program \texttt{LP}, computes a feasible decomposition of $\mu$ into integral matchings $\mu^1,\mu^2,\dots,\mu^\ell$ such that for every $k \in [\ell]$, we have:
\begin{itemize}
    \item for any worker $w$, the utility of $w$ in the integral matching $\mu^k$ is equal to $1$, i.e., $u'_w(\mu^k) = 1$,
    \item for any firm $f$, if $u'_f(\mu^k) < u'_f(\mu)$, then there exists a worker $w \notin \mu^k(f)$ such that $\mu(w,f) > 0$ and $u'_f(\mu^k) + u'_f(\{w\}) > u'_f(\mu)$, and
    \item for any firm $f$, if $u'_f(\mu^k) > u'_f(\mu)$, then there exists a worker $w \in \mu^k(f)$ such that $\mu(w,f) < 1$ and $u'_f(\mu^k) - u'_f(\{w\}) < u'_f(\mu)$.
\end{itemize}
\label{prop:Utility_Guarantee}
\end{restatable}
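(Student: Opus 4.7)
The plan is to adapt the rounding machinery of \citet{BCK+13designing}, as sharpened by \citet{AFS+23best} for the binary-valuation setting, to our two-sided LP. First, I would rewrite \texttt{LP} as a bihierarchical quota problem. For each worker $w$, the coordinates $\{\mu(w,f)\}_{f \in F}$ give one hierarchy: the outer set $\{(w,f):f\in F\}$ carries upper quota $1$ and the nested inner set $\{(w,f):v'_{w,f}=1\}$ carries lower quota $1$. Symmetrically, for each firm $f$, the outer set $\{(w,f):w\in W\}$ carries upper quota $c_f$ and the nested inner set $\{(w,f):v'_{f,w}=1\}$ carries lower quota $1$. All quotas are integers, and the worker- and firm-indexed hierarchies touch disjoint constraint sets, so together they form a bihierarchy. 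Applying BCK+13 yields, in polynomial time, a convex combination $\mu = \sum_k p_k \mu^k$ into integral matchings $\mu^k$, each feasible for \texttt{LP}.

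Second, I would verify the worker and firm utility claims. For any worker $w$, combining $\sum_f \mu(w,f)\le 1$ with $\sum_f v'_{w,f}\mu(w,f)\ge 1$ and $v'_{w,f}\in\{0,1\}$ forces $\sum_f \mu(w,f)=1$ and $\mu(w,f)=0$ whenever $v'_{w,f}=0$. Every integral $\mu^k$ inherits these bounds, so $w$ is matched to exactly one firm $f$ with $v'_{w,f}=1$, giving $u'_w(\mu^k)=1$. For the firm near-miss claims, I would exploit the constructive form of BCK+13's rounding, which iteratively finds an alternating cycle (or path) on the fractional support $\{(w,f):0<\mu(w,f)<1\}$ and shifts mass along it until a coordinate becomes integral. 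Each such shift, restricted to a firm $f$, swaps in one fractional worker and swaps out another. Tracking the cumulative effect, any final deficit $u'_f(\mu^k)<u'_f(\mu)$ should be witnessable by a single positively-valued worker $w$ with $\mu(w,f)>0$ and $w\notin\mu^k(f)$: since $w$'s fractional mass at $f$ was strictly less than $1$ but $w$ contributes $1$ to $u'_f(\{w\})$, restoring $w$ would produce $u'_f(\mu^k)+u'_f(\{w\})>u'_f(\mu)$. The surplus case is fully symmetric, with a swapped-in worker $w\in\mu^k(f)$ satisfying $\mu(w,f)<1$.

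The main obstacle is precisely the step of showing that a \emph{single} witnessing worker suffices in each direction, rather than a merely additive bound across many swaps. A naive reading of BCK+13 gives only that the integral utility lies within $1$ of the fractional utility; the refinement of \citet{AFS+23best} upgrades this via a careful per-firm invariant. I plan to maintain, for each firm $f$, a pointer to the last fractional coordinate $(w,f)$ altered by the rounding algorithm: because each cycle alters $f$'s integer support by at most one pair (one in, one out), the final integer support differs from any intermediate fractional support through a chain of one-for-one swaps whose net per-firm deficit telescopes to a single boundary worker. This is the two-sided analogue of the AFS+23 argument, with the firm constraints playing the role of their agent constraints and the worker row-sum constraints replacing the item constraints; the rest of the adaptation is syntactic.
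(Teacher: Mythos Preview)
Your plan is aligned with the paper's treatment: the paper does not give a proof of this proposition at all, stating only that it ``follows by making straightforward modifications to the argument of~\citet{AFS+23best} and is therefore omitted.'' Your proposal is precisely an attempt to spell out that adaptation---casting \texttt{LP} as a bihierarchical quota system, invoking \citet{BCK+13designing} for the feasible decomposition, and then appealing to the per-agent near-miss refinement of \citet{AFS+23best}---so there is no divergence in approach to report.

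One comment on the sketch itself: your argument for the worker bullet is clean and complete, but the firm bullets are still at the level of a plan rather than an argument. In particular, the sentence ``since $w$'s fractional mass at $f$ was strictly less than $1$ but $w$ contributes $1$ to $u'_f(\{w\})$, restoring $w$ would produce $u'_f(\mu^k)+u'_f(\{w\})>u'_f(\mu)$'' does not by itself establish the strict inequality; you still need the structural fact (which is what AFS+23 actually proves) that the rounding never moves a firm's utility more than one positively-valued worker away from its fractional value. Your ``pointer/telescoping'' idea is in the right spirit, but when you write it up you will want to state and maintain the invariant precisely at each rounding step rather than rely on a post-hoc telescoping.
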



We are now ready to prove \Cref{prop:Zero_Nash}.

\begin{proof} (of \Cref{prop:Zero_Nash})
Our algorithm is the following: Return YES if the execution of \texttt{LP} returns a feasible (possibly fractional) solution, otherwise return NO. It is easy to see that the algorithm runs in polynomial time. Thus, we will proceed to showing its correctness.

Suppose there exists an integral matching in the binarized instance $\I'$ in which all agents derive positive utility, i.e., each agent has utility at least 1. Then such a matching will satisfy the linear program \texttt{LP}. Thus, it is safe for our algorithm to return YES.


We will now argue the converse: Whenever \texttt{LP} is feasible (equivalently, whenever our algorithm returns YES), there exists an integral matching in the binarized instance in which all agents derive positive utility.

Suppose $\mu$ is a feasible fractional solution of \texttt{LP}. Since \texttt{LP} has a bihierarchical constraint structure with quota constraints, we can apply the utility guarantee in \Cref{prop:Utility_Guarantee}.

From the first condition in~\Cref{prop:Utility_Guarantee}, it follows that in each integral matching in the decomposition of $\mu$, each worker's utility is equal to 1. From the second condition, it follows that if, in some integral matching $\mu^k$, the utility of firm $f$ is less than its utility in the fractional matching $\mu$, then the difference between the two utility values is \emph{strictly less} than $1$ (recall that the valuations are binary). We know that by the feasibility of $\mu$, the utility of firm $f$ under $\mu$ is at least $1$. Thus, it follows that the utility of firm $f$ under $\mu^k$ is strictly positive, as desired.
\end{proof}

As a corollary of \Cref{prop:Zero_Nash}, we obtain an algorithm that also efficiently identifies instances with nonzero \emph{egalitarian} welfare (defined as the utility of the least-happy agent).

\section{Omitted Proofs from \Cref{sec:Hardness_Results}}

This section will provide the complete proofs of the results mentioned in \Cref{sec:Hardness_Results} on the hardness of computing a Nash optimal matching. We will start by proving \Cref{lem:RainbowPerfectMatching-NPhard} which shows \NPH{}ness of a restricted version of the \RainbowPerfectMatching{} problem. Next, we will use this result to show \NPH{}ness of maximizing Nash welfare when each firm has capacity 2~(\Cref{thm:Nash-NPhard-Capacity-Two}). Finally, in \Cref{prop:Nash-NPhard-Symmetric-Vals}, we will use a reduction from \Partition{} to show \NPH{}ness of maximizing Nash welfare even for two identical firms and even under symmetric valuations.

\subsection{Proof of \NPH{}ness of Rainbow Perfect Matching}

The \RainbowPerfectMatching{} is defined as follows: The input to this problem consists of a bipartite multigraph $G = (X \cup Y,E)$. The vertex sets $X$ and $Y$ each consist of $r$ vertices (i.e., $|X| = |Y| = r$), and the edges in $E$ are partitioned into $r$ color classes $E_1,E_2,\dots,E_r$. Between any two vertices, there can be multiple edges but at most one edge of a given color. The goal is to determine if there exists a perfect matching in $G$ that has exactly one edge from each color class.

This problem is known to be \NPC{} even when each color class consists of at most two edges; see~\citep[Problem GT55]{GJ79computers} and~\citep{TIR78some}. However, the maximum degree of the graph in this construction can be large. In \Cref{lem:RainbowPerfectMatching-NPhard}, we show that the problem remains \NPH{} even when the maximum degree is three, albeit at the expense of having three (instead of two) edges in each color class. The degree bound is essential in showing the bound on the number of positively valued agents in~\Cref{thm:Nash-NPhard-Capacity-Two}.

\begin{restatable}{lemma}{}
\RainbowPerfectMatching{} is \NPC{} even for graphs where every vertex has degree three and there are exactly three edges of each color.
\label{lem:RainbowPerfectMatching-NPhard}
\end{restatable}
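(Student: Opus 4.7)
The plan is to reduce from the known \NPC{} variant of \RainbowPerfectMatching{} in which each color class contains at most two edges; this variant is the one mentioned in the paragraph preceding the lemma and attributed to \citet{TIR78some}. Given such an instance $G = (X \cup Y, E)$ with $|X|=|Y|=r$ and color classes $E_1, \ldots, E_r$, I would construct a bipartite multigraph $G'$ in which every vertex has degree exactly $3$ and every color class has exactly $3$ edges, such that $G'$ admits a rainbow perfect matching if and only if $G$ does. The reduction proceeds in two stages: first a degree-reduction step, then a color-class padding step, both implemented through small self-contained gadgets that introduce fresh colors.

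For the first stage, I would replace every vertex $v$ whose degree in $G$ exceeds $3$ by a \emph{selector gadget} consisting of $\deg(v)$ copies of $v$ on the same side of the bipartition, each inheriting one of the original edges (with its original color). The gadget is augmented with auxiliary vertices on the opposite side and \emph{linking} edges of fresh colors whose role is to force exactly one copy to be matched via its inherited edge, while the remaining copies are matched internally. The gadget must be designed so that every auxiliary vertex and every copy has total degree exactly $3$, and every newly introduced color class contains exactly $3$ edges; this can be achieved by assembling the gadget out of a small symmetric $3$-regular bipartite building block whose internal edges are colored in freshly introduced classes of cardinality exactly $3$.

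For the second stage, after degree reduction some original color classes may still carry only one or two edges. For each such short color class I would attach a dedicated padding gadget on brand-new dummy vertices, adding the missing edges of that color together with further edges of fresh colors, so that the gadget is self-contained: the dummy vertices admit a perfect local matching irrespective of which of the three edges of the (now three-edge) color class is selected, and every dummy vertex ends up with degree exactly $3$ while every fresh color class contains exactly $3$ edges. Correctness then follows in a standard way. Any rainbow perfect matching in $G$ extends uniquely through the selector and padding gadgets to one in $G'$; conversely, any rainbow perfect matching in $G'$, when restricted to the inherited edges and contracted along selector gadgets, induces a rainbow perfect matching in $G$.

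The hard part will be designing the selector and padding gadgets so as to simultaneously satisfy both constraints---degree exactly $3$ and color classes of cardinality exactly $3$---while preserving bipartiteness. Naive gadgets tend to leave low-degree loose ends or under-populated color classes, which then require further patching that can cascade. The cleanest approach seems to be to fix a single symmetric building block, verify by hand that it internally satisfies both constraints, and then show that a vertex of degree $d > 3$ can be handled by chaining $\OO(d)$ copies of the block, while a short color class can be handled by attaching $\OO(1)$ copies. Once this base block is in place, the correctness analysis reduces to checking that the block admits exactly two ``rainbow-compatible'' local matchings---one that commits a specific external edge and one that does not---which is a finite combinatorial check.
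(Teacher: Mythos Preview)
Your proposal has a genuine gap: you never construct the gadgets. The entire argument rests on the existence of a ``symmetric $3$-regular bipartite building block'' with fresh color classes of cardinality exactly $3$ that admits precisely two rainbow-compatible local matchings, and on a padding gadget with analogous properties. You describe these by a wish list but do not exhibit them, and you yourself flag this as ``the hard part.'' Until the block is written down and the two claimed local matchings are verified, there is no proof---only a plan whose feasibility is not established. In particular, the simultaneous constraints (bipartite, every vertex degree exactly $3$, every color class cardinality exactly $3$, exactly two rainbow-compatible internal completions, and the correct interface with the inherited edge) interact in non-obvious ways, and naive attempts will, as you note, leave loose ends that cascade.

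By contrast, the paper's proof avoids gadgetry entirely by reducing from \ThreeDPerfectMatching{} rather than from the two-edges-per-color variant of \RainbowPerfectMatching{}. Given a tripartite hypergraph $G=(X\cup Y\cup Z,E)$ in which every vertex has degree exactly $3$, one simply drops the $Z$-coordinate and uses it as a color: the bipartite multigraph on $X\cup Y$ has color classes $E_i=\{(x,y):(x,y,z_i)\in E\}$. Because every $z_i$ has degree $3$, every color class has exactly three edges; because every $x\in X$ and $y\in Y$ has degree $3$, so does every vertex of the reduced instance. Both target constraints come for free from the degree-$3$ restriction on \ThreeDPerfectMatching{}, and the equivalence is a two-line check. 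If you want a complete proof, this is the route to take.
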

\begin{proof}
We will show a reduction from \ThreeDPerfectMatching{} which is defined as follows: An instance of \ThreeDPerfectMatching{} consists of a tripartite graph $G = (X \cup Y \cup Z,E)$, where the vertex sets $X$, $Y$, and $Z$ have the same cardinality, i.e., $|X|=|Y|=|Z|=r$. Each edge in $E$ consists of a triple of vertices, one each from $X$, $Y$, and $Z$, i.e., $E \subseteq X \times Y \times Z$. The goal is to determine if there is a perfect 3D matching of $G$, i.e., if there is a subset $S \subseteq E$ such that each vertex in $X \cup Y \cup Z$ is adjacent to exactly one edge in $S$. It is known that \ThreeDPerfectMatching{} remains \NPC{} even when each vertex in $X \cup Y \cup Z$ is adjacent to exactly three edges in $E$~\cite[SP1]{GJ79computers}. We will use this restricted version of \ThreeDPerfectMatching{} in our proof. We will find it helpful to index the vertices in the set $Z$ as $z_1,z_2,\dots,z_r$.

Given a \ThreeDPerfectMatching{} instance with a graph $G = (X \cup Y \cup Z, E)$, we will construct an instance of \RainbowPerfectMatching{} as follows: Construct a bipartite multigraph $G' = (X \cup Y,E')$ where the edge set $E'$ is the union of $r$ color classes $E_1,E_2,\dots,E_r$. For every $i \in [r]$, the color class $E_i$ consists of all ordered pairs $(x,y) \in X \times Y$ such that $(x,y,z_i) \in E$, i.e., $E_i \coloneqq \{(x,y) \in X \times Y : (x,y,z_i) \in E\}$. We can assume that there are no repeated elements in $E$. Thus, between any pair of vertices $(x,y) \in X \times Y$, there is at most one edge of color $i$. Furthermore, since each vertex in the \ThreeDPerfectMatching{} instance has degree three, it follows that each vertex in \RainbowPerfectMatching{} instance also has degree three and that there are three edges of each color~(corresponding to each $z_i \in Z$).

We will now argue that \ThreeDPerfectMatching{} admits a YES instance if and only if \RainbowPerfectMatching{} does.

($\Rightarrow$) Suppose the \ThreeDPerfectMatching{} instance admits a solution $\mu$, i.e, $\mu$ is a perfect 3D matching in graph $G$. Consider a matching $\mu' \coloneqq \{(x,y) \in X \times Y: (x,y,z_i) \in E \text{ for some } z_i \in Z \}$ in the multigraph $G'$. Then, $\mu'$ is a valid perfect matching in $G'$ since each vertex in $X \cup Y$ is adjacent to exactly one edge in $\mu'$. Furthermore, all edges in $\mu'$ must have distinct colors since each edge in the 3D matching $\mu$ is adjacent to a distinct vertex in the set $Z$. Thus, the matching $\mu'$ is a valid rainbow perfect matching.

($\Leftarrow$) Suppose the \RainbowPerfectMatching{} instance admits a solution $\mu'$. Consider a 3D matching $\mu \coloneqq \{ (x,y,z_i) \in X \times Y \times Z: (x,y) \in \mu' \text{ has color } i\}$. Each vertex in $X \cup Y$ is adjacent to exactly one edge in $\mu'$, and the same holds for the 3D matching $\mu$ in the graph $G$. Furthermore, since all edges in $\mu'$ have distinct colors, each vertex in $Z$ is adjacent to exactly one edge in $\mu$. Thus, $\mu$ is the desired 3D perfect matching.
\end{proof}

\subsection{Proof of \Cref{thm:Nash-NPhard-Capacity-Two}}


\NPHardCapacityTwo*

\begin{proof} (of \Cref{thm:Nash-NPhard-Capacity-Two})
We will show a reduction from \RainbowPerfectMatching{}. Recall that an instance of \RainbowPerfectMatching{} consists of a bipartite multigraph $G = (X \cup Y,E)$. The vertex sets $X$ and $Y$ each consist of $r$ vertices (i.e., $|X| = |Y| = r$), and the edges in $E$ are partitioned into $r$ color classes $E_1,E_2,\dots,E_r$. Between any two vertices, there can be multiple edges but at most one edge of a given color. The goal is to determine if there exists a perfect matching in $G$ that has exactly one edge from each color class. From \Cref{lem:RainbowPerfectMatching-NPhard}, we know that \RainbowPerfectMatching{} remains \NPC{} even when the degree of every vertex in graph $G$ is three and there are three edges in each color class. 
We will use this restricted version of \RainbowPerfectMatching{} in our proof. Thus, in particular, we will assume that the edge set $E$ consists of $3r$ edges, three of each color.

Given a \RainbowPerfectMatching{} instance with a graph $G = (X \cup Y,E)$, we will construct a two-sided matching instance as follows: We have $2r$ \emph{main} workers $w_1,\dots,w_{2r}$, each corresponding to a distinct vertex in $X \cup Y$, and $3r$ \emph{main} firms, each corresponding to an edge in $E$. For every $i \in [r]$, there are three main firms, namely $f^i_1,f^i_2,f^i_3$, corresponding to the three edges in color class $E_i$.

In addition, there are $r$ \emph{dummy} firms $g_1,\dots,g_r$ and $3r$ \emph{dummy} workers. The dummy workers are partitioned into $r$ classes where each class corresponds to a distinct color. For every $i \in [r]$, the $i^\text{th}$ dummy class consists of three dummy workers, namely $d^i_1,d^i_2,d^i_3$.

We will write $N \coloneqq 3r + r + 2r + 3r = 9r$ to denote the total number of agents (i.e, main firms, dummy firms, main workers, and dummy workers).

\begin{figure}[t]
    \centering
    \small
    \begin{tikzpicture}[scale=0.84]
                \tikzset{firm/.style = {shape=rectangle,draw,inner sep=2.5pt}}
                \tikzset{dummyfirm/.style = {shape=rectangle,fill=black,draw,inner sep=2.5pt}}
                \tikzset{worker/.style = {shape=circle,draw,inner sep=1.5pt}}
                \tikzset{dummyworker/.style = {shape=circle,draw,fill=black,inner sep=1.5pt}}
                \tikzset{edge/.style = {solid}}
                \draw[draw=black] (0,3.5) rectangle ++(.5,1.4);
                \node (1) at (-2,4.2) {\color{black}{Class 1 main firms}};
                \node[] (0) at (0.25,3.3) {$\vdots$};
                \draw[draw=black] (0,1.5) rectangle ++(.5,1.4);
                \node (1) at (-2,2.2) {\color{black}{Class $r$ main firms}};
                \draw[draw=black] (0,0) rectangle ++(.5,1.4);
                \node (1) at (-2,0.7) {\color{black}{$r$ dummy firms}};
                \draw[draw=black] (2,0) rectangle ++(.5,1.4);
                \node (1) at (5.25,0.5) {\color{black}{Class $r$ dummy workers}};
                \node[] (0) at (2.25,1.8) {$\vdots$};
                \draw[draw=black] (2,2) rectangle ++(.5,1.4);
                \node (1) at (5.25,2.5) {\color{black}{Class 1 dummy workers}};
                \draw[draw=black] (2,3.9) rectangle ++(.5,1.6);
                \node (1) at (4.5,4.65) {\color{black}{$2r$ main workers}};
                \node[dummyfirm] (1) at (0.25,0.3) {};
                \node[] (0) at (0.25,0.8) {$\vdots$};
                \node[dummyfirm] (3) at (0.25,1.1) {};
                \node[firm] (4) at (0.25,1.85) {};
                \node[firm] (5) at (0.25,2.2) {};
                \node[firm] (6) at (0.25,2.55) {};
                \node[firm] (7) at (0.25,3.85) {};
                \node[firm] (8) at (0.25,4.2) {};
                \node[firm] (9) at (0.25,4.55) {};
                \node[dummyworker] (10) at (2.25,0.33) {};
                \node[dummyworker] (11) at (2.25,0.66) {};
                \node[dummyworker] (12) at (2.25,1) {};
                \node[dummyworker] (13) at (2.25,2.33) {};
                \node[dummyworker] (14) at (2.25,2.66) {};
                \node[dummyworker] (15) at (2.25,3) {};
                \node[worker] (16) at (2.25,4.2) {};
                \node[] (0) at (2.25,4.8) {$\vdots$};
                \node[worker] (18) at (2.25,5.2) {};
                \draw[edge] (9) to node [near start,fill=white,inner sep=0pt] (9218) {\scriptsize{$1$}} node [near end,fill=white,inner sep=0pt] (1829) {\scriptsize{$1$}} (18);
                \draw[edge] (9) to node [near start,fill=white,inner sep=0pt] (9216) {\scriptsize{$1$}} node [near end,fill=white,inner sep=0pt] (1629) {\scriptsize{$1$}} (16);
                \draw[edge] (9) to node [near start,fill=white,inner sep=0pt] (8215) {\scriptsize{$2$}} node [near end,fill=white,inner sep=0pt] (1528) {\scriptsize{$1$}} (15);
                \draw[edge] (1) to node [near start,fill=white,inner sep=0pt] (1211) {\scriptsize{$2$}} node [near end,fill=white,inner sep=0pt] (1121) {\scriptsize{$1$}} (11);
                \draw[edge] (3) to node [near start,fill=white,inner sep=0pt] (3213) {\scriptsize{$2$}} node [near end,fill=white,inner sep=0pt] (1323) {\scriptsize{$1$}} (13);
                \draw[edge] (7) to node [near start,fill=white,inner sep=0pt] (7213) {\scriptsize{$2$}} node [near end,fill=white,inner sep=0pt] (1327) {\scriptsize{$1$}} (13);
    \end{tikzpicture}
    \repeatcaption{fig:Nash-NPhard-Capacity-Two}{The reduced two-sided matching instance in the proof of \Cref{thm:Nash-NPhard-Capacity-Two}. Firms are denoted by squares and workers by circles. The shaded and unshaded nodes denote the dummy and the main agents, respectively.}
\end{figure}

The preferences of the agents are as follows (see \Cref{fig:Nash-NPhard-Capacity-Two}): 
\begin{itemize}
    \item Each main firm corresponding to an edge $e = (u,v)$ in graph $G$ values two main workers---namely, those corresponding to the vertices $u$ and $v$---each at $1$, and has value $0$ for the remaining main workers.

    \item For every $i \in [r]$ and $j \in \{1,2,3\}$, the main firm $f^i_j$ (i.e., the $j^\text{th}$ firm in color class $i$) values the dummy worker $d^i_j$ (i.e., $j^\text{th}$ dummy worker in color class $i$) at $2$, and has value $0$ for all other dummy workers. We will call $d^i_j$ the \emph{signature} dummy worker for the main firm $f^i_j$.

    \item For every $i \in [r]$, each dummy firm $g_i$ values the three dummy workers in the $i^\text{th}$ color class, namely $d^i_1,d^i_2,d^i_3$, at $2$, and has value $0$ for all other workers (main and dummy).

    \item Each main worker corresponding to a vertex $v$ in the graph $G$ values each firm corresponding to an edge $e \in E$ such that $v \in e$ at $1$ and the remaining main and dummy firms at $0$.

    \item For every $i \in [r]$, each of the three dummy workers in color class $i$, namely $d^i_1,d^i_2,d^i_3$, value the dummy firm $g_i$ at $1$, and all other main and dummy firms at $0$.
\end{itemize}

Observe that the preferences constructed above have a ``zero value symmetry'' structure, in that a firm $f$ values a worker $w$ at $0$ if and only if the worker $w$ values the firm $f$ at $0$. Therefore, if there is a matching with nonzero Nash social welfare, then all workers must have strictly positive utility, implying that no zero-valued edge is selected in such a matching.

The capacity of each firm is 2.

We will use the threshold $\theta \coloneqq 2^{\# \text{firms}/N} = 2^{4/9}$ to specify the decision version of the matching problem. That is, our goal is to determine if there is a matching with Nash social welfare at least $\theta$.

We will now argue that \RainbowPerfectMatching{} admits a YES instance if and only if the two-sided matching problem does.\newline

$(\Rightarrow)$ Suppose the \RainbowPerfectMatching{} instance admits a solution $\mu$, i.e., $\mu$ is a perfect matching in graph $G$ and all edges in $\mu$ have distinct colors. Consider a matching $\mu'$ in the reduced instance as follows: 
\begin{itemize}
    \item For each edge $e \in \mu$ such that $e = (u,v)$, we assign the main workers $u$ and $v$ to the main firm $e$. Since the matching $\mu$ has rainbow property, all main workers are assigned after this step and exactly one main firm from each color class is used.
    \item Next, for each edge $e \notin \mu$ (note that there are $2r$ such edges), we will assign to each main firm $e$ its signature dummy worker. This leaves exactly one signature dummy worker in each color class unassigned.
    \item The unassigned dummy worker in each color class is assigned to the corresponding dummy firm.
\end{itemize}

Note that the matching $\mu'$ assigns every worker (main or dummy) to exactly one firm, and each firm is assigned at most two workers. Thus, $\mu'$ is feasible.

Furthermore, each firm derives a utility of exactly $2$. Indeed, a main firm either gets two main workers or one dummy worker, and each dummy firm gets exactly one dummy worker. Similarly, each worker (main or dummy), by virtue by being assigned along a nonzero value edge, derives a utility of exactly $1$. There are $4r$ firms and $5r$ workers; thus $N = 9r$ agents overall. Therefore, the Nash welfare of $\mu'$ is $2^{4r/9r}$, which is equal to $\theta$, as desired.\newline

$(\Leftarrow)$ Suppose there is a matching $\mu'$ with Nash welfare at least $\theta > 0$. Thus, in particular, every agent (worker and firm) derives positive utility under $\mu'$.

By construction, if a worker (main or dummy) derives positive utility, it must have a utility of $1$. Thus, the Nash welfare of $\mu'$ must be $(\nicefrac{1}{N})^\text{th}$ power of the product of the utilities of the firms, where $N=9r$ is the total number of agents (workers and firms). In other words, the product of firms' utilities under $\mu'$ must be at least $2^{4r}$. Recall that the number of firms is $4r$. Thus, the geometric mean of firms' utilities under $\mu'$ is at least $2$.

We will now argue that under $\mu'$, the arithmetic mean of firms' utilities is exactly $2$. By the AM-GM inequality, this would imply that each firm's utility under $\mu'$ is exactly $2$, which, in turn, will give us the desired assignment, as we will show later.\footnote{If the arithmetic mean of a set of numbers equals their geometric mean, then the numbers under consideration must all be equal.}

To see why the arithmetic mean of firms' utilities is exactly $2$, observe that under $\mu'$, each dummy worker (of color class $i$) is either assigned to the $i^\text{th}$ dummy firm or to its signature main firm. In each case, each dummy worker contributes $2$ to the sum of firms' utilities. Similarly, any main worker $v$, corresponding to vertex $v$ in graph $G$, must be assigned to a main firm $e$ such that $v \in e$ in graph $G$, contributing $1$ to the sum of firms' utilities (see \Cref{fig:Nash-NPhard-Capacity-Two}). 

There are $3r$ dummy workers and $2r$ main workers. Therefore, the sum of firms' utilities must be $2\cdot 3r + 1 \cdot 2r = 8r$. There are $4r$ firms in total, and therefore the arithmetic mean of firms' utilities is $8r / 4r = 2$. As the arithmetic and the geometric mean of firms' utilities under $\mu'$ turn out to be equal, it must be that under $\mu'$, each firm's utility is \emph{exactly} $2$.

We will now use the fact that each firm's utility is exactly $2$ to infer the desired rainbow perfect matching. Note that the only way a dummy firm, say $g_i$, has utility $2$ is if it is assigned exactly one dummy worker out of $d^i_1,d^i_2,d^i_3$. This leaves two dummy workers out of $d^i_1,d^i_2,d^i_3$ to be assigned to the main firms.

Once again, recall that each agent is assigned to a firm that it has a nonzero value for. Thus, each dummy worker (in color class $i$) that is not assigned to a dummy firm must be assigned to its signature main firm. The said main firm will then have a utility of $2$ and therefore cannot be assigned any main workers that it values. This leaves $r$ main firms---one from each color class---that each derive a utility of $2$ exclusively from the $2r$ main agents. This is only possible if each main firm (corresponding to an edge $e = (u,v)$) is assigned the main workers corresponding to the vertices $u$ and $v$.

The above assignment naturally induces a perfect matching in the \RainbowPerfectMatching{} instance. Finally, since exactly one main firm from each color class is assigned two main workers, this matching also has the rainbow property, as desired.
\end{proof}

\subsection{Proof of \Cref{prop:Nash-NPhard-Symmetric-Vals}}


\SymmetricValsNPhard*

\begin{proof}(of \Cref{prop:Nash-NPhard-Symmetric-Vals})
We will show a reduction from \Partition{} which is known to be \NPH{}~\citep{GJ79computers}. An instance of \Partition{} consists of $m$ positive integers $a_1,\dots,a_m$. The goal is to determine if there exists a subset $S \subseteq [m]$ such that $\sum_{i \in S} a_i = \sum_{k \in [m] \setminus S} a_k$. Let $T \coloneqq \frac{1}{2} \sum_{i \in [m]} a_i$ denote the target sum. It is known that \Partition{} remains \NPC{} even when the given integers $a_1,\dots,a_m$ are distinct and the desired partitions are equisized, i.e., $|S| = m/2$~\citep[Theorem B.1]{S19fair}. In our reduction, we will assume the \Partition{} instance to have these properties.

Given a \Partition{} instance with positive integers $a_1,\dots,a_m$, we will construct a matching instance as follows: We will have $m$ workers $w_1,\dots,w_m$ and two firms $f_1,f_2$. Each firm $f \in \{f_1,f_2\}$ values the worker $w_i$ at $a_i$, and the worker $w_i$ values each firm at $a_i$; thus, the valuations are symmetric. Additionally, both firms have \emph{identical} valuations. The capacity of each firm is set to $c_1 = c_2 = m/2$.

We will use the threshold $\theta \coloneqq \left(T^2 \cdot \Pi_{i \in [m]} a_i \right)^\frac{1}{m+2}$ to specify the decision version of the matching problem. That is, our goal is to determine if there is a matching with Nash social welfare at least $\theta$.

We will now argue that the \Partition{} instance admits a YES instance if and only if the matching instance does.

$(\Rightarrow)$ Suppose the \Partition{} instance admits a solution $S$. Consider a matching $\mu$ where the workers corresponding to numbers in the set $S$ are matched with firm $f_1$ and the remaining workers are matched with firm $f_2$. The utility derived by worker $w_i$ is $a_i$, and the utility derived by firm $f_1$ (respectively, $f_2$) is the sum of numbers in the set $S$ (respectively, $[m] \setminus S$). By the partition property, each firm derives a utility equal to $T$. It is easy to check that the Nash welfare of this matching is equal to $\theta$, as desired.

$(\Leftarrow)$ Suppose there is a matching $\mu$ with Nash welfare at least $\theta$. Observe that, without loss of generality, no worker is unmatched under $\mu$ (otherwise, the Nash welfare can be weakly improved by assigning each unmatched worker to one of the firms). The utility derived by worker $w_i$ under $\mu$ is $a_i$. Thus, the product of workers' utilities under $\mu$ is $\Pi_{i \in [m]} a_i$.

For the matching $\mu$ to have Nash social welfare at least $\theta$, it must be that the product of firms' utilities under $\mu$ is at least $T^2$, i.e., $u_{f_1}(\mu) \cdot u_{f_2}(\mu) \geq T^2$. In other words, the geometric mean of firms' utilities must be \emph{at least}~$T$.

Since the firms' utilities are additive, we have that $u_{f_1}(\mu) + u_{f_2}(\mu) = \sum_{i \in \mu(f_1)} a_i + \sum_{k \in \mu(f_2)} a_k = \sum_{i \in [m]} a_i = 2T$. That is, the arithmetic mean of firms' utilities under $\mu$ is \emph{exactly} $T$.

By the AM-GM inequality, the geometric mean of a set of numbers is at most their arithmetic mean, and the two quantities are equal only when the numbers are all equal. Thus, it must be that $u_{f_1}(\mu) = u_{f_2}(\mu) = T$. Furthermore, since the capacity of each firm is $m/2$, we have that $|\mu(f_1)| = |\mu(f_2)| = m/2$.

Let $S \coloneqq \{i \in [m] : w_i \in \mu(f_1)\}$. It follows from the above argument that $\sum_{i \in S} a_i = T$ and $|S|=m/2$, thus inducing the desired balanced partition.
\end{proof}

The proof of \Cref{prop:Nash-NPhard-Symmetric-Vals} can be modified to show that computing a maximum Nash welfare matching is \NPC{} even when both workers and firms have \emph{strict} preferences (i.e., no ties), as follows: The worker $w_i$ values firm $f_1$ at $a_i$ and firm $f_2$ at $2 a_i$. The firm $f_1$ values worker $w_i$ at $a_i$ while the firm $f_2$ values it at $a_i/2^{m/2}$. As before, the capacity of each firm is $m/2$ and the threshold Nash welfare is $\theta$.

\StrictPrefsNPhard*

\section{Omitted Proofs from \Cref{sec:Approximation_Algorithms}}

\subsection{Proof of \Cref{thm:Nash-Approx-OPT-Dependent}}

Recall the statement of \Cref{thm:Nash-Approx-OPT-Dependent}.

\NashApproxOPTdependent*

To prove \Cref{thm:Nash-Approx-OPT-Dependent}, we will show that the task of designing an approximation algorithm for Nash welfare in the two-sided matching problem reduces to designing an approximation algorithm for \emph{utilitarian} welfare in the one-sided \emph{fair division} problem under \emph{submodular} valuations and \emph{matroid} constraints. 

Specifically, we will construct a fair division instance with monotone submodular valuations where the firms play the role of \emph{agents} and the workers play the role of \emph{items}; the detailed construction is described below. We will then show the following correspondence between \emph{allocations} in the fair division instance and \emph{induced matchings} in the original matching instance: The utilitarian welfare of an allocation in the fair division instance is equal to a scaled version of the log of Nash welfare of the induced matching in the original instance~(\Cref{prop:Utilitarian_Nash_Relationship}). This observation will allow us to use a known approximation algorithm for maximizing a submodular function under matroid constraints~(\Cref{prop:Submodular-Utilitarian-Approximation}) to provide approximation guarantees for Nash welfare in the matching problem.

Let us start by describing the construction of the associated fair division instance. Given any two-sided matching instance $\I = \langle W,F,C, \V \rangle$, we will construct a fair division instance $\I' = \langle [n], [m], \hat{\V} \rangle$ consisting of $n=|F|$ agents $a_1,a_2,\dots,a_n$ (wherein agent $a_i$ corresponds to firm $f_i$), $m=|W|$ items $g_1,g_2,\dots,g_m$ (where item $g_j$ corresponds to worker $w_j$), and a valuation profile $\hat{\V} = (\hat{v}_1,\hat{v}_2,\dots,\hat{v}_n)$ of \emph{modified} valuation functions for the agents, as defined below.

\begin{definition}[Modified valuation function]
For any fixed firm $f \in F$, let $a \in [n]$ be its corresponding agent in the fair division instance. Fix any set $X \subseteq W$ of workers, and let $Y \in [m]$ denote the corresponding set of items. The \emph{modified} valuation function of agent $a$ in the fair division instance is given by $\hat{v}_a: 2^{[m]} \rightarrow \mathbb{R}_{\geq 0}$, where, the value of agent $a$ for the items in $Y$ is defined as
$$\hat{v}_a(Y) \coloneqq \log \left( v_f(X) \cdot \prod_{w \in X} v_{w,f} \right).$$
Furthermore, $\hat{v}_a(\emptyset) \coloneqq 0$.\qed
%
\label{def:Modified_Valuation}
\end{definition}

An \emph{allocation} $A = (A_1,\dots,A_n)$ in the fair division instance is an $n$-partition of the items among the agents such that $\cup_{i \in [n]} A_i = [m]$ and for any pair of distinct agents $i,h \in [n]$, $A_i \cap A_h = \emptyset$.

Given any allocation $A$ in the fair division instance, one can naturally associate with it a matching in the two-sided matching instance, which we will call the \emph{induced} matching, as defined below.

\begin{definition}[Induced matching]
Given an allocation $A$, its \emph{induced matching} $\mu$ is defined as follows: If the item $g_j$ is assigned to agent $a_i$ in the fair division instance $\I'$, then the corresponding worker $w_j$ is assigned to the corresponding firm $f_i$ under $\mu$ in the original matching instance $\I$. That is,
$$ g_j \in A_i \Rightarrow w_j \in \mu(f_i).$$\qed
\label{def:Induced_Matching}
\end{definition}

\paragraph{Value queries.} The modified valuation function is defined over all $2^m$ subsets of items. In order to ensure that the size of the input to our algorithm in the fair division instance is not exponential in $m$, we will assume that the valuation function can be accessed via \emph{value queries}. This query assumes oracle access to the value $v_i(S)$ of any given subset of items $S \subseteq [m]$ of items and any agent $i \in [n]$. Observe that the form of the modified valuation function in \Cref{def:Modified_Valuation} allows us to simulate any such query in polynomial time.

The specific form of modified valuations in \Cref{def:Modified_Valuation} gives rise to the following relationship between the utilitarian welfare of an allocation and the Nash welfare of its induced matching.

\begin{restatable}{proposition}{}
Let $A$ be any allocation in the fair division instance $\I'$ and let $\mu$ be the matching induced by $A$ in the original instance $\I$. Then, the utilitarian welfare of $A$ is a scaled version of the log of Nash welfare of $\mu$, i.e.,
$$\W^\texttt{\textup{util}}(A) = (n+m) \log \W^\texttt{\textup{Nash}}(\mu).$$
\label{prop:Utilitarian_Nash_Relationship}
\end{restatable}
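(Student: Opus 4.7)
The plan is to verify the identity by unwinding both sides and matching terms. The right-hand side, by definition of Nash welfare, equals $\log \prod_{i \in W \cup F} u_i(\mu)$, so the task reduces to showing that $\W^\texttt{\textup{util}}(A)$ equals the same log-of-product expression. I expect no real obstacle here; the proof is essentially a direct computation, with the only care needed being the bookkeeping that shows how the worker-side and firm-side factors emerge from summing the modified valuations.

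First I would expand the utilitarian welfare using \Cref{def:Modified_Valuation}: writing $\mu(f_i)$ for the set of workers corresponding to $A_i$ under the induced matching (\Cref{def:Induced_Matching}), we get
\[
\W^\texttt{\textup{util}}(A) \;=\; \sum_{i \in [n]} \hat{v}_{a_i}(A_i) \;=\; \sum_{i \in [n]} \log\!\left( v_{f_i}(\mu(f_i)) \cdot \prod_{w \in \mu(f_i)} v_{w,f_i} \right).
\]
Converting the sum of logs into a log of a product, this rearranges to $\log\!\left(\prod_{i \in [n]} v_{f_i}(\mu(f_i)) \cdot \prod_{i \in [n]} \prod_{w \in \mu(f_i)} v_{w,f_i}\right)$.

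Next I would identify the two inner products with products of agents' utilities in the matching instance. Since each firm $f$ has additive valuations, $v_{f}(\mu(f)) = u_{f}(\mu)$, giving $\prod_{i \in [n]} v_{f_i}(\mu(f_i)) = \prod_{f \in F} u_f(\mu)$. For the worker-side product, the key observation is that $A$ is an $n$-partition with $\bigcup_{i \in [n]} A_i = [m]$, so every worker is matched to exactly one firm under $\mu$. Consequently, each factor $v_{w,f_i}$ with $w \in \mu(f_i)$ equals $u_w(\mu)$, and every worker appears exactly once across the double product:
\[
\prod_{i \in [n]} \prod_{w \in \mu(f_i)} v_{w,f_i} \;=\; \prod_{w \in W} v_{w, \mu(w)} \;=\; \prod_{w \in W} u_w(\mu).
\]

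Combining the two, $\W^\texttt{\textup{util}}(A) = \log\!\left( \prod_{f \in F} u_f(\mu) \cdot \prod_{w \in W} u_w(\mu) \right) = \log \prod_{i \in W \cup F} u_i(\mu)$. Finally, from the definition $\W^\texttt{\textup{Nash}}(\mu) = \big(\prod_{i \in W \cup F} u_i(\mu)\big)^{1/(n+m)}$, we have $\log \prod_{i \in W \cup F} u_i(\mu) = (n+m)\log \W^\texttt{\textup{Nash}}(\mu)$, yielding the claimed equality. Note that the positivity assumption on valuations (inherited from the setting of \Cref{thm:Nash-Approx-OPT-Dependent}) ensures all logarithms are well-defined throughout.
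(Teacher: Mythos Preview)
Your proof is correct and follows essentially the same direct computation as the paper: expand $\W^\texttt{\textup{util}}(A)$ via \Cref{def:Modified_Valuation}, convert the sum of logs into a log of a product, and identify the result with $(n+m)\log \W^\texttt{\textup{Nash}}(\mu)$. Your version is slightly more explicit in separating the firm-side and worker-side factors and in invoking the partition property of $A$ to justify that each worker appears exactly once, but the underlying argument is identical.
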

\begin{proof} (of \Cref{prop:Utilitarian_Nash_Relationship}) The utilitarian welfare of allocation $A$ is given by
$$\W^\texttt{\textup{util}}(A) = \sum_{i \in [n]} \hat{v}_i(A_i).$$
By definition of modified valuation function~(\Cref{def:Modified_Valuation}) and induced matching $\mu$~(\Cref{def:Induced_Matching}), we have that
$$\W^\texttt{\textup{util}}(A) = \sum_{i \in [n]} \log \left( v_i(\mu(i)) \cdot \prod_{j \in \mu(i)} v_{j,i} \right),$$
where $v_i(\mu(i))$ is firm $f_i$'s valuation for the set of workers~$\mu(i)$ corresponding to the items in the set $A_i$.

The right hand side in the above equality can be rewritten as
\begin{align*}
    \sum_{i \in [n]} & \log \left( v_i(\mu(i)) \cdot \prod_{j \in \mu(i)} v_{j,i} \right) \\
    & = \log \left( \prod_{i \in [n]} \left( v_i(\mu(i)) \cdot \prod_{j \in \mu(i)} v_{j,i} \right) \right) \\
    & = (n+m) \cdot \log \left( \prod_{i \in [n]} \left( v_i(\mu(i)) \cdot \prod_{j \in \mu(i)} v_{j,i} \right) \right)^{\nicefrac{1}{n+m}} \\
    & = (n+m) \cdot \log \W^\texttt{\textup{Nash}}(\mu).
\end{align*}
\end{proof}

We will now show that the modified valuation function $\hat{v}$ is monotone and submodular~(\Cref{prop:Modified_Valuation_Submodular}). 
This observation requires the valuations to be strictly positive.

\begin{restatable}{proposition}{}
The modified valuation function $\hat{v}$ is monotone and submodular.
\label{prop:Modified_Valuation_Submodular}
\end{restatable}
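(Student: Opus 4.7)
My plan is to decompose $\hat{v}_a(Y)$ into two pieces, each with a transparent submodularity structure, and then conclude via the fact that a sum of submodular functions is submodular.

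Since each firm has additive valuations, $v_f(X) = \sum_{w \in X} v_{f,w}$. Substituting into \Cref{def:Modified_Valuation} and splitting the logarithm, for every non-empty bundle $Y$ with associated worker set $X$,
\[
\hat{v}_a(Y) \;=\; \underbrace{\log\!\Big(\sum_{w \in X} v_{f,w}\Big)}_{=:\, h_1(Y)} \;+\; \underbrace{\sum_{w \in X} \log v_{w,f}}_{=:\, h_2(Y)}.
\]
Here $h_2$ is modular (an additive function of $Y$), while $h_1$ is the composition of the concave function $\log(\cdot)$ with the modular function $m(Y) \coloneqq \sum_{w \in X} v_{f,w}$.

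For monotonicity, I would use that all valuations are positive integers, so $v_{f,w} \geq 1$ and $v_{w,f} \geq 1$. This forces $h_2$ to be non-decreasing (non-negative weights) and $h_1$ to be non-decreasing (monotonicity of $\log$ and of $m$). The boundary case is handled by observing $\hat{v}_a(\emptyset) = 0 \leq \log 1 + 0 \leq h_1(Y) + h_2(Y) = \hat{v}_a(Y)$ for every non-empty $Y$.

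For submodularity, $h_2$ is modular and hence trivially submodular. For $h_1$, I would invoke the standard fact that \emph{concave composed with modular is submodular}: for any $A, B \subseteq [m]$, modularity of $m$ gives $m(A) + m(B) = m(A \cup B) + m(A \cap B)$, while $m(A \cap B) \leq m(A), m(B) \leq m(A \cup B)$; concavity of $\log$ on this majorized pair yields $h_1(A) + h_1(B) \geq h_1(A \cup B) + h_1(A \cap B)$. Adding the modular equality for $h_2$ gives the submodular inequality for $\hat{v}_a$.

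The main subtlety to address is the boundary behaviour at $\emptyset$. The natural extension from the formula would set $h_1(\emptyset) = \log 0 = -\infty$, whereas the stated convention $\hat{v}_a(\emptyset) = 0$ is used so that $\hat{v}_a$ is real-valued. I would argue that this convention is harmless for the downstream use in \Cref{thm:Nash-Approx-OPT-Dependent}: the assumption of nonzero optimal Nash welfare guarantees that in every relevant (positive-welfare) matching each firm is matched with at least one worker, so the submodular inequality is needed only on the sublattice of non-empty bundles, where the argument above applies without modification. I expect this boundary-handling step to be the only mildly non-routine part of the argument; the decomposition and the concave-of-modular fact are the heart of the proof.
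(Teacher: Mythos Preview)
Your decomposition is correct and takes a genuinely different route from the paper. The paper proves submodularity by a direct marginal-gain computation: it expands $\hat{v}_i(T\cup\{j\})-\hat{v}_i(T)=\log\bigl((1+v_{f,j}/v_f(T))\,v_{j,f}\bigr)$ and observes this is non-increasing in $T$ because $v_f(\cdot)$ is additive and positive. Your approach---splitting $\hat{v}_a$ into a modular term $h_2$ and a concave-of-modular term $h_1$, then invoking the standard fact that a concave function composed with a nonnegative modular function is submodular---is cleaner and makes the structural reason (concavity of $\log$) explicit; the paper's calculation is essentially the single-element specialization of your majorization step.

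On the $\emptyset$ boundary: you are right to flag it, and it is not merely cosmetic. With the convention $\hat{v}_a(\emptyset)=0$ the function is \emph{not} submodular in general: taking all valuations equal to $1$, $S=\emptyset$, $T=\{w\}$, and $j\notin T$, the marginal at $T$ is $\log 2$ while the marginal at $S$ is $0$. The paper's own proof shares this gap---it only dismisses the case $T=\emptyset$ and then applies a formula dividing by $v_f(S)$, which is ill-defined at $S=\emptyset$. Your proposed repair, namely restricting to the sublattice of non-empty bundles (justified by the nonzero-Nash assumption forcing each firm to receive at least one worker in any relevant matching), is the correct fix and is in fact more careful than the paper on this point.
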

\begin{proof} (of \Cref{prop:Modified_Valuation_Submodular}) Let us start by proving monotonicity. Consider any pair of subsets of items $S, T \subseteq [m]$ such that $S \subseteq T$. We will show that for any agent $i \in [n]$, $\hat{v}_i(S) \leq \hat{v}_i(T)$.

Let $f$ denote the firm corresponding to agent $i \in [n]$. For any subset of items $T \subseteq [m]$, we will overload notation to denote the set of corresponding workers also by $T$.

The desired condition is trivially satisfied if $S = \emptyset$. So, let us assume that $S$ is nonempty. By definition of modified valuation function~(\Cref{def:Modified_Valuation}), we have that
\begin{align*}
    \hat{v}_i(T) & = \log \left( v_f(T) \cdot \prod_{w \in T} v_{w,f} \right) \\
    & = \log \left( v_f(T) \cdot \prod_{w \in S} v_{w,f} \prod_{w \in T \setminus S} v_{w,f} \right) \\
    & = \log \left( ( v_f(S) + v_f(T \setminus S) ) \cdot \prod_{w \in S} v_{w,f} \prod_{w \in T \setminus S} v_{w,f} \right) \\
    & \geq \log \left( ( v_f(S) ) \cdot \prod_{w \in S} v_{w,f} \right),
\end{align*}
where the third equality uses additivity of firms' valuations, and the  inequality uses the facts that the firms' valuations are nonnegative and the workers in $T \setminus S$ have positive valuations. Throughout, the $\log(\cdot)$ terms are well-defined because $S$ is nonempty and firms' valuations are assumed to be strictly positive. This establishes that $\hat{v}_i$ is monotone.

Let us now argue that $\hat{v}_i$ is submodular. That is, we need to prove the following: For any subsets of items $S, T \subseteq [m]$ such that $S \subseteq T$ and for any item $j \notin T$, we have
$$\hat{v}_i(T \cup \{j\}) - \hat{v}_i(T) \leq \hat{v}_i(S \cup \{j\}) - \hat{v}_i(S).$$

If $T = \emptyset$, then the condition follows trivially. So, let us assume that $T$ is nonempty. From~\Cref{def:Modified_Valuation}, it follows that
\begin{align*}
\hat{v}_i&(T \cup \{j\}) - \hat{v}_i(T) \\
& = \log \left( v_f(T \cup \{j\}) \cdot \prod_{w \in T \cup \{j\}} v_{w,f} \right) - \\
& \hspace{1.5in} \log \left( v_f(T) \cdot \prod_{w \in T} v_{w,f} \right) \\
& = \log \left( \frac{v_f(T \cup \{j\})}{v_f(T)} \cdot v_{j,f} \right) \\
& = \log \left( \left( 1 + \frac{v_{f,j}}{v_f(T)} \right) \cdot v_{j,f} \right),
\end{align*}
where the third equality uses additivity of firms' valuations, and, as before, the $\log(\cdot)$ terms are well-defined because of positive valuations.

Similarly, we get that
$$\hat{v}_i(S \cup \{j\}) - \hat{v}_i(S) = \log \left( \left( 1 + \frac{v_{f,j}}{v_f(S)} \right) \cdot v_{j,f} \right).$$

Since the firms' valuation are nonnegative and additive, we have that $v_f(S) \leq v_f(T)$. This readily gives the desired condition for submodularity of $\hat{v}_i$.    
\end{proof}

Next, we will discuss the transfer of approximation guarantees between the fair division and matching instances.

\begin{restatable}{proposition}{}
Let $A$ be an $\alpha$-utilitarian optimal allocation in the fair division instance with modified valuation functions $\hat{v}_1,\dots,\hat{v}_n$. Then, the many-to-one matching induced by $A$ in the original two-sided matching instance is $\frac{1}{\opt^{1-\alpha}}$-Nash optimal, where $\opt$ is the maximum Nash welfare achieved by any matching.
\label{prop:Approximation_Transfer}
\end{restatable}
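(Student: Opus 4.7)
The plan is to leverage \Cref{prop:Utilitarian_Nash_Relationship}, which gives an exact log-linear correspondence between utilitarian welfare in the fair division instance and Nash welfare in the induced matching. Because modified valuations are defined via $\log$ and capacity constraints can be encoded as a partition matroid, there is a bijection between feasible allocations $A$ (subject to the matroid constraint) and feasible many-to-one matchings $\mu$, and this bijection preserves the identity $\W^{\texttt{\textup{util}}}(A) = (n+m)\log \W^{\texttt{\textup{Nash}}}(\mu)$.

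First, I would let $\mu^\star$ be a Nash optimal matching with $\W^{\texttt{\textup{Nash}}}(\mu^\star) = \opt$, and let $A^\star$ be the unique allocation in the fair division instance that induces $\mu^\star$. Applying \Cref{prop:Utilitarian_Nash_Relationship} to the pair $(A^\star,\mu^\star)$ gives $\W^{\texttt{\textup{util}}}(A^\star) = (n+m)\log \opt$. Conversely, every feasible allocation $A'$ induces a feasible matching $\mu'$ with $\W^{\texttt{\textup{Nash}}}(\mu') \le \opt$, so, again by \Cref{prop:Utilitarian_Nash_Relationship}, $\W^{\texttt{\textup{util}}}(A') \le (n+m)\log \opt$. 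Hence the optimal utilitarian welfare in the fair division instance is exactly $(n+m)\log \opt$.

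Now apply the hypothesis that $A$ is $\alpha$-utilitarian optimal and use \Cref{prop:Utilitarian_Nash_Relationship} on the pair $(A,\mu)$:
\begin{equation*}
(n+m)\log \W^{\texttt{\textup{Nash}}}(\mu) \;=\; \W^{\texttt{\textup{util}}}(A) \;\ge\; \alpha\cdot (n+m)\log \opt.
\end{equation*}
Dividing through by $n+m$ and exponentiating yields $\W^{\texttt{\textup{Nash}}}(\mu) \ge \opt^{\alpha} = \opt / \opt^{1-\alpha}$, which is precisely the claim that $\mu$ is $\frac{1}{\opt^{1-\alpha}}$-Nash optimal. Note that since valuations are positive integers, all utilities are at least $1$ and therefore $\opt \ge 1$ and $\log \opt \ge 0$, so the direction of the inequality is preserved when multiplying by $\alpha \in [0,1]$.

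The only non-routine step is verifying that the optimum of the relaxed utilitarian problem coincides with $(n+m)\log\opt$; this reduces to checking that the mapping $A \mapsto \mu$ in \Cref{def:Induced_Matching} is a bijection between feasible allocations and feasible matchings. This is immediate once one observes that (i) the partition matroid on items captures the firm capacity bound $|\mu(f)|\le c_f$, (ii) the one-item-per-agent rule in fair division captures $|\mu(w)|\le 1$, and (iii) by monotonicity of $\hat v$ (\Cref{prop:Modified_Valuation_Submodular}) it is without loss of generality to assume every worker is matched, which is feasible because $m \le \sum_{f\in F} c_f$ under our nonzero-Nash assumption.
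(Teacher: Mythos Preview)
Your argument is correct and follows essentially the same route as the paper's proof: invoke \Cref{prop:Utilitarian_Nash_Relationship} on both the given allocation $A$ and on an optimal allocation, combine with the $\alpha$-approximation inequality, divide by $n+m$, and exponentiate to obtain $\W^{\texttt{\textup{Nash}}}(\mu)\ge \opt^{\alpha}=\opt/\opt^{1-\alpha}$. The one place you are more careful than the paper is in explicitly arguing that the utilitarian optimum in the fair division instance equals $(n+m)\log\opt$; the paper simply takes $A^*$ to be utilitarian optimal, lets $\mu^*$ be its induced matching, and \emph{defines} $\opt\coloneqq\W^{\texttt{\textup{Nash}}}(\mu^*)$, leaving the identification with the maximum Nash welfare over all matchings implicit. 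Your bijection discussion (items (i)--(iii)) fills that gap and is a welcome clarification, though strictly speaking it goes slightly beyond what the bare statement of the proposition requires.
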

\begin{proof} (of \Cref{prop:Approximation_Transfer}) Let $A^*$ be the utilitarian optimal allocation in the fair division instance. Since $A$ is $\alpha$-utilitarian optimal, we have that
$$\W^\texttt{\textup{util}}(A) \geq \alpha \cdot \W^\texttt{\textup{util}}(A^*).$$
Let $\mu$ and $\mu^*$ denote the induced matchings for the allocations $A$ and $A^*$, respectively. From the welfare correspondence in \Cref{prop:Utilitarian_Nash_Relationship}, we know that
$$\W^\texttt{\textup{util}}(A) = (n+m) \log \W^\texttt{\textup{Nash}}(\mu), \text{ and}$$
$$\W^\texttt{\textup{util}}(A^*) = (n+m) \log \W^\texttt{\textup{Nash}}(\mu^*).$$
These equalities, together with the above inequality, imply that
$$\log \W^\texttt{\textup{Nash}}(\mu) \geq \alpha \cdot \log \W^\texttt{\textup{Nash}}(\mu^*),$$
or, equivalently,
$$\W^\texttt{\textup{Nash}}(\mu) \geq \opt^{\alpha},$$
where $\opt \coloneqq \W^\texttt{\textup{Nash}}(\mu^*)$ is the optimal Nash welfare. Thus, we have that
$$\W^\texttt{\textup{Nash}}(\mu) \geq \opt^{\alpha-1} \cdot \opt,$$
implying that $\mu$ is $\frac{1}{\opt^{1-\alpha}}$-Nash optimal.
\end{proof}

\paragraph{Matroid constraints.} We will now impose cardinality constraints on allocations in the fair division instance. These constraints will ensure that the induced matching satisfies capacity constraints. Specifically, for any agent $i \in [n]$, define its \emph{feasibility set} $\F_i \coloneqq \{S \subseteq [m]: |S| \leq c_i\}$ as the set of all subsets of at most $c_i$ items, where $c_i$ is the capacity of the corresponding firm $f_i$.

Define the ground set $X \coloneqq \{(i,S): i \in [n] \text{ and } S \subseteq \F_i\}$ as the set of all feasible agent-bundle pairs. Call any collection of such pairs $\mathcal{S} \subseteq X$ \emph{independent} if, for any $i \in [n]$, $\mathcal{S}$ contains at most pair of the form $(i,S)$ for some $S \in \F_i$. Let $\mathcal{J}$ denote the set of all independent subsets of the ground set. It is easy to check that the set system $\mathcal{M} = (X,\mathcal{J})$ is a matroid~\citep{O22matroid}. 

\citet{FNW78analysis} showed that a natural greedy algorithm gives a $\frac{1}{2}$-approximation algorithm for maximizing any monotone submodular function under arbitrary matroid constraints. At each step, the algorithm adds a feasible element that gives the highest marginal improvement in the objective. The algorithm uses only a polynomial number of value queries.
%
%

\begin{restatable}[\citealp{FNW78analysis}; Theorem 2.1]{proposition}{}
There is an algorithm that, given any fair division instance with monotone submodular valuations and arbitrary matroid constraints, makes a polynomial number of value queries and returns an allocation that is $\frac{1}{2}$-utilitarian optimal.
\label{prop:Submodular-Utilitarian-Approximation}
\end{restatable}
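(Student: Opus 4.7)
My plan is to prove this classical result of \citet{FNW78analysis} by analyzing the greedy algorithm that maximizes a monotone submodular function $\hat{f}$ subject to a matroid constraint $\mathcal{M}$. The algorithm initializes $S_0 \coloneqq \emptyset$ and, at step $i$, adds the element $s_i$ (among those whose addition keeps the current set independent in $\mathcal{M}$) that maximizes the marginal gain $\hat{f}(S_{i-1} \cup \{e\}) - \hat{f}(S_{i-1})$; it terminates when no feasible $e$ exists, at which point the output $S$ is a base of $\mathcal{M}$. At each iteration we test independence for each candidate and perform one value query per candidate, so assuming an efficient independence oracle the total number of value queries is polynomial in the size of the ground set and the rank of $\mathcal{M}$.

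For the approximation ratio, let $O$ be a utilitarian optimal feasible set. Without loss of generality $|O| = |S| = k$, since we may extend $O$ to a base using the matroid augmentation axiom and this only increases $\hat{f}(O)$ by monotonicity. The key combinatorial input is Brualdi's strong basis exchange theorem, which supplies a bijection $\pi \colon S \to O$ such that $(S \setminus \{s_i\}) \cup \{\pi(s_i)\}$ is a base of $\mathcal{M}$ for every $i \in [k]$. Writing $o_i \coloneqq \pi(s_i)$, the hereditary property of matroids together with $S_{i-1} \subseteq S \setminus \{s_i\}$ implies that $S_{i-1} \cup \{o_i\}$ is independent, so $o_i$ was a valid candidate at step $i$ of the greedy.

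The greedy choice therefore gives $\hat{f}(S_i) - \hat{f}(S_{i-1}) \geq \hat{f}(S_{i-1} \cup \{o_i\}) - \hat{f}(S_{i-1})$, and submodularity yields $\hat{f}(S_{i-1} \cup \{o_i\}) - \hat{f}(S_{i-1}) \geq \hat{f}(S \cup \{o_i\}) - \hat{f}(S)$. Summing over $i$ telescopes the left side to $\hat{f}(S)$, while repeated application of submodularity bounds the right side by $\hat{f}(S \cup O) - \hat{f}(S)$, and monotonicity gives $\hat{f}(S \cup O) \geq \hat{f}(O)$. Rearranging yields $2\hat{f}(S) \geq \hat{f}(O)$, the claimed $\tfrac{1}{2}$-approximation. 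The induced feasible collection of agent-bundle pairs then gives the desired allocation via the correspondence set up earlier in the paper.

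The principal obstacle I anticipate is producing the bijection $\pi$ with the step-by-step exchange property invoked above: this is not immediate from the basic matroid exchange axiom and requires a separate argument, for instance via Hall's theorem applied to the bipartite graph on $S \times O$ whose edges record valid single-element exchanges, or via an appeal to matroid intersection. Once the bijection is in hand, the remainder of the argument is a routine telescoping manipulation relying only on submodularity and monotonicity, both of which were already verified for $\hat{f}$ in \Cref{prop:Modified_Valuation_Submodular}.
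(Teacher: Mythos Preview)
Your proof is correct and is precisely the classical argument of \citet{FNW78analysis}. However, the paper does not actually prove this proposition: it is stated as a citation of a known result, with the preceding paragraph merely describing the greedy algorithm informally and asserting that it ``uses only a polynomial number of value queries.'' So there is no proof in the paper to compare against; you have supplied the standard one where the paper simply invokes the literature.

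One minor remark on the interface with the paper's setup: the paper's matroid has ground set $X = \{(i,S) : i \in [n],\, S \in \F_i\}$, which is exponential in $m$, so ``polynomial number of value queries'' should be read as polynomial in $|X|$ rather than in $m$ and $n$. Your proof is consistent with this reading; just be aware that the polynomiality claim in the proposition is with respect to the ground set of the matroid, and the paper is silent on how this interacts with the overall running time of \Cref{thm:Nash-Approx-OPT-Dependent}.
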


We are now ready to prove our main result (\Cref{thm:Nash-Approx-OPT-Dependent}) which gives an $\frac{1}{\sqrt{\opt}}$-approximation algorithm for maximizing Nash welfare in the two-sided matching problem.

\begin{proof} (of \Cref{thm:Nash-Approx-OPT-Dependent})
Given any two-sided matching instance $\I = \langle W,F,C, \V \rangle$, we will construct the corresponding fair division $\I' = \langle [n], [m], \hat{V} \rangle$ with modified valuation functions (\Cref{def:Modified_Valuation}).

From \Cref{prop:Modified_Valuation_Submodular}, we know that the modified valuations are monotone and submodular. Hence, by applying \Cref{prop:Submodular-Utilitarian-Approximation}, we obtain an algorithm for computing a $\frac{1}{2}$-utilitarian optimal allocation in the fair division instance. This algorithm uses a polynomial number of value queries. Note that any such value query in the fair division problem can be simulated in polynomial time. Thus, the algorithm runs in polynomial time.

The allocation constructed by this algorithm assigns each item to at most one agent. Furthermore, due to the independent set constraint of the matroid, an agent never receives more items than its (corresponding firm's) capacity. Finally, observe that no item is left unassigned; indeed, by the nonzeroness assumption for Nash welfare~(\Cref{sec:Preliminaries}), there exists some matching in the two-sided instance that satisfies capacity constraints and has nonzero Nash welfare. This, in turn, implies that there exists an allocation with nonzero utilitarian welfare satisfying the independent set constraints.

From \Cref{prop:Approximation_Transfer}, we know that the corresponding induced matching is $\frac{1}{\sqrt{\opt}}$-Nash optimal, as desired.
\end{proof}


\subsection{Proof of \Cref{thm:QPTAS-Constant-Firms}}


\QPTASalgorithm*

\begin{proof} (of \Cref{thm:QPTAS-Constant-Firms}) The algorithm uses the idea of \emph{bucketing} (or discretization) to classify the workers and firms according to the \emph{range} in which they value each other.

Formally, for any given $\eps > 0$ and any firm $f$, define a set of $\tau+1$ \emph{buckets} $B_0^f,B_1^f,B_2^f,\dots,B_{\tau}^f$, where $\tau \coloneqq \lceil \log_{1+\eps} v_{\max} \rceil$ and $v_{\max}$ is the maximum valuation of any agent for any other agent, i.e., $v_{\max} \coloneqq \max \{ \{v_{w,f}\}_{ (w,f) \in W \times F} \cup \{v_{f,w}\}_{ (w,f) \in W \times F} \}$. For any $i \in [\tau]$, the bucket $B_i^f$ denotes the set of workers who value firm $f$ between $(1+\eps)^{i-1}$ and $(1+\eps)^{i}$. Specifically, for $i \in [\tau-1]$, 
$$B_i^f \coloneqq \{w \in W: (1+\eps)^{i-1} \leq v_{w,f} < (1+\eps)^{i} \}, \text{ and }$$
$$B_\tau^f \coloneqq \{w \in W: (1+\eps)^{\tau-1} \leq v_{w,f} \leq (1+\eps)^{\tau} \}.$$
Additionally, the bucket $B_0^f$ contains the workers who value firm $f$ at $0$, i.e., 
$$B_0^f \coloneqq \{w \in W: v_{w,f} = 0 \}.$$
For each bucket $B_i^f$, we further define $\tau+1$ \emph{sub-buckets} $b_0^{f,i},b_1^{f,i},\dots,b_\tau^{f,i}$, where, for any $j \in [\tau]$, the sub-bucket $b_j^{f,i}$ denotes the set of workers in the bucket $B_i^f$ whom the firm $f$ values between $(1+\eps)^{j-1}$ and $(1+\eps)^{j}$. 
That is, for $j \in [\tau-1]$, 
$$b_j^{f,i} \coloneqq \{w \in B_i^f: (1+\eps)^{j-1} \leq v_{f,w} < (1+\eps)^{j} \}, \text{ and }$$
$$b_\tau^{f,i} \coloneqq \{w \in B_i^f: (1+\eps)^{\tau-1} \leq v_{f,w} \leq (1+\eps)^{\tau} \}.$$
The workers in $B_i^f$  valued at $0$ are assigned to $b_0^{f,i}$, i.e.,
$$b_0^{f,i} \coloneqq \{w \in B_i^f: v_{f,w} = 0 \}.$$
Thus, membership in a sub-bucket specifies the valuations of a worker and a firm for each other within a multiplicative factor of $(1+\eps)$ in case of nonzero values, and rounded up to $1$ in case of zero values.

Our algorithm \emph{guesses} the number of workers in each sub-bucket in an optimal solution. Each such guess, if feasible for the given capacities, induces a matching. The Nash welfare of this matching can be correctly computed to within a multiplicative factor of $(1+\eps)$ by only knowing the \emph{number} of workers in each sub-bucket. Importantly, the algorithm does not need to know that \emph{identities} of the workers.

The total number of sub-buckets across all firms 
is $n \cdot (\tau+1)^2$, or $\OO(n \tau^2)$. For $m$ workers, the total number of guesses made by the algorithm is at most $m^{\OO(n \tau^2)}$. To see this, note that the number of guesses made by the algorithm is equivalent to distributing $m$ indistinguishable balls into $n (\tau+1)^2$ bins.

For each guess, the algorithm checks its feasibility (i.e., whether it satisfies the capacity constraints) and computes its Nash welfare. As noted above, the Nash welfare can be computed correctly to within a multiplicative factor of $(1+\eps)$. Thus, for each guess, the feasibility check and Nash welfare computation can be performed in polynomial time. The guess with the maximum Nash welfare is returned as the solution. 

When the valuations are polynomially-bounded, we have that $\tau = \lceil \log_{1+\eps} v_{\max} \rceil = \OO(\log_{1+\eps} mn)$. Thus, the running time of the algorithm is $\OO(m^{\OO \left( n \cdot \log_{1+\eps}^2 mn \right) })$. By noting that the number of firms $n$ is constant and $\log (1+\eps) > \eps$, we obtain the desired implication.
\end{proof}

We will now discuss the proof of \Cref{cor:FPT-Exact-Constant-Firms}.

\ExactBucketing*

\begin{proof} (of \Cref{cor:FPT-Exact-Constant-Firms}) The proof is very similar to that of \Cref{thm:QPTAS-Constant-Firms}. The main difference is that the buckets and sub-buckets are now defined on the \emph{exact} valuations instead of in terms of powers of $(1+\eps)$.

Specifically, let $c_0$ be a constant such that $v_{\max} \leq c_0$. Then, for each firm $f$, we define $c_0+1$ buckets $B_0^f,B_1^f,B_2^f,\dots,B_{c_0}^f$ as follows: For $i \in \{0,1,2,\dots,c_0\}$, we have
$$B_i^f \coloneqq \{w \in W: v_{w,f} = i \}.$$
For each bucket $B_i^f$, we define $c_0+1$ \emph{sub-buckets} $b_0^{f,i},b_1^{f,i},\dots,b_\tau^{f,i}$, as follows: For $j \in \{0,1,2,\dots,c_0\}$, we have
$$b_j^{f,i} \coloneqq \{w \in B_i^f: v_{f,w} = i \}.$$
Note that by knowing the number of workers in each sub-bucket, the algorithm can compute the Nash welfare of the induced matching \emph{exactly}.

The total number of sub-buckets across all firms is $n \cdot (c_0+1)^2$. When the number of firms $n$ is constant, the number of sub-buckets is also constant. Thus, by a similar analysis as in the proof of \Cref{thm:QPTAS-Constant-Firms}, we can observe that the algorithm runs in polynomial time.
\end{proof}

\section{Omitted Proofs from \Cref{sec:Parameterized Algorithms}}

\subsection{Proof of \Cref{thm:3m}}

\DPalgorithm*
\begin{proof}
  Let $\langle W,F,C, \V \rangle$ be a given matching instance. We design a dynamic programming algorithm. Let $f_1,\dots,f_n$ be an arbitrary ordering of firms. We define the table $T$ as follows: For every $i\in [n]$ and $S\subseteq W$, 
\begin{equation*}
    \begin{split}
         T[i,S]= & \text{ maximum Nash product when the first $i$ firms are} \\
         & \text{ matched to the workers in the set } S.
    \end{split} 
\end{equation*}

We compute the entries of the table $T$ as follows:
For each subset $S\subseteq W$, 
\begin{equation}\label{eq:fpt-m-eq1}
    \begin{split}
    T[1,S] = \begin{cases} \W_{f_1}(S) & \text{ if } |S|\leq c_1, \\
    0 & \text{ otherwise}
    \end{cases}
    \end{split}
\end{equation}
Further, for any $i>1$ and any subset $S\subseteq W$,
\begin{equation}\label{eq:fpt-m-eq2}
    T[i,S] = \max_{S'\subseteq S, |S'|\leq c_i}\{\W_{f_i}(S') \times T[i-1,S\setminus S'] \}.
\end{equation}


To prove the correctness of the algorithm, we will need the following lemma. 

\begin{restatable}[]{lemma}{DPcorrectness}
    For every $i\in [m]$ and $S\subseteq W$, the maximum Nash product achievable in the subinstance consisting of the workers in $S$ and the firms $f_1,\dots,f_i$ is given by $T[i,S]$.
\label{lem:dp-rec-correctness}
\end{restatable}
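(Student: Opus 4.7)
The plan is to prove \Cref{lem:dp-rec-correctness} by induction on the index $i$ of the firm, using the recurrences~\eqref{eq:fpt-m-eq1} and~\eqref{eq:fpt-m-eq2} as the inductive definition. Throughout, I interpret ``Nash product achievable in the subinstance'' as the maximum of $\prod_{j \leq i} \W_{f_j}(\mu(f_j))$ over all matchings $\mu$ of the workers in $S$ to the firms $f_1,\dots,f_i$ that respect the capacity constraints; if no such matching exists that matches every worker in $S$, the quantity is $0$ (so that the DP aligns with the Nash product being the product of firm contributions over all workers in $S$).

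For the base case $i=1$, the only way to distribute the workers of $S$ among $\{f_1\}$ is to assign all of $S$ to $f_1$, which is feasible iff $|S| \leq c_1$. In the feasible case the Nash product equals $v_{f_1}(S)\cdot\prod_{w\in S}v_w(f_1) = \W_{f_1}(S)$, matching~\eqref{eq:fpt-m-eq1}; otherwise no feasible matching exists and $T[1,S]=0$.

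For the inductive step, fix $i>1$ and $S \subseteq W$, and assume the claim holds for $i-1$ and every subset of $W$. Any feasible matching $\mu$ of $S$ to $\{f_1,\dots,f_i\}$ decomposes uniquely as $\mu(f_i) = S' \subseteq S$ with $|S'| \leq c_i$, together with a feasible matching $\mu'$ of $S \setminus S'$ to $\{f_1,\dots,f_{i-1}\}$. Because firm contributions multiply,
\begin{equation*}
\prod_{j=1}^{i} \W_{f_j}(\mu(f_j)) \;=\; \W_{f_i}(S')\,\cdot\,\prod_{j=1}^{i-1}\W_{f_j}(\mu'(f_j)).
\end{equation*}
Taking the maximum over all choices of $\mu$ is equivalent to first fixing $S'$ with $|S'|\leq c_i$, then maximizing over $\mu'$, and then maximizing over $S'$. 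By the inductive hypothesis the inner maximization equals $T[i-1, S\setminus S']$, so the overall maximum equals $\max_{S'\subseteq S,\, |S'|\leq c_i} \W_{f_i}(S')\cdot T[i-1, S\setminus S']$, which is exactly the right-hand side of~\eqref{eq:fpt-m-eq2}. This completes the induction.

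The main step to be careful about is the bijective correspondence between feasible matchings of $S$ to the first $i$ firms and pairs $(S', \mu')$ where $S'$ is $f_i$'s assigned set and $\mu'$ is a feasible matching of the remaining workers to the first $i-1$ firms; once this decomposition and the multiplicative structure of $\prod_j \W_{f_j}(\cdot)$ are spelled out, the induction is mechanical. Applying the lemma with $i=n$ and $S=W$ gives that $T[n,W]$ is the optimal Nash product, and the matching itself is recovered by the standard backtracking procedure that, at each entry $T[i,S]$, records the argmax $S'$ used in~\eqref{eq:fpt-m-eq2}.
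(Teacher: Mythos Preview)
Your proof is correct and follows essentially the same approach as the paper: induction on the firm index $i$, with the base case handled by~\eqref{eq:fpt-m-eq1} and the inductive step established via the multiplicative decomposition $\prod_{j\leq i}\W_{f_j}(\mu(f_j)) = \W_{f_i}(S')\cdot\prod_{j\leq i-1}\W_{f_j}(\mu'(f_j))$ together with the bijection between matchings on the first $i$ firms and pairs $(S',\mu')$. The paper's version is slightly more explicit in separately constructing witnesses for both inequalities, but the content is the same.
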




\begin{proof} 
    We prove it by induction on $i$. 
    
    {\bf Base Case:} When $i=1$ and $|S|\leq c_1$, all the workers are assigned to $f_1$ and the Nash social welfare is $\W_{f_1}(S)$. If $|S|> c_1$, then there is at least one worker that is not assigned to $f_1$, thus Nash social welfare is $0$. Thus, the base case is correct.

    {\bf Induction Step:} Suppose that the table entries are computed correctly for all $i<j$ and all $S\subseteq W$ using \Cref{eq:fpt-m-eq1,eq:fpt-m-eq2}. We next prove it for $i=j$. Let $\mu$ be a matching that maximizes the Nash product when first $j$ firms are matched to the workers in the set $S$.  Let $\hat{S}=\arg \max_{S'\subseteq S, |S'|\leq c_j}\{\W_{f_j}(S') \times T[j-1,S\setminus S']$. Let $\hat{\mu}$ be a matching that matches first $j-1$ firms to the workers in the set $S\setminus \hat{S}$ and Nash product of $\hat{\mu}$ is same as $T[j-1,S\setminus \hat{S}]$. Clearly, such a matching exists and it is an optimal matching due to the induction. Let us construct a matching $\mu^\star$ as follows: for every agent $a\in \{f_1\cup \ldots \cup f_{j-1}\}\cup (S\setminus \hat{S})$, $\mu^\star(a)=\hat{\mu}(a)$, and $\mu^\star(f_j)=\hat{S}$. That is, for the first $j-1$ firms and the workers in $S\setminus \hat{S}$, both the matchings are same, and $\hat{S}$ is assigned to the $j$th firm. The Nash product of $\mu^\star$ is $\W_{f_j}(\hat{S})\times T[j-1,S\setminus \hat{S}]$. Clearly, the Nash product of $\mu^\star$ is at most the Nash product of $\mu$. We claim that these two values are same. Towards this, we show that the Nash product of $\mu^\star$ is at least the Nash product of $\mu$, that will lead to the equality. Let $S'=\mu(f_j)$. Consider a matching  $\mu'$ that is a restriction of $\mu$ on first $j-1$ firms and the workers in the set $S\setminus S'$. That is, for every $a\in \{f_1\cup \ldots \cup f_{j-1}\}\cup (S\setminus S')$, $\mu'(a)=\mu(a)$. Then, $\mu'$ is a valid candidate for the entry in $T[j-1,S\setminus S']$, i.e., $T[j-1,S\setminus S']$ is at least the Nash product of $\mu'$. Thus, the Nash product of $\mu$ is at most $\W_{f_j}(S')\times T[j-1,S\setminus S']$. Due to the choice of $\hat{S}$, $\W_{f_j}(S')\times T[j-1,S\setminus S']\leq \W_{f_j}(\hat{S})\times T[j-1,S\setminus \hat{S}]$. Thus, the Nash product of $\mu$ is at most the Nash product of $\mu^\star$. Hence, the Nash product of $\mu$ and $\mu^\star$ is same. Thus, $T[j,S]$ is computed correctly.
\end{proof}
Due to~\Cref{lem:dp-rec-correctness}, $T[n,W]$ is computed correctly and we can obtain the matching by 
backtracking.
Since we compute the table entry for all subsets $S$ of $W$, and to compute each entry we check all subsets of $S$ in the worst case, the running time is upper bounded by $n\sum_{i=0}^m \binom{m}{i}2^i$, or, $n \cdot 3^m$. 
\end{proof}

\subsection{Proof of \Cref{thm:param-m-fptas}}

\PolyMultThm*

Before dwelling into the proof, let us introduce some relevant definitions. The {\em characteristic vector} of a subset $S\subseteq [m]$, denoted by $\chi(S)$, is an $m$-length binary string whose $i^{\textup{th}}$ bit is $1$ if and only if  $i \in S$.  Two binary strings of length $m$ are said to be \emph{disjoint} if for each $i\in [m]$, the $i^\textup{th}$ bits in the two strings are different. The {\em Hamming weight} of a binary string $S$, denoted by ${\cal H}(S)$, is the number of $1$s in the string $S$. A monomial $y^i$ is said to have Hamming weight $w$ if the degree term $i$, when represented as a binary string, has Hamming weight $w$. The {\em Hamming projection} of a polynomial $p(y)$ to $h$, denoted by ${\cal H}_{h}(p(y))$, is the sum of all the monomials of $p(y)$ which have Hamming weight $h$. A polynomial $p(y)$ \emph{contains a monomial} $y^i$ if the coefficient of $y^{i}$ is nonzero. 
The {\em representative polynomial} of $p(y)$, denoted by ${\cal R}(p(y))$, is a polynomial derived from $p(y)$ by changing the coefficients of all monomials contained in it to $1$. 

 We begin with the following known result. 





\HWDisjointness*

Next, we prove our result.
\begin{proof} (of \Cref{thm:param-m-fptas})
   Let $\I = \langle W,F,C, \V \rangle$ be the given instance, where $F=\{f_1,\dots,f_n\}$ and $W=\{1,\dots,m\}$ are the sets of firms and workers, respectively. 
   Let $\eta$ be the maximum possible Nash product for $\I$; thus, $\eta \leq (mv_{\max})^{m+n}$. Let $Z \coloneqq \{1,(1+\eps), (1+\eps)^2,\ldots,(1+\eps)^q,(1+\eps)^{q+1}\}$, where $q$ is the largest positive integer such that $(1+\eps)^q \leq \eta$. For every $j\in [n]$, $s\in [m]$, and $\ell \in Z$, we will construct a polynomial $p_{s,\ell}^j$ in which every nonzero monomial corresponds to an assignment of $s$ workers to the first $j$ firms $f_1,\dots,f_j$ such that the Nash product is at least $\ell$.   
   
   We will construct these polynomials $p_{s,\ell}^j$ iteratively. First, we will construct a polynomial $h_{s,\ell}^j$ in which every nonzero monomial corresponds to an assignment of some set $X \subseteq W$ of $s$ workers to the $j^\textup{th}$ firm $f_j$ such that $\W_{f_j}(X)\geq \ell$. 
   That is, for every $j\in [n]$, $s\in [c_j]$, and $\ell \in Z$,
   \begin{equation}
       h^j_{s,\ell} \coloneqq \sum_{\substack{X\subseteq W, |X|=s, \\ {\cal W}_{f_j}(X)\geq \ell}} y^{\chi(X)}.
   \tag{\ref{eq:type1}}
   \end{equation}

   Define $p^1_{s,\ell} \coloneqq h^1_{s,\ell}$. 
 %
%
 For every $j\in \{2,\ldots,n\}, s\in [m]$ and $\ell \in Z$, define
 \begin{equation}
       p^j_{s,\ell} \coloneqq {\cal R}\Big({\cal H}_s\Big(\sum_{\substack{s=s'+s'', s'\leq c_j, \\ \ell=\ell'\times \ell'', \ell',\ell''\in Z}} h_{s',\ell'}^j\times p_{s'',\ell''}^{j-1}\Big)\Big),
 \tag{\ref{eq:typej}}
 \end{equation}
where ${\cal R}(\cdot)$ is the representative polynomial and ${\cal H}_s(\cdot)$ is the Hamming projection of weight $s$. 
The ${\cal H}(\cdot)$ operator ensures disjointness due to~\Cref{cor:HW-disjointness}. The ${\cal R}(\cdot)$ operator is only required for the running time. 

We return the largest value of $\ell$ for which $p_{m,\ell}^n$ is nonzero. The corresponding matching can be found by backtracking. 

To argue correctness, we will prove that if the algorithm returns~$\ell^\star$, then $\eta \leq (1+\eps)^{n+1}\ell^\star$. Towards this, we show that if $(1+\eps)^q\leq \eta \leq (1+\eps)^{q+1}$, then $p^n_{m,\ell}$ is nonzero for some $\ell \in \{(1+\eps)^{q-n},\ldots, (1+\eps)^{q+1}\}$. Thus, we return a matching $\tilde{\mu}$ with Nash product at least $(1+\eps)^{q-n}$. Since, $\eta \leq (1+\eps)^{q+1}$, it follows that $\eta \leq (1+\eps)^{n+1} \ell^\star$. Thus, $\W^\texttt{\textup{Nash}}(\mu) \leq (1+\eps)^{\frac{n+1}{m+n}}\W^\texttt{\textup{Nash}}(\tilde{\mu})$, where $\mu$ is a Nash optimal matching. 
The following lemma helps to conclude it.     

\begin{lemma}\label{lem:fptas-correctness}
    Let $\mu$ be a matching that maximises Nash product.  For each $j\in [n]$, let $\eta_j=\prod_{i\in [j]}{\cal W}_{f_i}(\mu(f_i))$ and $\alpha_j=\sum_{i\in [j]}|\mu(f_i)|$. Let $(1+\epsilon)^{q_j} \leq \eta_j \leq (1+\epsilon)^{q_j+1}$, where $j\in [n]$. Then, for each $j\in [n]$ and for an $\ell \in \{(1+\epsilon)^{q_j-j},\ldots, (1+\epsilon)^{q_j}\}$, $p^{j}_{\alpha_j,\ell}$ contains the monomial $y^{\chi(\mu(f_1)\cup \ldots \cup \mu(f_j))}$.
\end{lemma}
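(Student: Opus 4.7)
The plan is to prove the lemma by induction on $j$, using the disjointness of $\mu(f_j)$ from $\mu(f_1)\cup\ldots\cup\mu(f_{j-1})$ to invoke \Cref{cor:HW-disjointness}, and then carefully tracking how the one-step $(1+\eps)$ approximation error in bucketing $\W_{f_j}(\mu(f_j))$ accumulates over the $n$ rounds.

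For the base case $j=1$, since $p^1_{s,\ell}=h^1_{s,\ell}$, we simply set $\ell=(1+\eps)^{q_1}$. Then $|\mu(f_1)|=\alpha_1\le c_1$ because $\mu$ is feasible, and $\W_{f_1}(\mu(f_1))=\eta_1\ge (1+\eps)^{q_1}=\ell$, so by the definition of $h^1_{\alpha_1,\ell}$ in~\eqref{eq:type1}, the monomial $y^{\chi(\mu(f_1))}$ appears in $p^1_{\alpha_1,\ell}$, and $\ell$ lies in the claimed range $\{(1+\eps)^{q_1-1},(1+\eps)^{q_1}\}$.

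For the inductive step, suppose the lemma holds for $j-1$; thus, there exists $\ell''=(1+\eps)^b$ with $b\in\{q_{j-1}-(j-1),\ldots,q_{j-1}\}$ such that $p^{j-1}_{\alpha_{j-1},\ell''}$ contains $y^{\chi(\mu(f_1)\cup\ldots\cup\mu(f_{j-1}))}$. I will choose $\ell'=(1+\eps)^a$ to be the largest element of $Z$ with $\ell'\le\W_{f_j}(\mu(f_j))$, so $(1+\eps)^a\le\W_{f_j}(\mu(f_j))<(1+\eps)^{a+1}$. Since $|\mu(f_j)|\le c_j$ and $\W_{f_j}(\mu(f_j))\ge\ell'$, the polynomial $h^j_{|\mu(f_j)|,\ell'}$ contains $y^{\chi(\mu(f_j))}$ by~\eqref{eq:type1}. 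Both polynomials have nonnegative coefficients, so the product contains the monomial $y^{\chi(\mu(f_j))+\chi(\mu(f_1)\cup\ldots\cup\mu(f_{j-1}))}$. Because $\mu$ is a feasible matching, $\mu(f_j)$ is disjoint from $\mu(f_1)\cup\ldots\cup\mu(f_{j-1})$, and by \Cref{cor:HW-disjointness} this monomial has Hamming weight $|\mu(f_j)|+\alpha_{j-1}=\alpha_j$, so it survives the Hamming projection $\mathcal{H}_{\alpha_j}$ and equals $y^{\chi(\mu(f_1)\cup\ldots\cup\mu(f_j))}$. Taking the representative polynomial only normalizes the coefficient to $1$, so $p^{j}_{\alpha_j,\ell}$ contains the desired monomial, where $\ell=\ell'\cdot\ell''=(1+\eps)^{a+b}$.

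The delicate part, and the main obstacle, is verifying that $a+b$ falls in $\{q_j-j,\ldots,q_j\}$: this is where the $(1+\eps)^{n+1}$ loss in the theorem comes from and must be tracked exactly. Multiplying the two-sided bounds $(1+\eps)^{q_{j-1}}\le\eta_{j-1}<(1+\eps)^{q_{j-1}+1}$ and $(1+\eps)^a\le\W_{f_j}(\mu(f_j))<(1+\eps)^{a+1}$ yields $(1+\eps)^{q_{j-1}+a}\le\eta_j<(1+\eps)^{q_{j-1}+a+2}$. Taking $q_j$ to be the largest integer with $(1+\eps)^{q_j}\le\eta_j$ forces $q_j\in\{q_{j-1}+a,q_{j-1}+a+1\}$. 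Combined with $b\in\{q_{j-1}-(j-1),\ldots,q_{j-1}\}$, a straightforward case analysis gives $a+b\in\{q_j-j,\ldots,q_j\}$ in both sub-cases, completing the induction. Applying the lemma at $j=n$ with $\alpha_n=m$ then yields some $\ell\in\{(1+\eps)^{q-n},\ldots,(1+\eps)^q\}$ for which $p^n_{m,\ell}$ is nonzero, which is exactly what the algorithm's correctness argument needs.
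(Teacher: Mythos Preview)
Your proof is correct and follows the same inductive structure as the paper's argument. The only difference is cosmetic: you bucket $\W_{f_j}(\mu(f_j))$ directly by taking the largest $(1+\eps)^a$ below it and then do a two-case check on whether $q_j=q_{j-1}+a$ or $q_j=q_{j-1}+a+1$, whereas the paper fixes $\ell'=\max\{1,(1+\eps)^{q_j-q_{j-1}-1}\}$ and handles the case $q_j=q_{j-1}$ separately; both routes land $\ell$ in the range $\{(1+\eps)^{q_j-j},\ldots,(1+\eps)^{q_j}\}$.
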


\begin{proof}
    We prove it by induction on $j$.

    {\bf Base Case:} $j=1$. Let $(1+\eps)^{q_1} \leq \eta_1 \leq (1+\eps)^{q_1+1}$. Clearly, $|\mu(f_1)|\leq c_1$. Thus, we consider $s=|\mu(f_1)|$ and $\ell = (1+\eps)^{q_1}$ in \Cref{eq:type1}. Since ${\cal W}_{f_1}(\mu(f_1))\geq (1+\eps)^{q_1}$, $p_{\alpha_1,(1+\eps)^{q_1}}^1$ contains the monomial $y^{\chi(\mu(f_1))}$.

    {\bf Induction Step:} Suppose that the claim is true for $j=j'-1$. We next prove it for $j=j'$. Let $(1+\eps)^{q_{j'}} \leq \eta_{j'} < (1+\eps)^{q_{j'}+1}$ and $(1+\eps)^{q_{j'-1}} \leq \eta_{j'-1} < (1+\eps)^{q_{j'-1}+1}$. Note that ${\cal W}_{j'}(\mu(f_{j'}))=\frac{\eta_{j'}}{\eta_{j'-1}}> \frac{(1+\eps)^{q_{j'}}}{(1+\eps)^{q_{j'-1}+1}}=(1+\eps)^{q_{j'}-q_{j'-1}-1}$. Furthermore, ${\cal W}_{j'}(\mu(f_{j'})) \geq 1$ as Nash product of $\mu$ is nonzero.  Since $(1+\eps)^{q_{j'-1}} \leq \eta_{j'-1} < (1+\eps)^{q_{j'-1}+1}$, due to induction hypothesis, for some $\ell'' \in \{(1+\eps)^{q_{j'-1}-(j'-1)},\ldots, (1+\eps)^{q_{j'-1}}\}$,  $p_{\alpha_{j'-1},\ell''}^{j'-1}$ contains a monomial $y^{\chi(\mu(f_1)\cup \ldots \cup \mu(f_{j'-1}))}$. Consider $\ell'=\max\{1,(1+\eps)^{q_{j'}-q_{j'-1}-1}\}$. Let $\ell=\ell'\times \ell''$, $s=\alpha_{j'}, s'=|\mu(f_{j'})|$, and $s''=\alpha_{j'-1}$. While constructing polynomial $p_{s,\ell}^{j'}$, we will consider above $s',s'',\ell'$, and $\ell''$.  Thus, we consider the multiplication of $h_{s',\ell'}^{j'}$ and  $p_{s'',\ell''}^{j'-1}$ in the construction of the polynomial $p_{s,\ell}^{j'}$. Since ${\cal W}_{j'}(\mu(f_{j'}))\geq \ell'$, $h_{s',\ell'}^j$ is nonzero. Furthermore, $p_{s'',\ell''}^{j'-1}$ contains the monomial $y^{\chi(\mu(f_1)\cup \ldots \cup \mu(f_{j'-1}))}$, Thus, we considered the multiplication of $y^{\chi(\mu(f_1)\cup \ldots \cup \mu(f_{j'-1}))}$ and $y^{\mu(f_{j'})}$. Since $\mu$ is a matching, $\mu(f_{j'})$ is disjoint from $\mu(f_1)\cup \ldots \cup \mu(f_{j'-1})$. Hence, $p_{s,\ell}^{j'}$ contains the monomial $y^{\chi(\mu(f_1)\cup \ldots \cup \mu(f_{j'}))}$. If $\ell'=(1+\eps)^{q_{j'}-q_{j'-1}-1}$ and $\ell''\geq (1+\eps)^{q_{j'-1}-(j'-1)}$, then $\ell \geq (1+\eps)^{q_{j'}-j'}$. If  $(1+\eps)^{q_{j'}-q_{j'-1}-1}<1$, $q_{j'}=q_{j'-1}$ as $q_{j'}$ cannot be smaller than $q_{j'-1}$. In this case, $\ell'=1$. Thus, $\ell \geq (1+\eps)^{q_{j'-1}-(j'-1)}=(1+\eps)^{q_{j'}-j'+1}$. This completes the proof.
\end{proof}

Since $|Z|=\OO(\log_{1+\eps} \eta)$, we compute $\OO(nm\log_{1+\eps} \eta)$ polynomials. Note that the degree of the polynomial 
is at most $2^m$ (as the the $m$-length binary vector in the polynomial can have all $1$s). Each polynomial can be computed in $\OO(m^2 \cdot 2^m \cdot \log \eta)$ time due to the following result and the fact that $s\leq m$ and $\ell \in Z$. 
\Moenck*
Since $\eta \leq (mv_{\max})^{m+n}$, $\log_{1+\eps} \eta \leq (m+n)\log_{1+\eps}(mv_{\max})$. By changing the base of logarithm from $(1+\eps)$ to $2$, and the fact that $\log(1+\eps)>\nicefrac{\eps^2}{2}$, for every $\eps \in (0,1]$, we get the running time of $\OO^\star(\nicefrac{1}{\eps^4} \cdot 2^m)$.
\end{proof}

\section{Omitted Proofs from \Cref{sec:Restricted_Domains}}

\subsection{Proof of \Cref{thm:symm-Binary-Vals}}

\SymmBinaryVals*

\begin{proof}
   The algorithm and analysis, both are similar to the one for binary valuations and one-sided preferences~\citep{BKV18greedy}. 
   
   Let $\langle W,F,C,\V \rangle$ be the given instance of the matching problem. In case of one-sided preferences, a worker can be assigned to any firm. While in case of two-sided preferences, a worker can only be assigned to those firms that it values at $1$. Furthermore, firms have capacities in our problem. Thus, we need to \emph{carefully} choose our initial suboptimal matching. We use \Cref{prop:Zero_Nash} to generate a matching with nonzero Nash welfare. Given a suboptimal matching $\mu$, we greedily update it to obtain another matching whose Nash welfare is at least the Nash welfare of $\mu$. We will show that if the Nash welfare of both the matchings is same, then $\mu$ itself is an optimal matching. Towards this, we first construct the following directed graph $G(\mu)$. 
        \begin{itemize}
            \item For every firm $f\in F$, there is a vertex in  $G(\mu)$. 
            \item For a firm $f\in F$, let $W^+(f)$ be the set of workers valued at $1$ by $f$. For a pair of firms $f,f'$, we add $|\mu(f)\cap W^+(f')|$ directed arcs from $f$ to $f'$. 
        \end{itemize}
   Note that a worker $w$ belongs to the set $\mu(f)\cap W^+(f')$ if $w$ is currently assigned to $f$ and is valued by $f'$ at $1$. Due to symmetric valuations $w$ also values $f'$ at $1$.
   
    Let $P=(f_1,\ldots,f_k)$ be a simple directed path in $G(\mu)$. Then, we obtain a new matching $\mu_P$ by rematching firms in $P$ as follows: let $w$ be a worker in $\mu(f_i)\cap W^+(f_{i+1})$, where $i\in [k-1]$. Rematch $w$ to $f_{i+1}$, that is $\mu_P(w)=f_{i+1}$, $\mu_P(f_i)=\mu(f_i)\setminus \{w\}$, $\mu_P(f_{i+1})=\mu(f_{i+1})\cup \{w\}$. Rest of the matching is same as $\mu$. We call a path $P$ \emph{good} if $\mu_P$ is a feasible matching, i.e., all capacity constraints are met, and the Nash welfare of $\mu_P$ is more than the Nash welfare of $\mu$. Let $P_{uv}$ be the set of all good paths from $u$ to $v$ in $G(\mu)$. Let ${\cal P}=\{P_{uv}\colon u,v\in G(\mu)\}$. Let $P^\star = \arg\max_{P\in {\cal P}}\W^\texttt{\textup{Nash}}(\mu(P))$. Then, we update $\mu$ to $\mu_{P^\star}$.

The following lemma shows the progress towards the optimal solution made by this rematching. 

\begin{lemma}
    Let $\mu^\star$ be a Nash optimal matching. Given a suboptimal matching $\mu$, there exist firms $f$ and $f'$ such that there is a directed path $P$ from $f$ to $f'$ in $\mu$ such that 
    \begin{equation*}
        \begin{split}
            \ln \W^\texttt{\textup{Nash}}(\mu^\star) - \ln \W^\texttt{\textup{Nash}}(\mu(P)) \leq \\
            \Big(1-\frac{1}{m}\Big) (\ln \W^\texttt{\textup{Nash}}(\mu^\star) - \ln \W^\texttt{\textup{Nash}}(\mu))
        \end{split}
    \end{equation*}
\end{lemma}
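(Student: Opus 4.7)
The plan is to adapt the analysis of~\citet{BKV18greedy} for one-sided binary valuations to the two-sided symmetric binary setting by exhibiting a directed path in $G(\mu)$ whose rematching recovers at least a $1/m$ fraction of the Nash-welfare gap. I will first reduce the objective to firm utilities alone: since nonzero Nash welfare with symmetric binary valuations forces every worker to be matched to a firm it values at~$1$, each worker has utility $1$ under both $\mu$ and $\mu^\star$, so $\ln \W^\texttt{\textup{Nash}}(\mu)=\tfrac{1}{n+m}\sum_{f}\ln u_f(\mu)$, and similarly for $\mu^\star$ and $\mu(P)$. Writing $\nu_f \coloneqq u_f(\mu)$ and $\nu_f^\star \coloneqq u_f(\mu^\star)$, we get $\sum_f \nu_f = \sum_f \nu_f^\star = m$, so suboptimality of $\mu$ forces the coexistence of surplus firms with $\nu_f > \nu_f^\star$ and deficit firms with $\nu_f < \nu_f^\star$.

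Next, I would extract candidate paths from the symmetric difference $\mu \triangle \mu^\star$. Viewed as sets of worker--firm edges, $\mu \triangle \mu^\star$ decomposes into vertex-disjoint alternating paths and cycles in the bipartite worker--firm graph. Cycles preserve every firm's utility and can be discarded. Each alternating path $Q$ then corresponds, by orienting each $\mu$-edge from its firm-endpoint towards the next firm, to a directed path $P=(f_1,\ldots,f_k)$ in $G(\mu)$ starting at a surplus firm and ending at a deficit firm; the required arcs $f_i \to f_{i+1}$ exist because the transferred worker $w$ lies in $\mu(f_i) \cap \mu^\star(f_{i+1})$, and symmetry of binary valuations forces $w \in W^+(f_{i+1})$. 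Internal firms on $P$ both lose and gain exactly one worker so capacities are preserved, $f_k$ can absorb an extra worker since $\nu_{f_k} < \nu_{f_k}^\star \leq c_{f_k}$, and hence $\mu(P)$ is feasible with $u_{f_1}(\mu(P)) = \nu_{f_1}-1$, $u_{f_k}(\mu(P)) = \nu_{f_k}+1$, and all other utilities unchanged.

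The main obstacle is the quantitative bound: one must show that some such path $P$ satisfies
\begin{equation*}
\ln \frac{(\nu_{f_1}-1)(\nu_{f_k}+1)}{\nu_{f_1} \nu_{f_k}} \;\geq\; \frac{1}{m} \sum_{f \in F} \ln \frac{\nu_f^\star}{\nu_f},
\end{equation*}
which, after dividing by $n+m$, is precisely the desired contraction. My plan is an averaging argument in the style of~\citet{BKV18greedy}. Each alternating path in $\mu \triangle \mu^\star$ can be shrunk one atomic worker-move at a time, and the total log-improvement produced by composing all atomic moves telescopes to $\sum_f \ln(\nu_f^\star / \nu_f)$. Because the total number of atomic moves across the entire symmetric difference is at most $m$ (one per displaced worker), pigeonhole guarantees that at least one atomic move---equivalently, at least one directed path in $G(\mu)$---achieves log-improvement $\geq \tfrac{1}{m} \sum_f \ln(\nu_f^\star / \nu_f)$. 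Since the algorithm selects the globally best path $P^\star$, the improvement of $\mu(P^\star)$ dominates that of this witness path, yielding the claimed contraction. The delicate step is verifying that each atomic move of the decomposition is realized by a path in the \emph{original} graph $G(\mu)$ rather than a graph that evolves through intermediate moves; symmetry of the binary valuations is exactly what guarantees that the necessary arcs remain present along the entire decomposition, so that the averaging is valid.
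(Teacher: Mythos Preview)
The paper's own proof is a one-line pointer to Lemma~5.1 of \citet{BKV18greedy}, so your sketch is already more detailed than what the paper provides, and your high-level strategy---reduce to firm utilities, extract transfer paths comparing $\mu$ with $\mu^\star$, and average---is exactly that approach.

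However, two steps in your sketch do not go through as written. First, $\mu\triangle\mu^\star$ does \emph{not} decompose into vertex-disjoint alternating paths and cycles in the worker--firm graph: workers have degree~$2$ there, but a firm $f$ has degree $|\mu(f)\triangle\mu^\star(f)|$, which can be large once $c_f>1$. The right object is the directed multigraph on firms with one arc $\mu(w)\to\mu^\star(w)$ per displaced worker; this decomposes into at most $m$ directed paths from surplus firms to deficit firms (plus cycles, which you may discard), each lifting to a path in $G(\mu)$. Second, and more seriously, your averaging over \emph{sequential} atomic moves does not yield what you need. Telescoping makes the log-improvements sum to the total gap, but each improvement is measured against the intermediate matching just before that move, not against $\mu$; pigeonhole therefore only hands you a move that is good relative to some $\mu'\neq\mu$. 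Your ``delicate step'' addresses whether the arc still lies in $G(\mu)$, not whether the numerical gain survives the change of base point. The actual fix, as in \citet{BKV18greedy}, is to apply each path of the decomposition directly to $\mu$ (in parallel, not sequentially) and invoke concavity of $\ln$: because $\ln\tfrac{k+1}{k}$ is decreasing in $k$, the sum of the per-path improvements, each measured against $\mu$, is at least $\sum_f\ln(\nu_f^\star/\nu_f)$, and then averaging over at most $m$ paths gives the claimed contraction.
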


   \begin{proof}
       The proof is same as the proof of Lemma 5.1 in~\citep{BKV18greedy}. The agent refers to firm and good refers to worker.
   \end{proof}



Note that Remark 1 in~\citep{BKV18greedy} is true for our case as well. Thus, every path in $P_{uv}$ is equally good for us. Furthermore, note that any rematching only affects the valuation of firms, for workers, it is always $1$. Furthermore, the effect on the valuation of firms is also same as the one in two-sided preferences. This is the reason that our analysis is exactly the same. 

\begin{lemma}
    The algorithm runs in $2m(n+1)\ln(nm)$ time.
\end{lemma}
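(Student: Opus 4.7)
The plan is to bound the total number of iterations of the greedy path-swap procedure by combining the geometric contraction established in the previous lemma with appropriate bounds on the initial and final log-Nash-welfare gap. Each iteration's per-step work (building $G(\mu)$ and selecting the best good path) is polynomially bounded in $m$ and $n$, so the quantity $2m(n+1)\ln(nm)$ is essentially a bound on the number of improvement steps before optimality is reached.

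First, I would iterate the contraction inequality of the previous lemma to obtain $\ln \W^\texttt{\textup{Nash}}(\mu^\star)-\ln \W^\texttt{\textup{Nash}}(\mu_k)\leq (1-\nicefrac{1}{m})^k \cdot \Delta_0$, where $\Delta_0$ is the initial log-gap and $\mu_k$ is the matching after $k$ iterations. Since valuations are binary, each firm's utility is at most its capacity $\leq m$ and each worker's utility is at most $1$, so the Nash product is bounded by $m^n$ and $\Delta_0 \leq \frac{n\ln m}{n+m}$. The initial matching $\mu_0$ produced by \Cref{prop:Zero_Nash} gives all agents positive utility, hence Nash product at least $1$, ensuring $\Delta_0$ is well-defined and bounded as above.

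Second, I would exploit \emph{integrality} of the Nash product under binary valuations to lower-bound the smallest possible positive gap at termination. If $\W^\texttt{\textup{Nash}}(\mu_k) < \W^\texttt{\textup{Nash}}(\mu^\star)$, then the corresponding integer Nash products differ by at least $1$, so
\begin{equation*}
\ln \W^\texttt{\textup{Nash}}(\mu^\star)-\ln \W^\texttt{\textup{Nash}}(\mu_k)\;\geq\; \tfrac{1}{n+m}\ln\!\Big(1+\tfrac{1}{m^n}\Big)\;=\;\Omega\!\Big(\tfrac{1}{(n+m)\,m^n}\Big).
\end{equation*}
Combining this threshold with the geometric decay $(1-\nicefrac{1}{m})^k \leq e^{-k/m}$ and solving $e^{-k/m}\cdot \tfrac{n\ln m}{n+m}$ below the above threshold yields $k = O\!\left(m(n\ln m + \ln(nm))\right) = O(m(n+1)\ln(nm))$, which matches the stated bound $2m(n+1)\ln(nm)$ up to constants.

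Third, I would argue that each iteration is carried out in polynomial time: constructing $G(\mu)$ takes $O(mn)$ time, and among all pairs of source/target firms $(u,v)$ the best swap path can be identified efficiently because, under symmetric binary valuations, the net change in Nash welfare depends only on the endpoint firms' utilities (the intermediate firms retain the same utility after a unit shift). Hence checking $O(n^2)$ endpoint pairs, together with a reachability computation in $G(\mu)$, suffices per iteration.

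The main obstacle is pinning down the exact constants $2(n+1)$ rather than the looser $O(mn\log(nm))$ that the crude analysis above delivers. Sharpening requires the observation that, by the endpoint-only property just mentioned, each iteration strictly increases the number of distinct firm-utility configurations by a controlled amount, so the telescoping argument can replace $\ln(m^n)$ with a smaller term depending on $n+1$ (corresponding to the number of distinct utility levels a firm can take, bounded by its capacity plus one). This tightening yields the claimed $2m(n+1)\ln(nm)$ bound.
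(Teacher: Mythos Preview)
The paper's own proof is a one-line deferral to Theorem~3.2 of \citet{BKV18greedy}; your reconstruction is precisely the argument that reference uses: iterate the contraction lemma, invoke integrality of the Nash product to lower-bound any remaining positive gap, and note that each iteration is polynomial because the welfare change along a swap path depends only on the endpoint firms (this is exactly Remark~1 in \citet{BKV18greedy}, which the present paper also cites). So your approach matches the intended one.

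One correction: your final ``sharpening'' paragraph is both unnecessary and unconvincing. The bound $2m(n+1)\ln(nm)$ already drops out of your steps~1--2 without further work. From $e^{-k/m}\cdot n\ln m < m^{-n}$ one gets $k > m\bigl(n\ln m + \ln n + \ln\ln m\bigr)$, and for $m,n\ge 2$ this is at most $2m(n+1)\ln(nm)$ term by term. No extra combinatorial argument about ``distinct firm-utility configurations'' is needed, and the mechanism you sketch (each iteration increasing the number of such configurations) is not how the cited proof obtains its constant; you should simply drop that paragraph.
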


\begin{proof}
    The proof is same as the proof of Theorem 3.2 in \citep{BKV18greedy}.
\end{proof}
\end{proof}

\subsection{Proof of \Cref{thm:Combined-Restricted-Domains}}

\Combined*

\begin{proof} [Proof of \Cref{thm:Combined-Restricted-Domains} (1)]
     We begin with proving the first result. Since the matching instance has degree at most two, the graph is a disjoint union of paths and cycles. Note that we can compute the matching for each component separately. Thus, here we discuss the algorithm to compute the matching when the given instance is a path or a cycle. Suppose that the matching instance is an odd length path $P=(f_1, w_1,\ldots,f_\ell,w_\ell)$, i.e., the number of edges in $P$ is odd. Note that $\{f_iw_i \colon i\in [\ell]\}$ is the unique matching that has nonzero Nash welfare. Thus, it is optimal. If $P$ is an even length path that start from a vertex in $F$, then the Nash product is zero as the number of firms is more than the number of workers. Suppose that  $P=(w_1,f_1,\ldots,w_\ell)$ is an even length path that start from a vertex in $W$. Then, $|W|=|F|+1$. Thus, there is only one firm who is matched to two workers. We guess this firm, say $f_i$, $i\in [\ell-1]$. Then, $f_i$ is matched to its both the neighbors, $w_{i}$ and $w_{i+1}$. Now, consider the paths $P_1=(w_1,\ldots,f_{i-1})$ (if $i\neq 1$) and $P_2=(f_{i+1},\ldots,w_\ell)$ (if $i\neq \ell-1$). Note that we can find the matching in $P_1$ and $P_2$ as discussed earlier. Suppose that $P=(f_1,w_1,\ldots,f_\ell,w_\ell)$ is a cycle. Then, the number of firms is same as the number of workers. Thus, every firm is matched to only one worker. We first guess a matching edge say $f_iw_i$, where $i\in[\ell]$. Then, the problem is reduced to find a matching in the path $P'= P\setminus \{f_i,w_i\}$ of odd length. The correctness follows due to the correctness of finding matching in the path of odd length.
\end{proof}

\begin{proof} [Proof of \Cref{thm:Combined-Restricted-Domains} (2)]
Let $\langle W,F,C, \V \rangle$ be a given matching instance. We first observe that if there exist two firms $f,f' \in F$ such that $N(f)\cap N(f')=\{w,w',w''\}$, i.e., their neighborhood is same, then it is a no-instance of the problem as $f$ and $f'$ both cannot be matched to two workers. Thus, from now onward, we assume that no two firms have the same neighborhood. Furthermore, note that for a yes-instance $|W|=2|F|$. Next, using the following reduction rule, we reduce our instance to another instance of the same problem where two firms can only have one common worker in their neighborhood. 
\begin{reduction rule}\label{rr:degree3cap2}
Let $\langle W,F,C, \V \rangle$ be a given matching instance. Suppose that $f,f'$ be two firms in $F$ such that  $N(f)\cap N(f')=\{w_1,w_2\}$. Let $w_3\in N(f)\setminus N(f')$ and $w_4\in N(f')\setminus N(f)$. Delete $f,f'$ from $F$ and $w_i$, $i\in [4]$, from $W$.
\end{reduction rule}

\begin{lemma}
    Reduction Rule~\ref{rr:degree3cap2} is correct.
\end{lemma}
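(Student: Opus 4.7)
The plan is to show that the stringent degree and capacity constraints force the matching of $f$ and $f'$ to take one of only two forms, and that in either form the six agents $\{f,f',w_1,w_2,w_3,w_4\}$ decouple from the rest of the instance, so the Nash product factorizes and the reduced instance is equivalent to the original.

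I would first pin down the neighborhoods. The premises of the rule give $\{w_1,w_2,w_3\}\subseteq N(f)$ and $\{w_1,w_2,w_4\}\subseteq N(f')$, and since every firm has degree at most $3$, these inclusions are equalities. For any matching $\mu$ with positive Nash welfare, each of $f$ and $f'$ is matched to exactly two workers from its neighborhood, and $w_1,w_2$ can each go to at most one firm. The only two feasible assignments are therefore
\[
\text{(A)}\ \mu(f)=\{w_1,w_3\},\ \mu(f')=\{w_2,w_4\}, \quad \text{(B)}\ \mu(f)=\{w_2,w_3\},\ \mu(f')=\{w_1,w_4\};
\]
in particular $\mu(f)=\{w_1,w_2\}$ is ruled out because it would leave $f'$ with only $w_4$ available. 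In either case every agent of $D\coloneqq\{f,f',w_1,w_2,w_3,w_4\}$ is matched exclusively within $D$: all of $w_1,\dots,w_4$ go to $f$ or $f'$, and $f,f'$ take no workers outside $D$.

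I would then exploit this self-containment to factor the Nash product. Writing $P(\mu)\coloneqq\prod_i u_i(\mu)$, we get
\[
P(\mu)\ =\ \Big(\prod_{i\in D} u_i(\mu)\Big)\cdot\Big(\prod_{i\notin D} u_i(\mu)\Big)\ =\ P_D(\mu)\cdot P_{\overline D}(\mu),
\]
where $P_D(\mu)\in\{P_A,P_B\}$ is fully determined by the chosen option and computable in constant time from the six valuations internal to $D$, and $P_{\overline D}(\mu)$ is exactly the Nash product of the restriction of $\mu$ to the reduced instance $\langle W\setminus\{w_1,\dots,w_4\},\,F\setminus\{f,f'\},\,C,\,\V\rangle$. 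Conversely, any matching of the reduced instance extends to a matching of the original by appending either option. Hence the original has a positive-Nash-welfare matching iff the reduced one does, and a Nash optimal matching of the original is recovered by solving the reduced instance and appending whichever of (A) or (B) yields the larger $P_D$.

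The main thing to get right is the exhaustiveness of the case split for $\mu(f),\mu(f')$; once the neighborhoods are pinned down this is essentially a pigeonhole argument, after which the factorization is immediate from the observation that no worker in $D$ remains available to any firm outside $D$ and that $f,f'$ do not interact with any worker outside $D$.
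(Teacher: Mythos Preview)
Your proposal is correct and follows essentially the same approach as the paper: both arguments pin down $N(f)=\{w_1,w_2,w_3\}$ and $N(f')=\{w_1,w_2,w_4\}$ via the degree bound, rule out $\mu(f)=\{w_1,w_2\}$ by the same pigeonhole observation, deduce that $\{f,f',w_1,w_2,w_3,w_4\}$ is self-contained in any feasible positive-welfare matching, and then factor the Nash product across this block and its complement. The paper phrases the reconstruction step as solving a constant-sized subinstance, which is exactly your ``pick whichever of (A) or (B) maximizes $P_D$.''
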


\begin{proof}
    Let $\langle W',F',C', \V' \rangle$ be the matching instance obtained after applying  Reduction Rule~\ref{rr:degree3cap2}. Let 
    $\mu$ be an optimal matching to $\langle W,F,C, \V \rangle$. Consider a matching $\mu'$, where $\mu'(a)=\mu(a)\setminus (W'\cup F')$, for all $a\in W'\cup F'$. If $\mu(a)\setminus (W'\cup F')$, then $a$ is unmatched in $\mu'$. We show that $\mu'$ is an optimal solution to $\langle W',F',C', \V' \rangle$. Towards this, we first claim that $\mu(f)\cup \mu(f')=\{w_1,w_2,w_3,w_4\}$. If $\mu(f) = \{w_1,w_2\}$, then $\mu(f')=w_4$ as degree of every firm is at most $3$ and $N(f')=\{w_1,w_2,w_4\}$. This contradicts that $\mu$ is a solution as $f'$ is not matched to two workers. Similarly, $\mu(f')\neq \{w_1,w_2\}$. Thus, $w_3\in \mu(f)$, $w_4\in \mu(f')$, $|\mu(f)\cap \{w_1,w_2\}|=1$, and  $|\mu(f')\cap \{w_1,w_2\}|=1$. Thus, $\mu(f)\cup \mu(f')=\{w_1,w_2,w_3,w_4\}$. Hence, for all $f\in F'$, $\mu'(f)=\mu(f)$, and for all $w\in W'$, $\mu'(w)=\mu(w)$. Suppose that there exist a matching $\mu^\star$ in the reduced instance $\langle W',F',C', \V' \rangle$ such that $\W^\texttt{\textup{Nash}}(\mu^\star)> \W^\texttt{\textup{Nash}}(\mu')$. Then, the Nash welfare of the matching $\hat{\mu}$, where $\hat{\mu}(a)=\mu^\star(a)$, for all agents in $F'\cup W'$, and $\hat{\mu}(a)=\mu(a)$, for all agents $a\in\{f,f',w_1,w_2,w_3,w_4\}$. Clearly, $\W^\texttt{\textup{Nash}}(\hat{\mu})> \W^\texttt{\textup{Nash}}(\mu)$, a contradiction.
    
    In the other direction, let $\mu$ be an optimal matching to $\langle W',F',C', \V' \rangle$. Consider the following constant-sized instance: $W''=\{w_1,w_2,w_3,w_4\}, F''=\{f,f'\}, c''_f=c_f, c''_{f'}=c_{f'}$, and for every agent $a\in W''\cup F''$, $v''_{a}=v_a$. Clearly, we can find an optimal matching $\mu''$ to $\langle W'',F'',C'',V''\rangle$ in polynomial time. We claim that $\mu'=\mu \cup \mu''$ is an optimal matching to $\langle W,F,C,V\rangle$. Suppose that $\mu^\star$ is an optimal matching to $\langle W,F,C,V\rangle$ and  $\W^\texttt{\textup{Nash}}(\mu^\star)> \W^\texttt{\textup{Nash}}(\mu')$. As argued above, $\mu^\star(f)\cup \mu^\star(f')=\{w_1,w_2,w_3,w_4\}$. Let $\hat{\mu}$ be a matching for $\langle W',F',C', \V' \rangle$, where $\hat{\mu}(a)=\mu^\star(a)$ for all $a\in W'\cup F'$. Clearly, $\W^\texttt{\textup{Nash}}(\hat{\mu})> \W^\texttt{\textup{Nash}}(\mu)$ as $\mu''$ is an optimal solution to $\langle W'',F'',C'',V''\rangle$, a contradiction.
\end{proof}

After the exhaustive application of Reduction Rule~\ref{rr:degree3cap2}, we know that a pair of firms can have only one worker in their common neighborhood. In other words, a pair of worker has only one common firm. Recall that we need to assign two workers to a firm. Thus, if we decide the pair of workers that need to assigned together to a firm, we obtain the desired matching.  
Towards this, we reduced the problem to the \WeightedMatching\ problem, in which given an edge-weighted graph $G$, the goal is to find a perfect matching of maximum weight\footnote{Here, we mean one-to-one matching}. Let $\langle W,F,C,V\rangle$ be the reduced instance of the problem. We create an instance of the \WeightedMatching\ problem as follows: $V(G)=W$, for every pair of workers $\{w,w'\}$, if they have a common firm in their neighborhood, then add an edge $ww'$ in $G$ and the weight of this edge is $\frac{1}{n+m}\log(\W_f(\{w,w'\}))$. Next, we find a perfect matching of maximum weight in $G$ in the polynomial time using the algorithm in~\citep{DBLP:conf/soda/Gabow90}. The correctness follows due to the following lemma. 

\begin{lemma}
    If there is no perfect matching in $G$, then $\langle W,F,C,V\rangle$ is a no-instance\footnote{Despite of the fact that given instance has nonzero Nash welfare, it is possible that there is no solution that matches every firm with two workers and has nonzero Nash welfare.} of Nash social welfare. Otherwise, the maximum Nash social welfare of the instance is same as the maximum weight of a perfect matching in $G$.
\end{lemma}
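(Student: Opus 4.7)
The plan is to establish a weight-preserving bijection between feasible many-to-one matchings in the reduced instance $\langle W,F,C,\V \rangle$ (those in which every firm is matched to exactly two workers) and perfect matchings in the auxiliary graph $G$. Once this bijection is set up, both claims of the lemma follow: nonexistence of a perfect matching in $G$ rules out any feasible assignment, and the weight function is calibrated precisely so that the logarithm of the Nash welfare equals the weight of the corresponding matching, so maximizing one is the same as maximizing the other.

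In the forward direction, given any feasible matching $\mu$ with $|\mu(f)|=2$ for every $f\in F$, I would define $M_\mu \coloneqq \{ww' : \{w,w'\}=\mu(f) \text{ for some } f\in F\}$. Since $w$ and $w'$ share the firm $f$ as a common neighbor, the edge $ww'$ exists in $G$. Because each worker is matched to at most one firm and $|W|=2|F|$, these edges cover each vertex of $G$ exactly once, so $M_\mu$ is a perfect matching. In the reverse direction, given a perfect matching $M$ of $G$, I use the crucial property guaranteed by Reduction Rule~\ref{rr:degree3cap2}: any two workers share \emph{at most one} common firm. Thus for every edge $ww'\in M$ there is a uniquely determined firm $f(ww')$, and I set $\mu_M(f(ww'))\coloneqq \{w,w'\}$.

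The main obstacle is showing that $\mu_M$ is a valid feasible matching, i.e.\ that no firm $f$ is assigned to two distinct edges of $M$. Suppose for contradiction that two edges $w_1w_2, w_3w_4 \in M$ both map to the same firm $f$. Then $\{w_1,w_2,w_3,w_4\}\subseteq N(f)$, but every firm has degree at most three, forcing $|\{w_1,w_2,w_3,w_4\}|\leq 3$, which contradicts the fact that $M$ is a matching (so its edges are vertex-disjoint). Hence each firm is used at most once, and since $|M|=|W|/2=|F|$, every firm is used exactly once, so $\mu_M$ is feasible. The two constructions $\mu\mapsto M_\mu$ and $M\mapsto \mu_M$ are clearly inverse, establishing the bijection.

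Finally, I would verify the weight correspondence. For a perfect matching $M$ with corresponding $\mu_M$,
\[
\mathrm{weight}(M) \;=\; \sum_{ww'\in M} \frac{1}{n+m}\log\bigl(\W_{f(ww')}(\{w,w'\})\bigr) \;=\; \frac{1}{n+m}\log\prod_{f\in F} \W_f(\mu_M(f)).
\]
Expanding $\W_f(\mu_M(f)) = v_f(\mu_M(f))\cdot\prod_{w\in\mu_M(f)} v_w(f)$ and grouping terms over firms and workers gives $\prod_f \W_f(\mu_M(f)) = \prod_{i\in W\cup F} u_i(\mu_M)$, so $\mathrm{weight}(M) = \log \W^{\texttt{\textup{Nash}}}(\mu_M)$. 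Since the logarithm is strictly monotone, the maximum-weight perfect matching in $G$ corresponds to a Nash optimal feasible matching, and if $G$ admits no perfect matching then no feasible matching exists and the instance is a no-instance. This completes the argument.
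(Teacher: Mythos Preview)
Your proof is correct and follows essentially the same approach as the paper: construct a correspondence between feasible assignments (each firm receiving exactly two workers) and perfect matchings in $G$, and observe that the edge weights are calibrated so that the total weight equals $\log \W^{\texttt{\textup{Nash}}}$ of the corresponding assignment. Your argument is in fact slightly more careful than the paper's in one respect---you explicitly verify, via the degree-$3$ bound on firms, that distinct edges of a perfect matching in $G$ cannot be mapped to the same firm, a feasibility check the paper leaves implicit.
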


\begin{proof}
    We first prove the first claim. Suppose that $\mu$ is a solution to $\langle W,F,C,V\rangle$. Then, every firm $f\in F$ is matched to two workers in $\mu$. Let $\mu'=\{ww' \colon \{w,w'\} = \mu(f), f\in F\}$. Clearly, $\mu'$ is a perfect matching in $G$ as every worker is matched in $\mu$ and $|\mu(f)|=2$ for all $f\in F$. 

    We next prove the second claim. Let $\mu$ be a maximum weight perfect matching of $G$. Let $ww'$ be an edge in $\mu$. Then, we know that there exists a unique firm $f\in F$ that belongs to the common neighborhood of $w$ and $w'$. We create a matching $\mu'$ by assigning $w,w'$ to $f$. We claim that $\mu'$ is an optimal solution to $\langle W,F,C,V\rangle$. Suppose that there exists a matching $\mu^\star$ such that $\W^\texttt{\textup{Nash}}(\mu^\star)> \W^\texttt{\textup{Nash}}(\mu')$. Then, we create a perfect matching $\hat{\mu}$ for $G$ as described in the first claim. Since $(\prod_{f\in F}\W_f(\mu^\star(f)))^{\frac{1}{n+m}}>(\prod_{f\in F}\W_f(\mu^\star(f)))^{\frac{1}{n+m}}$, it follows that the weight of $\hat{\mu}$ is more than the weight of $\mu$, a contradiction.
\end{proof}
This completes the proof. 
\end{proof}

\begin{proof} [Proof of \Cref{thm:Combined-Restricted-Domains} (3)]
Since every worker values only one firm positively, it can only be assigned to this firm. Thus, we obtain optimal matching in the polynomial time.
\end{proof}

\begin{proof} [Proof of \Cref{thm:Combined-Restricted-Domains} (4)] 
The idea is similar to the bucketing algorithm for \Cref{thm:QPTAS-Constant-Firms}. Here, for every firm, we have bucket for each valuation, instead of $(1+\eps)$-sized bucket. Since the number of distinct valuations and the firms is constant, the total number of buckets is constant. Hence, the total number of guesses is constant and the algorithms runs in the polynomial time. The running time of the same algorithm is quasipolynomial when the number of distinct valuations is logarithmically bounded in the input size as now we have $m^{\OO(\log(n+m))}$ many guesses.
\end{proof}

\end{document}